\documentclass{article}

\RequirePackage{algorithm}
\RequirePackage{algpseudocode}

\usepackage{graphicx}

\usepackage{amsmath}
\usepackage{amssymb}
\usepackage{mathtools}
\usepackage{amsthm}

\usepackage[capitalize,noabbrev]{cleveref}

\usepackage{dsfont}
\usepackage{tikz}
\usetikzlibrary{spy}
\usepackage[titletoc,toc,page]{appendix}
\usepackage{pgfplots, pgfplotstable}
\pgfplotsset{compat=1.18}
\usepackage{relsize}
\usepackage{subcaption}
\usepackage{stmaryrd}

\PassOptionsToPackage{numbers, compress}{natbib}

\newcommand*\dd{\mathop{}\!\mathrm{d}}
\newcommand{\bbE}[0]{\mathbb{E}}
\newcommand{\bbR}[0]{\mathbb{R}}
\newcommand{\OO}{\mathcal{O}}  %
\newcommand{\R}{\mathbb{R}}  %
\newcommand{\setX}{\mathcal{X}}

\DeclarePairedDelimiter{\abs}{\lvert}{\rvert}%
\DeclarePairedDelimiterX{\norm}[1]{\lVert}{\rVert}{#1}

\DeclareMathOperator{\vect}{vec}
\DeclareMathOperator{\unvec}{unvec}
\DeclareMathOperator{\diag}{diag}
\DeclareMathOperator{\KL}{KL}
\DeclareMathOperator{\uKL}{uKL}
\DeclareMathOperator{\argmin}{argmin}

\theoremstyle{plain}
\newtheorem{theorem}{Theorem}[section]
\newtheorem{proposition}[theorem]{Proposition}
\newtheorem{lemma}[theorem]{Lemma}

\theoremstyle{definition}

\newtheorem{assumption}[theorem]{Assumption}
\theoremstyle{remark}
\newtheorem{remark}[theorem]{Remark}

\usepackage[final]{neurips_2025}

\usepackage[utf8]{inputenc} %
\usepackage[T1]{fontenc}    %
\usepackage{url}            %
\usepackage{amsfonts}       %
\usepackage{nicefrac}       %
\usepackage{xcolor}         %

\title{Least squares variational inference}

\author{%
    Yvann Le Fay\textsuperscript{\rm 1}, ~Nicolas Chopin\textsuperscript{\rm 1,\rm} \thanks{Corresponding author}, ~Simon Barthelmé\textsuperscript{\rm 2} \\
    \textsuperscript{1} ENSAE, CREST, IP Paris \\
    \textsuperscript{2} GIPSA-Lab, CNRS \\
    \texttt{\{yvann.lefay,nicolas.chopin\}@ensae.fr} \\
    \texttt{simon.barthelme@gipsa-lab.fr}
}

\begin{document}

    \maketitle

    \begin{abstract}
        Variational inference seeks the best approximation of a target distribution within a chosen family, where "best" means minimising Kullback-Leibler divergence.
        When the approximation family is exponential, the optimal approximation satisfies a fixed-point equation.
        We introduce LSVI (Least Squares Variational Inference), a gradient-free, Monte Carlo-based scheme for the fixed-point recursion, where each iteration boils down to performing ordinary least squares regression on tempered log-target evaluations under the variational approximation.
        We show that LSVI is equivalent to biased stochastic natural gradient descent and use this to derive convergence rates with respect to the numbers of samples and iterations.
        When the approximation family is Gaussian, LSVI involves inverting the Fisher information matrix, whose size grows quadratically with dimension $d$.
        We exploit the regression formulation to eliminate the need for this inversion, yielding $\OO(d^3)$ complexity in the full-covariance case and $\OO(d)$ in the mean-field case.
        Finally, we numerically demonstrate LSVI’s performance on various tasks,
        including logistic regression, discrete variable selection, and Bayesian
        synthetic likelihood, showing results competitive with state-of-the-art methods, even when gradients are unavailable.
    \end{abstract}

    \addtocontents{toc}{\protect\setcounter{tocdepth}{0}}

    \section{Introduction}\label{sec:Motivation}

    This paper focuses on parametric variational inference (VI,~\citep{Jordan1999,variational_review,8588399}).
    Given an (unnormalised) target density $\pi$, we aim at finding the distribution that
    minimises the (reverse) Kullback-Leibler divergence:
    \begin{equation}
        \label{eq:KLmin} \arg\min_{q \in \mathcal{Q}} \KL\left(q \mid \bar{\pi}
        \right)   \coloneq \int q \log\left( q / \bar{\pi}\right)
    \end{equation}
    where $\mathcal{Q}$ is a user-chosen parametric family (e.g., Gaussians), and
    $\bar{\pi}=\pi/\int \pi$. This approach has become a de facto standard in
    probabilistic machine learning in recent years and is implemented in various
    software packages, such as STAN, NumPyro, PyMC3, and Blackjax~\citep{JSSv076i01,phan2019composableeffectsflexibleaccelerated,
        Salvatier2016,cabezas2024blackjaxcomposablebayesianinference}. The minimisation is
    typically carried out through gradient-based procedures using automatic differentiation,
    either stochastic gradient descent (SGD,~\citep[][]{pmlr-v33-ranganath14})---often
    applied after reparameterising the target distribution~\citep{advi,10.5555/3044805.3045112}---or its faster alternative
    natural gradient descent (NGD,~\citep[][]{Amari1998NaturalGW, pmlr-v54-khan17a, 10.23919/ISITA.2018.8664326, JMLR:v24:22-0291}).
    This is convenient for users, as they only have to provide the function $f\coloneq \log\pi$ to the software.

    These procedures use different gradient estimators;
    some require $\log \pi$ to be amenable to automatic differentiation, which is the case
    when using a reparameterisation, while others only require gradient estimators of
    expectations under the variational distribution via
    the log-derivative trick~\citep{Williams1992}.
    The gradient estimator for expectations usually suffers from
    high variance, and practical implementations rely on the
    reparameterisation trick, which is not possible in several important cases,
    for instance when $\pi$ is a discrete distribution, or when $\pi$
    is intractable or non-differentiable (as in likelihood-free inference).
    Additionally, convergence of SGD is sometimes slow and/or tedious to assess,
    and requires careful step sizes tuning~\citep{WelandaweetalFramework} while
    a naive implementation of NGD requires costly matrix inversions.

    \subsection{Outline and contributions}
    We introduce practical algorithms for VI within exponential families when gradients of $\log \pi$
    are unavailable. These algorithms involve taking biased stochastic
    gradient descent steps, but we show both theoretical convergence and good
    performance in non-toy problems. In Section~\ref{sec:lsvi}, we derive an exact,
    but intractable, iteration we call LSVI, that boils down to performing successive
    least squares (OLS) regression.
    We highlight connections to NGD and discuss its convergence properties.
    In Section~\ref{sec:stochastic}, we
    introduce a stochastic variant that updates the OLS estimate using multiple
    draws from the current approximation.
    Importantly, under standard smoothness
    and relative convexity assumptions on the objective, and bounded-moment
    assumptions on the variational family, we establish convergence guarantees
    and rates with respect to the numbers of draws and iterations, conditioned
    on high-probability events. In addition, we provide an adaptive method to
    calibrate step sizes by controlling the linear regression residuals.
    Section~\ref{sec:gaussian} focuses on the Gaussian variational family; we
    propose a reparametrisation of the linear regression such that the
    OLS procedure requires no inversion of the Fisher
    information matrix (FIM).
    These schemes tailored to Gaussian distributions are cost-efficient: our
    methods scale linearly with $d$ in the mean-field case, and in the
    full-covariance case, the cost matches the cost of computing $d\times d$
    matrix products, i.e., $\OO(d^3)$. In
    Section~\ref{sec:experiments}, we extensively illustrate the performance of
    our methods compared to other inference procedures, including
    gradient-based VI and exact Bayesian inference procedures. Limitations are
    discussed in Section~\ref{sec:limitations}. We provide a Python package
    supporting GPU parallelisation via JAX to replicate the experiments:
    \url{https://github.com/ylefay/LSVI}.

    \section{Exact LSVI}\label{sec:Fixed point iteration}
    \label{sec:lsvi}

    Let $\pi:\setX \to \R$ be some unnormalised target density, with $\setX \subset \R^d$.
    It will be convenient to work with an exponential family
    $\mathcal{Q}$ of \emph{unnormalised} densities:
    \[ q_\eta(x) \coloneq \exp\{ \eta^\top s(x)\},
    \qquad \eta \in \mathcal{V} \coloneq \{\eta: Z(q_\eta) < \infty \}
    \]
    where $\eta\in V$ is the natural parameter associated to $q_\eta\in \mathcal{Q}$,
    $Z(q) \coloneq \int_{\setX} q$ denotes the partition function,
    and $s:\setX\to \bbR^m$ the extended statistic function defined as
    \[
        s(x) = \begin{pmatrix}
                   1 \\
                   \bar{s}(x)
        \end{pmatrix},
        \quad \bar{s}:\setX \to \bbR^{m-1}.
    \]
    In words, we include an intercept in $s$ to make the family closed under multiplication
    by a positive scalar.
    For $\eta=\left(\eta^{(0)}, \bar{\eta}^\top\right)^\top$,
    where $\eta^{(0)}$ denotes the first component of $\eta$, let $\bar{q}_{\bar{\eta}}$ be
    the normalised version of $q_\eta$ (which therefore depends only on
    $\bar{\eta}$): $\bar{q}_{\bar{\eta}} = q_\eta / Z_\eta$, using the short-hand
    $Z_\eta \coloneq Z(q_\eta)$. Notation $\bbE_\eta[\cdot]$ means a properly normalised expectation, i.e.
    $\bbE_\eta[h] = \int_{\setX} q_\eta h / Z_{\eta}$.
    Likewise, we replace the standard Kullback-Leibler objective with a divergence for
    unnormalised densities~\citep{Minka2005DivergenceMA}, which is defined by
    \begin{equation}
        \label{eq:def_ukl}
        \uKL(q\mid \pi ) \coloneq
        \int q\log\left(\frac{q}{\pi}\right) + Z(\pi) - Z(q),
    \end{equation}
    for any density $q$ absolutely continuous with respect to $\pi$.
    In addition, we assume the variational family $\mathcal{Q}$ is
    minimal and regular, which is a standard assumption in VI~\citep{pmlr-v54-khan17a,pmlr-v80-khan18a,pmlr-v97-lin19b,wu2024understandingstochasticnaturalgradient},
    and is met by any standard exponential families (e.g., Gaussian, Beta, Poisson, Bernoulli, etc.,~\citep[][Table 3.1]{wainwright2008graphical}). These assumptions
    ensure $\eta\in \mathcal{V}\mapsto q_{\eta}$ is injective and the log-partition function is differentiable everywhere~\citep[][Prop. 3.1]{wainwright2008graphical}.

    \begin{assumption}[Minimality and regularity of $\mathcal{Q}$]
        \label{ass:minimality_and_regularity}
        The components of $s$ are linearly independent (minimality), and the set of natural parameters $\mathcal{V}$ is open (regularity).
    \end{assumption}

    The next proposition shows that the critical points of the uKL
    divergence are also critical points of the KL divergence, and vice versa.
    In words, nothing is lost by considering the uKL instead of KL.

    \begin{proposition}
        \label{prop:min_ukl_kl}
        Let $\eta = (\eta^{(0)}, \bar{\eta}^\top)^\top \in \mathcal{V}$, if $\nabla_{\eta} \uKL(q_{\eta}\mid \pi)=0$ then $\nabla_{\bar{\eta}} \KL(\bar{q}_{\bar{\eta}}\mid \bar{\pi}) = 0$,
        and the reciprocal holds:  $\nabla_{\bar{\eta}} \KL(\bar{q}_{\bar{\eta}}\mid \bar{\pi}) = 0$ and $\partial_{\eta^{(0)}}\uKL(q_{\eta}\mid \pi) = 0$, then $\nabla_{\eta} \uKL(q_{\eta}\mid \pi) = 0$.
    \end{proposition}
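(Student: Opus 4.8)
The plan is to derive closed forms for both gradients and observe that they are built from the same function $x\mapsto\log q_\eta(x)-\log\pi(x)$; passing between the normalised and unnormalised settings only changes that function by an additive, $x$-independent constant, and the intercept component of $s$ is exactly what allows that constant to be set to zero. This is why the reverse implication needs the extra scalar condition $\partial_{\eta^{(0)}}\uKL(q_\eta\mid\pi)=0$.

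First I would compute $\nabla_\eta\uKL(q_\eta\mid\pi)$ by differentiating $\int q_\eta\log q_\eta-\int q_\eta\log\pi+Z(\pi)-Z_\eta$ under the integral sign. Since $\partial_{\eta_i}q_\eta=s_i q_\eta$ and $\partial_{\eta_i}\log q_\eta=s_i$, the first term contributes $\int s_i q_\eta(1+\log q_\eta)$, the second $\int s_i q_\eta\log\pi$, and $-\partial_{\eta_i}Z_\eta=-\int s_i q_\eta$; the two $\int s_i q_\eta$ pieces cancel, leaving
\[
\nabla_\eta\uKL(q_\eta\mid\pi)=\int_{\setX}q_\eta(x)\,s(x)\,\bigl(\log q_\eta(x)-\log\pi(x)\bigr)\,\dd x .
\]
Next I would recall (or rederive from the log-derivative trick together with the standard identities $\nabla_{\bar\eta}\log Z_\eta=\bbE_\eta[\bar s]$ and $\nabla^2_{\bar\eta}\log Z_\eta=\mathrm{Cov}_\eta[\bar s]$, valid by regularity) the classical formula
\[
\nabla_{\bar\eta}\KL(\bar q_{\bar\eta}\mid\bar\pi)=\bbE_\eta\bigl[(\bar s-\bbE_\eta[\bar s])\,(\log\bar q_{\bar\eta}-\log\bar\pi)\bigr].
\]

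Then I would link the two. Writing $r:=\log\bar q_{\bar\eta}-\log\bar\pi$ and $c:=\log Z_\eta-\log Z(\pi)$, which is constant in $x$ because $q_\eta=Z_\eta\bar q_{\bar\eta}$ and $\pi=Z(\pi)\bar\pi$, we have $\log q_\eta-\log\pi=r+c$. Substituting into the uKL gradient and splitting $s=(1,\bar s^\top)^\top$ gives
\[
\nabla_\eta\uKL(q_\eta\mid\pi)=Z_\eta\begin{pmatrix}\bbE_\eta[r]+c\\ \bbE_\eta[\bar s\,r]+c\,\bbE_\eta[\bar s]\end{pmatrix},
\]
so the intercept component equals $Z_\eta(\bbE_\eta[r]+c)$, the remaining block equals $Z_\eta(\bbE_\eta[\bar s\,r]+c\,\bbE_\eta[\bar s])$, while $\nabla_{\bar\eta}\KL=\bbE_\eta[\bar s\,r]-\bbE_\eta[\bar s]\,\bbE_\eta[r]$. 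If $\nabla_\eta\uKL=0$, the intercept equation gives $c=-\bbE_\eta[r]$, and inserting this into the remaining block yields $\bbE_\eta[\bar s\,r]-\bbE_\eta[\bar s]\,\bbE_\eta[r]=0$, i.e. $\nabla_{\bar\eta}\KL=0$. Conversely, $\partial_{\eta^{(0)}}\uKL=0$ is precisely the intercept equation, again forcing $c=-\bbE_\eta[r]$; together with $\nabla_{\bar\eta}\KL=0$ this makes the remaining block vanish too, so $\nabla_\eta\uKL=0$.

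I expect the only genuine obstacle to be rigour in the interchange of differentiation and integration used for $\nabla_\eta\uKL$: this is where Assumption~\ref{ass:minimality_and_regularity} (openness of $\mathcal V$) enters, through a dominated-convergence argument uniform on a neighbourhood of $\eta$, using that $q_\eta$ and $q_\eta\abs{s}$ (and $q_\eta\abs{s}\abs{\log\pi}$) are locally integrable on $\mathcal V$. Everything after that step is the bookkeeping that isolates the intercept coordinate.
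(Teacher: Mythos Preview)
Your proof is correct and follows essentially the same route as the paper: compute $\nabla_\eta\uKL$ and $\nabla_{\bar\eta}\KL$ explicitly, observe that the $\bar s$-block of the former differs from the latter only through terms involving an $x$-independent constant, and use the intercept equation to fix that constant. Your packaging via $r=\log\bar q_{\bar\eta}-\log\bar\pi$ and $c=\log Z_\eta-\log Z(\pi)$ is a clean shorthand for what the paper does by expanding $\eta^\top s=\eta^{(0)}+\bar\eta^\top\bar s$ and $f$ coordinate by coordinate; the algebra is the same.
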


    The first-order condition of the uKL minimisation problem is given by the following proposition.
    \begin{proposition}
        \label{prop:first_order_optimality}
        Let $f = \log \pi$ be the (unnormalised) log target density.
        Let $\eta = (\eta^{(0)}, \bar{\eta}^{\top})^{\top}\in \mathcal{V}$, $\nabla_{\eta} \uKL(q_{\eta}\mid \pi) = 0$ if and only if $\left\{\bbE_{\eta}[ss^{\top}]\right\} \eta = \bbE_{\eta}[fs]$.
        Furthermore, if $\nabla_{\eta}\uKL(q_{\eta}\mid \pi) = 0$, then $\eta^{(0)} = -\KL(\bar{q}_{\bar{\eta}}\mid \bar{\pi})
        + \log \left(Z(\pi) / \int_{\setX} \exp\left( \bar{\eta}^\top
                                                   \bar{s}\right)\right).$
    \end{proposition}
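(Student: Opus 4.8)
The plan is to obtain both assertions by computing $\nabla_{\eta}\uKL(q_\eta\mid\pi)$ in closed form. Substituting $q_\eta=\exp(\eta^\top s)$ and $\log\pi=f$ into the definition \eqref{eq:def_ukl} gives
\[
\uKL(q_\eta\mid\pi)=\int_{\setX}\exp(\eta^\top s)\,(\eta^\top s-f)+Z(\pi)-\int_{\setX}\exp(\eta^\top s).
\]
Differentiating under the integral sign, the product rule applied to the first term produces $\exp(\eta^\top s)\,s\,(\eta^\top s-f+1)$, and the derivative of $-Z(q_\eta)$ contributes $-\exp(\eta^\top s)\,s$; the constants $+1$ and $-1$ cancel, leaving
\[
\nabla_{\eta}\uKL(q_\eta\mid\pi)=\int_{\setX}\exp(\eta^\top s)\,s\,(\eta^\top s-f)=Z_\eta\bigl(\bbE_\eta[ss^\top]\eta-\bbE_\eta[fs]\bigr).
\]
Since $Z_\eta>0$, the gradient vanishes if and only if $\{\bbE_\eta[ss^\top]\}\eta=\bbE_\eta[fs]$, which is the first claim.

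For the second claim I would read off the first coordinate of this fixed-point equation. Because $s=(1,\bar{s}^\top)^\top$, the first row of $\bbE_\eta[ss^\top]\eta=\bbE_\eta[fs]$ is $\bbE_\eta[\eta^\top s]=\bbE_\eta[f]$, i.e. $\bbE_\eta[\log q_\eta-f]=0$. Then I would expand the normalised KL using $\log\bar{q}_{\bar{\eta}}=\eta^\top s-\log Z_\eta$ and $\log\bar{\pi}=f-\log Z(\pi)$, which gives
\[
\KL(\bar{q}_{\bar{\eta}}\mid\bar{\pi})=\bbE_\eta[\eta^\top s-f]-\log Z_\eta+\log Z(\pi).
\]
At a critical point the first term is $0$, so $\KL(\bar{q}_{\bar{\eta}}\mid\bar{\pi})=\log\bigl(Z(\pi)/Z_\eta\bigr)$. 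Finally I would factor the partition function, $Z_\eta=\int_{\setX}\exp(\eta^{(0)}+\bar{\eta}^\top\bar{s})=e^{\eta^{(0)}}\int_{\setX}\exp(\bar{\eta}^\top\bar{s})$, hence $\log Z_\eta=\eta^{(0)}+\log\int_{\setX}\exp(\bar{\eta}^\top\bar{s})$, and solving for $\eta^{(0)}$ yields the stated identity.

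The algebra here is entirely routine; the only steps that require care are the justification for differentiating under the integral sign and the finiteness of $\bbE_\eta[ss^\top]$ and $\bbE_\eta[fs]$. I would handle these by invoking the openness of $\mathcal{V}$ from \cref{ass:minimality_and_regularity} (so that, locally in $\eta$, the density $q_\eta$ has exponential decay dominating any polynomial in $s$, giving a local dominating function) together with the implicit integrability of $f$ against $q_\eta$ that is needed for the objective itself to be well defined. I expect this integrability bookkeeping, rather than the computation, to be the main — and still minor — obstacle.
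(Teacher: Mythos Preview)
Your proposal is correct and follows essentially the same route as the paper: compute $\nabla_\eta\uKL$ by differentiating under the integral, divide by $Z_\eta$ for the first claim, and then isolate the first coordinate of the optimality condition to extract $\eta^{(0)}$ via the identity $\KL(\bar q_{\bar\eta}\mid\bar\pi)=\bbE_\eta[\eta^\top s-f]-\log Z_\eta+\log Z(\pi)$. Your treatment is in fact slightly more explicit than the paper's in spelling out the factorisation $Z_\eta=e^{\eta^{(0)}}\int\exp(\bar\eta^\top\bar s)$ and the cancellation in the gradient computation.
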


    \subsection{The exact LSVI scheme}
    \label{subsec:exact_lsvi}
    The first-order optimality condition is equivalent to the fixed point equation: $\eta
    = \phi(\eta)$ with
    \begin{equation}
        \label{eq:def_phi}
        \phi(\eta) \coloneq F_{\eta}^{-1} z_\eta,
        \quad F_\eta \coloneq \bbE_\eta[s s^\top],
        \quad z_\eta \coloneq \bbE_\eta[f s],
    \end{equation}
    and $F_{\eta}$ is the Fisher information matrix (FIM) associated to $q_{\eta}$.
    \citet{10.1214/13-BA858} remark that $\phi$ is the ordinary least squares regressor~(OLS; ~\citep{Penrose_1956}) of $f(X)$ with respect to $s(X)$ when $X\sim q_{\eta}$:
    \begin{equation}
        \label{eq:LSVI_as_OLS}
        \phi(\eta) = \argmin_{\beta\in \bbR^m}
        \bbE_{\eta } \left[\left\{\beta^{\top}s(X) - f(X)\right\}^2\right].
    \end{equation}

    A nice property of $\phi$ when $\pi$ is in the variational family with  $\pi=q_{\eta^{\star}}$,
    is that for any $\eta\in \mathcal{V}$, $\phi(\eta) = \eta^{\star}$, i.e., $\phi$ exactly recovers $\pi$.
    However, in general $\phi(\eta)$ may not be in $\mathcal{V}$, and naively performing a fixed-point
    scheme can lead to unstable variational approximations or, worse, result in non-normalisable densities (i.e.,
    $\phi(\eta)\notin \mathcal{V}$). To address this, we consider a relaxation of the fixed-point scheme obtained
    via a momentum fixed-point iteration~\citep{Bauschke20172}:
    \begin{equation}
        \label{eq:krasnoselskii-mann} \eta_{t+1} \coloneq \varepsilon_t \phi(\eta_t)
        + (1-\varepsilon_t) \eta_t, \qquad t\geq 0
    \end{equation}
    where $\varepsilon_t>0$ is such that $\eta_{t+1}$ is in $\mathcal{V}$. Such
    an $\epsilon_t$ necessarily exists because $\mathcal{V}$ is open (Assumption~\ref{ass:minimality_and_regularity}).
    Since iteration~\eqref{eq:krasnoselskii-mann} assumes that one has
    access to expectations under the variational family (which in general is not
    the case), we refer to~\eqref{eq:krasnoselskii-mann} as the \emph{exact}
    Least Squares Variational Inference (LSVI) iteration.
    This relaxation has a natural interpretation in this specific
    context: $\eta_{t+1}$ in~\eqref{eq:krasnoselskii-mann} is the solution of the
    least squares objective~\eqref{eq:LSVI_as_OLS}
    when $\pi=\exp f$ is replaced by the tempered (annealed) density
    $q_{\eta_t}^{1-\varepsilon_t}\pi^{\varepsilon_t}$.

    \subsection{LSVI as natural gradient descent (NGD) and mirror descent (MD)}
    \label{subsec:ngd_md}
    This subsection summarises a well-established connection between NGD and MD
    in the variational inference literature~\citep{pmlr-v54-khan17a,
        pmlr-v80-khan18a, wu2024understandingstochasticnaturalgradient} but generalised to the \emph{unnormalised} KL divergence.

    Let us define the (unnormalised) moment parameter mapping $\omega:\eta\in\mathcal{V}\mapsto \nabla_\eta Z_\eta = \int s(x) q_{\eta}(x)$,
    and let $\mathcal{W}=\omega(\mathcal{V})$ be the set of moment parameters.
    We denote by $\eta :
    \mathcal{W}\to\mathcal{V}$ the inverse mapping of $\omega :\mathcal{W}\to\mathcal{V}$, whose existence is guaranteed under Assumption~\ref{ass:minimality_and_regularity}~\citep[][Ch. 3]{wainwright2008graphical}.
    Define $l$ as the unnormalised KL divergence~\eqref{eq:def_ukl}. When expressed in
    natural parameters, we write $l:\eta\in\mathcal{V}\mapsto \uKL(q_\eta\mid \pi)$,
    when expressed in moment parameters, we write $l:\omega\in\mathcal{W}\mapsto
    \uKL(q_\omega\mid\pi)$, and similarly for expectations: $\bbE_\omega \coloneq \bbE_{\eta(\omega)}$.

    The following proposition states that LSVI
    iteration~\eqref{eq:krasnoselskii-mann} is a NGD
    iteration on the uKL divergence in the natural space of parameters, and
    equivalently a MD in the moment space~\citep[][Ch. 3]{NemirovskyYudin1983,doi:10.1137/16M1099546}.

    \begin{proposition}[LSVI is NGD which is equivalent to MD,\protect{~\cite[Lemma 1]{wu2024understandingstochasticnaturalgradient}}]
        \label{prop:gradient_descent}
        Under Assumption~\ref{ass:minimality_and_regularity} and
        provided the sequence $(\eta_t)$ defined by~\eqref{eq:krasnoselskii-mann}
        is in $\mathcal{V}$, $(\eta_t)$ satisfies the dynamic (NGD),
        \begin{equation}
            \label{eq:first_iteration_gd}
            \eta_{t+1} = \eta_{t} - \varepsilon_t F_{\eta_{t}}^{-1} \nabla_{\eta} l(\eta_{t})/Z_{\eta_{t}},
        \end{equation}
        or equivalently,
        \begin{equation}
            \label{eq:second_iteration_gd}
            \eta_{t+1} = \eta_{t} - \varepsilon_t \nabla_{\omega} l(\omega(\eta_{t})).
        \end{equation}
        Furthermore, let $\omega_0 \in \mathcal{W}$ and define for $t\geq 0$ (MD),
        \begin{equation}
            \label{eq:mirror_descent}
            \omega_{t+1} \coloneq \argmin_{\omega\in\mathcal{W}} \left\{  \nabla^{\top}_{\omega}l(\omega_t) \omega + \varepsilon_t^{-1}D_{Z^*}(\omega, \omega_t)\right\},
        \end{equation}
        where $D_{Z^{\star}}$ is the Bregman divergence~\citep{BREGMAN1967200} with respect to $Z^*$ the Legendre transform of $Z$: $Z^*(\omega)=\argmin_{\eta\in\mathcal{V}}\{\eta^{\top}\omega-Z_\eta\}$.
        Then, the sequence $(\eta_t)$ defined by~\eqref{eq:krasnoselskii-mann} with $\eta_0 = \eta(\omega_0)$ satisfies for all $t\geq 0$, $\eta_t = \eta(\omega_t)$.
        In words, LSVI performs a natural gradient step in the space of natural parameters, which corresponds to a mirror descent step in the dual (moment) space.
    \end{proposition}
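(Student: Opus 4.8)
The plan is to compute the Euclidean gradient of $l$ in both the natural and the moment parameterisation and check that each of~\eqref{eq:first_iteration_gd},~\eqref{eq:second_iteration_gd} and~\eqref{eq:mirror_descent} collapses to the momentum fixed-point iteration~\eqref{eq:krasnoselskii-mann}, i.e.\ to $\eta_{t+1} = \varepsilon_t\phi(\eta_t) + (1-\varepsilon_t)\eta_t$. \emph{Gradient in natural parameters.} Expanding~\eqref{eq:def_ukl} and using $\log q_\eta(x) = \eta^\top s(x)$ gives $l(\eta) = \eta^\top\omega(\eta) - \int q_\eta f + Z(\pi) - Z_\eta$. I would differentiate term by term, exploiting the exponential-family identities $\nabla_\eta q_\eta = s\,q_\eta$, hence $\nabla_\eta Z_\eta = \omega(\eta)$, $\nabla_\eta\omega(\eta) = \nabla_\eta^2 Z_\eta = \int ss^\top q_\eta = Z_\eta F_\eta$, and $\nabla_\eta\!\int q_\eta f = \int fs\,q_\eta = Z_\eta z_\eta$. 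In $\nabla_\eta[\eta^\top\omega(\eta)] = \omega(\eta) + Z_\eta F_\eta\,\eta$ the leading $\omega(\eta)$ cancels the $-\nabla_\eta Z_\eta = -\omega(\eta)$ coming from the last term, leaving $\nabla_\eta l(\eta) = Z_\eta(F_\eta\eta - z_\eta)$. Therefore $F_\eta^{-1}\nabla_\eta l(\eta)/Z_\eta = \eta - F_\eta^{-1}z_\eta = \eta - \phi(\eta)$, so~\eqref{eq:first_iteration_gd} becomes $\eta_{t+1} = \eta_t - \varepsilon_t(\eta_t - \phi(\eta_t))$, which is~\eqref{eq:krasnoselskii-mann}. (As a consistency check, $\nabla_\eta l(\eta) = 0 \iff F_\eta\eta = z_\eta$, recovering Proposition~\ref{prop:first_order_optimality}.)

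\emph{Passing to moment parameters.} Minimality (Assumption~\ref{ass:minimality_and_regularity}) makes $\eta\mapsto Z_\eta$ strictly convex on the open set $\mathcal{V}$, so its Legendre transform $Z^*$ is strictly convex and differentiable on $\mathcal{W} = \omega(\mathcal{V})$ with $\nabla Z^* = (\nabla Z)^{-1} = \eta(\cdot)$ and $Z^*(\omega(\eta)) = \eta^\top\omega(\eta) - Z_\eta$. Substituting, $l(\omega) = Z^*(\omega) - \int q_{\eta(\omega)} f + Z(\pi)$, and by the inverse function theorem $\nabla_\omega\eta(\omega) = (\nabla_\eta\omega)^{-1} = (Z_\eta F_\eta)^{-1}$ (both matrices symmetric). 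The chain rule then yields $\nabla_\omega l(\omega) = \eta(\omega) - (Z_\eta F_\eta)^{-1}(Z_\eta z_\eta) = \eta(\omega) - \phi(\eta(\omega))$, equivalently $\nabla_\omega l(\omega(\eta)) = F_\eta^{-1}\nabla_\eta l(\eta)/Z_\eta$. Hence~\eqref{eq:second_iteration_gd} reads $\eta_{t+1} = \eta_t - \varepsilon_t(\eta_t - \phi(\eta_t))$, again~\eqref{eq:krasnoselskii-mann}.

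\emph{Mirror descent.} Strict convexity and differentiability of $Z^*$ make the objective in~\eqref{eq:mirror_descent} strictly convex; writing its stationarity condition with $D_{Z^*}(\omega,\omega_t) = Z^*(\omega) - Z^*(\omega_t) - \nabla Z^*(\omega_t)^\top(\omega - \omega_t)$ gives $\nabla Z^*(\omega_{t+1}) = \nabla Z^*(\omega_t) - \varepsilon_t\nabla_\omega l(\omega_t)$, that is, $\eta(\omega_{t+1}) = \eta(\omega_t) - \varepsilon_t\nabla_\omega l(\omega_t)$. I would finish by induction on $t$: if $\eta_t = \eta(\omega_t)$, the right-hand side equals $\eta_{t+1}$ by the previous step, so $\eta(\omega_{t+1}) = \eta_{t+1} \in \mathcal{V}$; since $\omega\mapsto\eta$ is injective on $\mathcal{W}$ (Assumption~\ref{ass:minimality_and_regularity}), this both identifies $\omega_{t+1}$ as the unique minimiser in $\mathcal{W}$ and gives $\omega_{t+1} = \omega(\eta_{t+1})$, closing the induction with base case $\eta_0 = \eta(\omega_0)$.

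\emph{Main obstacle.} All the differentiation is routine for exponential families; the only step demanding care is the bookkeeping of the partition-function factors $Z_\eta$ that appear \emph{because} we use the unnormalised divergence~\eqref{eq:def_ukl} rather than the usual KL — in particular $\nabla_\eta^2 Z_\eta = Z_\eta F_\eta$ rather than $F_\eta$ — together with checking that these factors cancel so that the $Z_\eta$-free map $\phi$ reappears. A secondary technical point is the appeal to convex duality ($\nabla Z^* = \eta(\cdot)$, strict convexity of $Z$ and $Z^*$ on their open domains), which is exactly where Assumption~\ref{ass:minimality_and_regularity} enters.
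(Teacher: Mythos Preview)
Your proposal is correct and follows essentially the same approach as the paper's proof: compute $\nabla_\eta l(\eta) = Z_\eta(F_\eta\eta - z_\eta)$, convert to $\nabla_\omega l$ via the chain rule / inverse-function identity $\nabla_\eta\omega = Z_\eta F_\eta$, and then read off the mirror-descent equivalence from the first-order condition $\nabla Z^*(\omega_{t+1}) = \nabla Z^*(\omega_t) - \varepsilon_t\nabla_\omega l(\omega_t)$ together with $\nabla Z^* = \eta(\cdot)$. Your write-up is if anything slightly more careful than the paper's on two secondary points --- the strict convexity of $Z^*$ guaranteeing a unique minimiser in~\eqref{eq:mirror_descent}, and the explicit induction closing $\eta_t = \eta(\omega_t)$ --- but there is no substantive difference in the argument.
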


    Proposition~\ref{prop:gradient_descent} allows us to leverage known convergence results for
    MD under standard smoothness and convexity assumptions on the uKL objective~\citep[in the VI literature, see,][]{pmlr-v54-khan17a, 10.5555/3020948.3020982, NIPS2015_3214a6d8}.
    \begin{assumption}
        \label{ass:convexsmooth}
        The $\uKL$ objective $l:\omega\in \mathcal{W}\mapsto \uKL(q_{\omega}\mid \pi)$ is $L$-smooth, $\mu$-strongly convex relative to $D_{Z^*}$.
    \end{assumption}

    Under Assumption~\ref{ass:convexsmooth}, MD is known to converge with rate
    $O(1/k)$ for sufficiently small and linearly decreasing step sizes: $\varepsilon_t=(L+\alpha t)^{-1}$ for $0\leq \alpha <\mu$,~\citep[see, e.g.,][Theorem 4.5 and Lemma 4.8, Theorem 4]{Hanzely2021, aubin2024}.
    The non-strongly convex case ($\mu=0$) exhibits $O(1/\sqrt{k})$ convergence rate for a specific choice of step sizes~\citep[see, e.g.,][Corollary 4.6]{Hanzely2021}.
    In practice, it is not trivial to set the $\varepsilon$ to obtain a $O(1/k)$ rate, as the relative strong convexity parameter $\mu$, if it exists, might be unknown and eventually very small.
    \begin{remark}
        The strong-convexity/convexity assumptions are rarely verified in practice, however, such assumptions are standard for analysing convergence of
        optimisation algorithms (including NGD and MD), to ensure a unique minimiser and tractable rates~\citep[see, e.g.,][Ch. 5]{Bach2024}.
        While non-conjugate VI objectives may not be globally convex~\citep{wu2024understandingstochasticnaturalgradient} (but it holds when the variational family contains the target), \textit{local} convexity near optima often suffices for local convergence to hold.
    \end{remark}
    See~\citep{pmlr-v119-domke20a,NEURIPS2023_d0bcff64} for provable smoothness guarantees on the KL objective.

    \section{Practical algorithms and their analysis}
    \label{sec:stochastic}

    The exact LSVI mapping $\phi$ assumed that one has access to expectations
    under the variational family. In practice, exact computation of those
    expectations is intractable for a general target log-density $f$. In this
    section, we introduce a practical algorithm in which these expectations are
    estimated via Monte Carlo, and we study the impact of the Monte Carlo error
    on the convergence guarantees.

    \subsection{Generic LSVI}

    Our first algorithm comes down to replacing the two expectations in~\eqref{eq:def_phi} with
    Monte Carlo estimates:
    \begin{equation}
        \label{eq:two_mc_estimates}
        \hat{F}_{\eta}  \coloneq \frac{1}{N}\sum_{i=1}^{N} s(X_i)s(X_i)^{\top},\quad
        \hat{z}_{\eta}  \coloneq \frac{1}{N}\sum_{i=1}^{N} f(X_i)s(X_i),
    \end{equation}
    where $X_1, \dots, X_N\overset{\textup{i.i.d.}}{\sim} q_{\eta}$.
    The counterpart to the exact iteration~\eqref{eq:krasnoselskii-mann} is then
    \begin{equation}
        \label{eq:km_lsvi_stochastic}
        \hat{\eta}_{t+1} \coloneq \varepsilon_t \hat{F}_{\hat{\eta}_{t}}^{-1}\hat{z}_{\hat{\eta}_{t}} + (1-\varepsilon_t)\hat{\eta}_t,
    \end{equation}
    with $\hat{\eta}_0 = \eta_0$.
    At any iteration $t\geq 1$, the step size $\varepsilon_t$ can, in all generality, depend on the current state of the algorithm via a function \texttt{stepsize}.
    We discuss one possible choice in Section~\ref{sec:choice_eps}.
    This leads naturally to generic LSVI~\cref{alg:lsvi_krasnoselskii},
    whose one-iteration cost is $\OO(m^3+m^2N)$.

    \begin{algorithm}[t]
        \caption{Generic LSVI (any family $\mathcal{Q}$)}
        \label{alg:lsvi_krasnoselskii}
        \begin{algorithmic}
            \Require $\eta_0 \in \mathcal{V}$, $N \geq 1$
            \State $\hat{\eta}_0 \gets \eta_0$
            \While{not converged}
                \State $X_1, \dots, X_N \sim q_{\hat{\eta}_t}$
                \State $\hat{F}_{\hat{\eta}_t} \gets \frac{1}{N} \sum_{i=1}^{N} s(X_i) s^\top(X_i)$
                \State $\hat{z}_{\hat{\eta}_t} \gets \frac{1}{N} \sum_{i=1}^{N} s(X_i) f(X_i)$
                \State $\hat{\eta}'_{t+1} \gets \hat{F}^{-1}_{\hat{\eta}_t} \hat{z}_{\hat{\eta}_t}$ \algorithmiccomment{ordinary least squares estimator (OLS)}
                \State $\varepsilon_t \gets \texttt{stepsize}(\hat{F}_{\hat{\eta}_t}, \hat{z}_{\hat{\eta}_t}, \hat{\eta}'_{t+1}, \hat{\eta}_t, X)$
                \State $\hat{\eta}_{t+1} \gets \varepsilon_t \hat{\eta}'_{t+1} + (1 - \varepsilon_t)\hat{\eta}_t$
            \EndWhile
        \end{algorithmic}
    \end{algorithm}

    Iteration~\eqref{eq:km_lsvi_stochastic} replaces the exact computation
    of $F_\eta^{-1}$ with a Monte Carlo estimate $\hat{F}_{\eta}^{-1}$. This
    approximation introduces a bias in the estimation of the inverse FIM, and
    consequently, in the estimation of the natural gradient involved in~\eqref{eq:first_iteration_gd}. Further analysis
    of the statistical properties of the sequence $(\hat{\eta}_t)$, in
    particular, its convergence toward a neighbourhood of the optimum, requires a
    careful control of the bias.
    When $s$ admits uniformly bounded fourth-order moment, and the spectrum of $F_\eta$ is bounded away from zero, the bias conditioned to a high-probability event can be controlled.
    \begin{assumption}[]%
        \label{ass:fourth_order_moment}
        The sufficient statistic $s$ admits uniformly bounded fourth-order moments:
        \begin{equation}
            \mu_4 \coloneq \sup_{\omega\in \mathcal{W}}\max_{1\leq i\leq m}\left(\bbE_{\omega}\left[\abs{s(X)_i}^4\right]\right)^{1/4} < \infty,\quad    \nu \coloneq \sup_{\omega\in \mathcal{W}} \sup_{\norm{u}=1, u\in\bbR^m} \left(\bbE_{\omega}\left[\abs{u^{\top} s(X)}^4\right]\right)^{1/4}<\infty.
        \end{equation}
    \end{assumption}
    \begin{assumption}[]
        \label{ass:boundedfromzero}
        The smallest spectral value $r \coloneq \inf_{\omega\in \mathcal{W}} \norm{F^{-1}_{\omega}}^{-1}$ is strictly positive.
    \end{assumption}

    Both assumptions are verified if i) $\mathcal{W}$ is a compact set, and ii) $s(X)$ admits fourth-order moments, for $X\sim q_{\omega}$ and for all $\omega\in \mathcal{W}$.
    While $\mathcal{W}$ is generally not a compact set, it should not be considered
    as a limiting assumption in practice, and can be lifted, see~\citep{NEURIPS2020_bd4d08cd,pmlr-v162-scaman22a,batardière2024importancesamplingbasedgradientmethod}.
    We further assume that $f$ admits uniformly bounded second-order moment as this is required to control the norm of $\hat{z}_\omega$.

    \begin{assumption}[]
        \label{ass:moment_two_f}  $m_2 \coloneq \sup_{\omega\in \mathcal{W}} \bbE_{\omega}[f^2]^{1/2} < \infty$.
    \end{assumption}
    We derive the convergence in expectation to the minimum of the KL loss conditioned on the event that the estimated FIMs are well-conditioned.

    \begin{theorem}[Explicit convergence rates for LSVI]
        \label{th:main_theorem}
        Assume~\ref{ass:minimality_and_regularity},~\ref{ass:convexsmooth},~\ref{ass:fourth_order_moment},~\ref{ass:boundedfromzero}, and~\ref{ass:moment_two_f}.
        Let $k\geq 0$, and let $\hat{\eta}_0, \hat{\eta}_1,\ldots, \hat{\eta}_k$ be given by~\eqref{eq:km_lsvi_stochastic}, with $\hat{\omega}_t=\omega(\hat{\eta}_t)$ for $0\leq t\leq k$.
        Let $\mathcal{A}_k = \cap_{t=0}^{k}\mathcal{A}(\hat{\omega}_t)$ with $\mathcal{A}(\omega) = [\norm{F_\omega-\hat{F}_\omega}<\norm{F^{-1}_{\omega}}^{-1}]$.
        Further assume that at each iteration $t\geq 1$, the quantities $\hat{F}_{\hat{\eta}_t}$ and $\hat{z}_{\hat{\eta}_t}$  are computed using two independent sets of samples.
        Let $c_t = c_{t-1}\varepsilon_{t-1}^{-1}(\varepsilon_t^{-1}-\mu)^{-1}$ for $t\geq 1$, $c_0 = 1$, $C_k = \sum_{t=1}^{k}c_{t-1}$. Let $\bar{\omega}_k =
        \frac{1}{C_k}\sum_{t=1}^{k} c_{t-1} \hat{\omega}_{t}$ be the weighted average
        of the iterates, and let $\omega^*$ be the minimiser of $l$.
        \begin{enumerate}
            \item Fix $\delta\in (0, 1)$, provided $\sqrt{N}\geq C_0 r^{-2}(k+1)\delta^{-1} (\sqrt{\log(m)} \mu_4 \nu +\mu_4^2 \sqrt{m}\log(m))$ for some constant $C_0>0$, $\mathcal{A}_k$ happens with probability at least $1-\delta$.
            \item Conditioned on $\mathcal{A}_k$,
            \begin{equation}
                \label{eq:eq_theorem}
                \bbE[l(\bar{\omega}_k)\mid \mathcal{A}_k]-l(\omega^*) \leq \frac{(\varepsilon_0^{-1}-\mu)\uKL(q_{\omega^*}\mid q_{\omega_0})}{C_k}
                +\OO\left(\frac{1}{N}\right)\sum_{t=0}^{k-1} \frac{c_t\varepsilon_t}{C_k} + \OO\left(\frac{1}{N}\right),
            \end{equation}
            where the big-$\OO$ terms are independent of $k$.
            \item Let $\varepsilon_t^{-1} = L + \alpha t$ for some $\alpha > 0$.
            The RHS of~\eqref{eq:eq_theorem} has asymptotic convergence rates that depend on $\alpha$ compared to the strong-convexity parameter $\mu$.
            When $\alpha >\mu$, the sequence $(c_t)$ is strictly decreasing, and the rate is $\OO\left(k^{-\mu/\alpha}\right)+ \OO\left(N^{-1}\right)$.
            When $\alpha=\mu$, the sequence $(c_t)$ is constant, and the rate is $\OO(k^{-1}) + \OO\left(\log(k)k^{-1}N^{-1}\right)+ \OO\left(N^{-1}\right)$.
            When $\alpha <\mu$, the sequence $(c_t)$ is strictly increasing, and the rate is $\OO\left(k^{-\mu/\alpha}\right)+\OO\left(k^{-1}N^{-1}\right) + \OO\left(N^{-1}\right)$.
        \end{enumerate}
    \end{theorem}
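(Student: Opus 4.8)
The plan is to prove the three items in order; item~2 is the crux, and items~1 and~3 are supporting. For item~1, I would apply a matrix concentration inequality of Bernstein/Rosenthal type to $\hat F_{\omega}-F_{\omega}=\frac1N\sum_{i}\bigl(s(X_i)s(X_i)^\top-F_{\omega}\bigr)$, controlling the matrix variance statistic and the higher-order moment term \emph{uniformly over} $\omega\in\mathcal{W}$ by Assumption~\ref{ass:fourth_order_moment}; this is where the scalings $\sqrt{\log m}\,\mu_4\nu$ and $\mu_4^2\sqrt m\log m$ come from. Since $\norm{F_\omega^{-1}}^{-1}=\lambda_{\min}(F_\omega)\geq r$ by Assumption~\ref{ass:boundedfromzero}, Markov's inequality gives $\sup_{\omega\in\mathcal{W}}\mathbb{P}\bigl(\norm{\hat F_{\omega}-F_{\omega}}\geq r\bigr)\leq\delta/(k+1)$ under the stated lower bound on $\sqrt N$; the power of $r$ there is also what forces $\hat F$ to be \emph{uniformly} well-conditioned ($\norm{\hat F_\omega^{-1}}=\OO(1/r)$) on the good event, which item~2 will need. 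Conditioning on $\hat\omega_t$, which is a function of the samples of iterations $0,\dots,t-1$ only and hence independent of those defining $\hat F_{\hat\omega_t}$, and taking a union bound over $t=0,\dots,k$ yields $\mathbb{P}(\mathcal{A}_k)\geq 1-\delta$.

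For item~2, the first move is to recognise~\eqref{eq:km_lsvi_stochastic} as an \emph{inexact} mirror-descent recursion. By Proposition~\ref{prop:gradient_descent} the exact step is $\eta-\varepsilon_t\nabla_\omega l(\omega(\eta))$ with $\eta-\phi(\eta)=\nabla_\omega l(\omega(\eta))$, so~\eqref{eq:km_lsvi_stochastic} reads $\hat\eta_{t+1}=\hat\eta_t-\varepsilon_t\bigl(\nabla_\omega l(\hat\omega_t)+e_t\bigr)$ with Monte Carlo error $e_t\coloneq\phi(\hat\eta_t)-\hat F_{\hat\eta_t}^{-1}\hat z_{\hat\eta_t}$, and the same computation shows $\hat\omega_{t+1}$ solves~\eqref{eq:mirror_descent} with perturbed gradient $\nabla_\omega l(\hat\omega_t)+e_t$. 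Under Assumption~\ref{ass:convexsmooth} (relative smoothness and relative strong convexity) with $\varepsilon_t\leq 1/L$, the standard one-step mirror-descent inequality, via the Bregman three-point identity, gives for each $t$
\[
  l(\hat\omega_{t+1})-l(\omega^*)\ \leq\ (\varepsilon_t^{-1}-\mu)\,D_{Z^*}(\omega^*,\hat\omega_t)\ -\ \varepsilon_t^{-1}\,D_{Z^*}(\omega^*,\hat\omega_{t+1})\ +\ R_t ,
\]
where $R_t$ carries the $e_t$-dependent terms; Young's inequality against the slack $D_{Z^*}(\hat\omega_{t+1},\hat\omega_t)$ (and strong convexity of $Z^*$) absorbs the cross term, leaving $R_t$, up to lower-order contributions, of the form $\langle e_t,\hat\omega_t-\omega^*\rangle$ plus a quadratic $\OO(\varepsilon_t)\norm{e_t}^2$. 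Multiplying the $t$-th inequality by $c_t$ and summing over $t=0,\dots,k-1$: the recursion $c_s(\varepsilon_s^{-1}-\mu)=c_{s-1}\varepsilon_{s-1}^{-1}$ is exactly what telescopes the $D_{Z^*}(\omega^*,\hat\omega_s)$ terms; dropping the nonpositive tail $-c_{k-1}\varepsilon_{k-1}^{-1}D_{Z^*}(\omega^*,\hat\omega_k)$, using $c_0=1$, $\hat\omega_0=\omega_0$, the identity $D_{Z^*}(\omega^*,\omega_0)=\uKL(q_{\omega^*}\mid q_{\omega_0})$, convexity of $l$ and Jensen's inequality applied to $\bar\omega_k$, gives $l(\bar\omega_k)-l(\omega^*)\leq (\varepsilon_0^{-1}-\mu)\uKL(q_{\omega^*}\mid q_{\omega_0})/C_k + C_k^{-1}\sum_{t<k}c_tR_t$.

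It remains to bound $\bbE[R_t\mid\mathcal{A}_k]$ by $\OO(\varepsilon_t/N)+\OO(1/N)$, and this is the hard part. Two facts do the work. First, because $\hat z$ is unbiased and computed from samples \emph{independent} of those in $\hat F$, $\bbE[\hat F^{-1}\hat z\mid\hat\omega_t]=\bbE[\hat F^{-1}\mid\hat\omega_t]\,z_{\hat\omega_t}$, and a Neumann expansion $\hat F^{-1}=F^{-1}-F^{-1}(\hat F-F)F^{-1}+\cdots$, convergent on $\mathcal{A}(\hat\omega_t)$, has vanishing first-order conditional expectation because $\bbE\hat F=F$; hence the conditional bias of $e_t$ is $\OO(\bbE\norm{\hat F-F}^2)=\OO(1/N)$ rather than $\OO(1/\sqrt N)$, and its conditional second moment is $\OO(1/N)$ by Assumptions~\ref{ass:fourth_order_moment} and~\ref{ass:moment_two_f} together with the uniform conditioning of $\hat F$. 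This yields $C_k^{-1}\sum_{t<k}c_t\,\bbE[\OO(\varepsilon_t)\norm{e_t}^2\mid\mathcal{A}_k]=\OO(1/N)\,C_k^{-1}\sum_{t<k}c_t\varepsilon_t$, while the linear term contributes, through its bias and the diameter of $\mathcal{W}$, a residual $C_k^{-1}\sum_{t<k}c_t\,\OO(1/N)=\OO(1/N)$; the uniform moment bounds keep all constants independent of $t$, hence of $k$. Second — the subtle point — $\mathcal{A}_k$ depends on the samples of \emph{all} iterations, so it is not adapted to the natural filtration and the cancellation $\bbE[\langle e_t-\bbE[e_t\mid\hat\omega_t],\hat\omega_t-\omega^*\rangle]=0$ does not survive conditioning on $\mathcal{A}_k$ verbatim; I would recover it by working with the partial events $\mathcal{A}_t$ (for which the step-$t$ noise is conditionally centred given $\mathcal{F}_{t-1}$) and absorbing the discrepancy $\mathbb{P}(\mathcal{A}_t\setminus\mathcal{A}_k)$ into the residual $\OO(1/N)$. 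Making this bookkeeping rigorous, and verifying the bias and second-moment estimates for the regression estimator, is the main obstacle; the rest is mechanical.

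Item~3 is then a routine asymptotic evaluation of the right-hand side of~\eqref{eq:eq_theorem} with $\varepsilon_t^{-1}=L+\alpha t$. The recursion reads $c_t=\prod_{j=0}^{t-1}\frac{L+\alpha j}{L-\mu+\alpha(j+1)}$, so a ratio-of-Gamma-functions (Stirling) estimate gives $c_t=\Theta(t^{\mu/\alpha-1})$, hence $C_k=\Theta(k^{\mu/\alpha})$ and the leading term is $\Theta(k^{-\mu/\alpha})$. Since $c_t\varepsilon_t=\Theta(t^{\mu/\alpha-2})$, the sum $\sum_{t<k}c_t\varepsilon_t$ behaves like $\Theta(k^{\mu/\alpha-1})$ if $\alpha<\mu$, $\Theta(\log k)$ if $\alpha=\mu$, and $\Theta(1)$ if $\alpha>\mu$; dividing by $C_k$ and adding the $\OO(1/N)$ tail reproduces exactly the three stated rates (in the $\alpha>\mu$ case the term $\OO(k^{-\mu/\alpha}/N)$ being absorbed into $\OO(1/N)$).
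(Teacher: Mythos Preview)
Your overall strategy---casting the stochastic recursion as inexact mirror descent, telescoping with the $c_t$ weights, and controlling the bias of $\hat F^{-1}$ via a Neumann expansion to order $\OO(1/N)$---is essentially the paper's. Item~1 matches the paper's argument (a Chebyshev--Tropp moment bound on $\|\hat F_\omega-F_\omega\|$ plus a union bound over $t=0,\dots,k$), and item~3 is correct and coincides with the paper's case analysis.

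There is, however, one genuine gap in item~2. To reduce the linear bias contribution $\langle B_t,\hat\omega_t-\omega^*\rangle$ to $\OO(1/N)$ you appeal to ``the diameter of $\mathcal W$'', and to absorb the cross term via Young's inequality you use ``strong convexity of $Z^*$'' (equivalently, a uniform upper bound on $\nabla_\eta\omega=Z_\eta F_\eta$). Neither is supplied by the hypotheses: the assumptions do not force $\mathcal W$ to be bounded, nor $Z_\omega$ to be uniformly bounded over $\mathcal W$. The paper's residual terms (its Lemmas controlling $B_t^\top(\omega^*-\hat\omega_{t+1})$ and the variance-type quantity $\sigma^2$) likewise depend on $\|\omega^*-\hat\omega_{t+1}\|$ and on $Z_{\hat\omega_t}$, and the paper closes the loop with an argument you do not have: by the law of large numbers and continuous mapping, for each fixed $k$ the stochastic iterates $\hat\omega_0,\dots,\hat\omega_k$ converge almost surely, as $N\to\infty$, to the \emph{exact} mirror-descent iterates $\omega_0^*,\dots,\omega_k^*$; these in turn converge to $\omega^*$ by the deterministic MD theorem under Assumption~\ref{ass:convexsmooth}, so $\sup_k\max_{t\le k}\|\omega_t^*-\omega^*\|<\infty$, and for $N$ large enough all stochastic iterates lie in a fixed compact set $U\subset\mathcal W$ on which $Z_\omega$ and the distances $\|\omega^*-\hat\omega_t\|$ are bounded by continuity. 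This compactness-from-convergence step is what makes the $\OO(1/N)$ constants independent of $k$, and it is the missing ingredient in your sketch. Your proposed handling of the anticipative conditioning on $\mathcal A_k$ via the partial events $\mathcal A_t$ is sensible; the paper is in fact less explicit on that point than you are.
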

    \begin{remark}
        Our proof follows a similar strategy to that of~\citet{Hanzely2021}, extending their mirror descent lemma to biased estimates.
        We control both the bias and the variance of the FIM estimate, conditionally on the event that the estimated FIM is well-conditioned ($\mathcal{A}_k$).
        We show this occurs with high probability when $N$ is sufficiently large, using concentration inequalities for positive-definite matrices~\citep{10.1093/imaiai/ias001}.
    \end{remark}
    \begin{remark}
        The convergence guarantees can be decomposed in three terms.
        The first term is due to initialisation and vanishes as $k\to \infty$, the third term is the Monte Carlo bias and vanishes as $N\to \infty$, and the second is a
        cross term and vanishes whenever $k\to\infty$ or $N\to\infty$.
    \end{remark}
    \begin{remark}
        The OLS estimate to the regression problem uses a single set of samples to compute both $\hat{F}_{\hat{\eta}_t}$ and $\hat{z}_{\hat{\eta}_t}$, contrary to the estimate introduced in the previous theorem.
        Additionally, for many exponential families, closed-form expressions for $F_{\eta}$ are known.
        Since the OLS is optimal with respect to the variance, it exhibits lower variance compared to others estimates.
        Importantly, it is inefficient to use two distinct set of samples or to replace the estimated FIM with the exact FIM~\citep[][]{10.1214/13-BA858}.
    \end{remark}

    \subsection{The choice of the $\varepsilon_t$'s}
    \label{sec:choice_eps}
    Setting $\varepsilon$ to a small enough and linearly decreasing sequence of step sizes ensures convergence of the sequence~\eqref{eq:km_lsvi_stochastic}
    to a neighbourhood of a local minimizer $\eta^\star$~\citep{Hanzely2021, aubin2024}, see~\cref{th:main_theorem}.
    However, the smoothness and strong-convexity parameters $(L,\mu)$ of the KL objective, if they exist, are rarely known in practice~\citep{pmlr-v119-domke20a, NEURIPS2023_d0bcff64}.
    For these reasons, choosing the $\varepsilon_t$ can be a tedious task, as in any
    stochastic optimisation scheme~\citep{WelandaweetalFramework}: step sizes
    that are too large lead to unstable behaviours while too small step sizes lead to slow convergence.

    Let $\eta\in \mathcal{V}$ and $\eta^{\star} = \phi(\eta)$ be the OLS, consider the following linear regression objective,
    \begin{equation}
        \label{eq:regression_heuristic2}
        f(X_i) = \eta^{\star\top}s(X_i) + v_i,\quad X_1, \dots, X_N\overset{\textup{i.i.d.}}{\sim} q_{\eta},
    \end{equation}
    where $v_i$ is the residual of the regression.
    Then~\eqref{eq:regression_heuristic2} implies that for any $\varepsilon \in (0, 1]$
    \begin{equation}
        \label{eq:ols_epsilon_ss}
        \varepsilon f(X_i) + (1-\varepsilon)\eta^{\top}s(X_i) =
        (\varepsilon \phi(\eta) + (1-\varepsilon)\eta)^{\top}s(X_i) + \varepsilon v_i.
    \end{equation}
    The previous equation~\eqref{eq:ols_epsilon_ss} shows that descending toward the direction of the OLS with step size $\varepsilon$ multiplies
    the variance of the residuals $v_1, \ldots, v_N$, $v^2$ by $\varepsilon^2$ .
    Let $u^2$ be some upper bound on the variance of the residuals, and let $\varepsilon\leq u/v$, then the residuals have variance less than $u^2$.
    This remark, combined with a backtracking procedure to ensure that the iterates remain in the set of natural parameters, yields
    an adaptive schedule for choosing the step sizes (Algorithm~\ref{alg:recipe_backracking_variance_control} in Appendix~\ref{appendix:algorithm_eps}),
    which we have found to be robust against noisy iterates and slow descents.

    \section{Gaussian families}\label{sec:gaussian}

    The two most commonly-used families $\mathcal{Q}$ in variational inference are
    the full-covariance Gaussian family ($N_d(\mu, \Sigma)$ with arbitrary $\mu$
    and $\Sigma\succ 0$), and the mean-field Gaussian family ($\Sigma$ is diagonal).
    A single iteration of LSVI requires inverting the Fisher information matrix (FIM) $F$, which is too expensive to be practical in high-dimension; i.e., $\OO(m^3)$, with $m=\OO(d)$ (resp. $m=\OO(d^2)$)
    in the mean-field (resp. full-covariance) case.

    Attempts to lessen the computational complexity of inference procedures over Gaussian distributions
    either rely on access to cheap gradient estimates in the space of moments~\citep{10.23919/ISITA.2018.8664326,pmlr-v80-khan18a,wu2024understandingstochasticnaturalgradient,
        10.1093/jrsssb/qkaf001}, on single draw updates making the FIM estimate cheap to invert but noisy~\citep{10.23919/ISITA.2018.8664326,Godichon-Baggioni26092024},
    or on restrictive assumptions on the target density~\citep{NIPS1997_e816c635,opper2009thevariational}.
    We derive closed-form formulae for the natural gradient descent iteration
    whose cost, in the full-covariance case, essentially amounts to the cost of
    computing $d\times d$ matrix products, that is $\OO(d^3)$. In the mean-field case, the cost is $\OO(d)$.

    \paragraph{Full-covariance Gaussian family}
    Let $\mathcal{Q}$ be the family of (unnormalised) Gaussian densities of dimension $d$.
    The sufficient statistic is $s(x)\coloneq (1, x^{\top}, (\vect{(xx^{\top})})^{\top})^{\top}\in \bbR^{m}$ with $m=d+d(d+1)+1$, where $\vect{(xx^\top)}$ denotes the vector obtained by vertically stacking the columns of $xx^{\top}$, and we denote by $\unvec$ the inverse operation.
    Consider a natural parameter $\eta = (\eta^{(0)}, \eta^{(1),\top},
    \eta^{(2),\top})^{\top}\in \mathcal{V}$ with $\eta^{(0)}\in \bbR$, $\eta^{(1)}\in
    \bbR^d$ and $\eta^{(2)}\in \bbR^{d(d+1)}$, then it defines a unique Gaussian distribution with mean and covariance matrix given by
    \begin{equation}
    (\mu, \Sigma)
        =\left(-\frac{1}{2}\eta^{(2), -1}\eta^{(1)},
        -\frac{1}{2}\unvec(\eta^{(2)})^{-1}\right).
    \end{equation}
    We reparameterise the linear regression of $f(X)$ with respect to $s(X)$, where $X\sim N(\mu,\Sigma)$, into a regression
    of $f(\mu+C Z)$ with respect to $t(Z)$, where $Z\sim N(0, I_d)$, and $C=\textup{Chol}(\Sigma)$ is the Cholesky of $\Sigma$, and
    \begin{equation}
        \label{eq:statistic_full_gaussian_case}
        t(z) \coloneq
        \bigg(
        1,
        z^{\top},
        \frac{z_1^2-1}{\sqrt{2}},
        z_1 z_2,
        \dots,
        z_1 z_d,
        \frac{z_2^2-1}{\sqrt{2}},
        z_2 z_3,
        \dots,
        \frac{z_d^2-1}{\sqrt{2}}
        \bigg)^{\top},
    \end{equation}
    and
    \begin{equation}
        \label{eq:ols_fullcov}
        \gamma \coloneq \argmin_{\gamma\in\bbR^m}\bbE\left[\{\gamma^{\top}t(Z)-f(\mu+C Z)\}^2\right].
    \end{equation}
    In brief, $t$ is a one-to-one transformation such that the output vector has un-correlated components: $\bbE[t(Z)t^{\top}(Z)]=I$.
    That makes possible the estimation of $\gamma$ without inverting the FIM.
    The explicit mapping from $\gamma$ to $\eta$ depending on $(\mu, \Sigma)$ is given by the next theorem.

    \begin{theorem}
    [LSVI mapping $\phi$ for full-covariance Gaussian distributions]
        \label{th:full_cov_gaussian}
        Let $\eta\in \mathcal{V}$ defines a Gaussian distribution $X\sim \mathcal{N}(\mu, \Sigma)$, and let $C=\textup{Chol}(\Sigma)$ be the Cholesky of $\Sigma$.
        Then, $\beta\coloneq \phi(\eta)$ is defined recursively from bottom to top by
        \begin{equation}
            \label{eq:mappingfullcov}
            \beta = \begin{pmatrix}
                        \beta^{(0)}\\ \beta^{(1)} \\ \beta^{(2)}
            \end{pmatrix}=
            \begin{pmatrix}
                \gamma^{(0)} - \sum_{i=1}^{n}\Gamma_{i,i} - \beta^{(1),\top}\mu - \beta^{(2),\top}\vect{\mu\mu^{\top}} \\
                C^{-\top}\gamma^{(1)}-2\mu^{\top}\beta^{(2)}                                                          \\
                \vect{\left(C^{-1}\Gamma C^{-\top}\right)}
            \end{pmatrix},
        \end{equation}

        where $\gamma = \bbE [t(Z)f(\mu+CZ)]$ has subcomponents $\gamma=(\gamma^{(0)}, \gamma^{(1), \top},\gamma^{(2), \top})^{\top}$, $\gamma^{(0)}\in \bbR$, $\gamma^{(1)}\in\bbR^d$,$\gamma^{(2)}\in \bbR^{d(d+1)/2}$, and where $\Gamma$ is the symmetric matrix given component-wise by $\Gamma_{i,i} = \gamma^{(2)}_{1+1/2(2d+2-i)(i-1)}/\sqrt{2}$, $\Gamma_{i,i+k}=\gamma^{(2)}_{ 1+1/2(2d+2-i)(i-1)+k}/2$ for $1\leq i\leq d$ and $1\leq k\leq d-i$.
        In addition, if $f$ has second-order derivatives such that $\norm{\bbE_{X}[\nabla f]}< \infty$ and $0\prec
        -\bbE_{X}\left[\nabla^2 f\right]$, then $\phi(\eta)$ defines a Gaussian distribution with mean and covariance given by
        \begin{align}
        (\mu', \Sigma')
            = \left(\mu - \bbE\left[ \nabla^2 f(X)\right]^{-1}
                  \bbE \left[\nabla f(X)\right],  - \bbE\left[ \nabla^2 f(X)\right]^{-1}\right),\quad X\sim N(\mu,\Sigma).
        \end{align}
    \end{theorem}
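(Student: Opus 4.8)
The plan is to exploit the fact that $\phi(\eta)$ is the OLS regressor of $f$ onto $s$ under $q_\eta$ (by~\eqref{eq:LSVI_as_OLS}), and that the change of variables $X = \mu + CZ$ with $Z\sim N(0,I_d)$ makes the design orthonormal. First I would verify the claim $\bbE[t(Z)t^\top(Z)] = I_m$: the entries of $t(Z)$ are $1$, the linear terms $z_i$, the rescaled Hermite-type diagonal terms $(z_i^2-1)/\sqrt2$, and the cross terms $z_iz_j$ ($i<j$); a direct Gaussian moment computation shows these are mutually uncorrelated with unit variance (this uses $\bbE[z_i^4]=3$, $\bbE[z_i^2z_j^2]=1$, and that odd moments vanish). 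Consequently the OLS coefficient $\gamma$ in~\eqref{eq:ols_fullcov} is simply $\gamma = \bbE[t(Z)f(\mu+CZ)]$, no matrix inversion needed. This identifies the quadratic polynomial $P(z) := \gamma^\top t(z)$ as the $L^2(N(0,I_d))$-projection of $f(\mu+Cz)$ onto quadratics in $z$.

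Next I would translate $P$ back into a quadratic in the original variable $x$. Writing $z = C^{-1}(x-\mu)$, the polynomial $Q(x) := P(C^{-1}(x-\mu))$ is a quadratic in $x$, hence of the form $Q(x) = \beta^\top s(x)$ for a unique $\beta = (\beta^{(0)},\beta^{(1)},\beta^{(2)})$, and by uniqueness of the $L^2(q_\eta)$-projection this $\beta$ must equal $\phi(\eta)$. So the theorem reduces to a bookkeeping exercise: expand $P(C^{-1}(x-\mu))$ and match coefficients against $\beta^{(0)}\cdot 1 + \beta^{(1)\top}x + \beta^{(2)\top}\vect(xx^\top)$. The quadratic part of $P$ in $z$ is $z^\top \Gamma z - \operatorname{tr}\Gamma$ where $\Gamma$ is the symmetric matrix reading off $\gamma^{(2)}$ with the stated $1/\sqrt2$ and $1/2$ weights (the $1/\sqrt2$ on the diagonal because $t$ carries $(z_i^2-1)/\sqrt2$, and the $1/2$ on off-diagonals because $z^\top\Gamma z$ double-counts). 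Substituting $z=C^{-1}(x-\mu)$ gives the quadratic form $(x-\mu)^\top C^{-\top}\Gamma C^{-1}(x-\mu)$, whence $\beta^{(2)} = \vect(C^{-1}\Gamma C^{-\top})$ (using symmetry of the inner matrix); collecting the cross terms $-2\mu^\top C^{-\top}\Gamma C^{-1}x$ together with the linear term $\gamma^{(1)\top}C^{-1}(x-\mu)$ yields $\beta^{(1)} = C^{-\top}\gamma^{(1)} - 2\mu^\top\beta^{(2)}$; and the constant is whatever is left over, $\beta^{(0)} = \gamma^{(0)} - \sum_i\Gamma_{i,i} - \beta^{(1)\top}\mu - \beta^{(2)\top}\vect(\mu\mu^\top)$, which matches~\eqref{eq:mappingfullcov}. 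Care is needed with the indexing map $i,k \mapsto 1 + \tfrac12(2d+2-i)(i-1) + k$ that locates the $(i,i+k)$ entry inside the flattened upper triangle of $\gamma^{(2)}$; I would just check it counts the entries of rows $1,\dots,i-1$ (there are $\sum_{j<i}(d-j+1) = \tfrac12(2d+2-i)(i-1)$ of them) plus an offset.

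For the second part, I would assume $f \in C^2$ with $\bbE_X[\nabla f]$ finite and $-\bbE_X[\nabla^2 f]\succ 0$, and compute the Gaussian distribution that $\phi(\eta)$ encodes. From $\beta^{(2)} = \vect(C^{-1}\Gamma C^{-\top})$ and the natural-parameter-to-moment formula $\Sigma' = -\tfrac12\unvec(\beta^{(2)})^{-1}$, I get $\Sigma' = -\tfrac12(C^{-1}\Gamma C^{-\top})^{-1} = -\tfrac12 C^\top\Gamma^{-1}C$. The task is then to show $\Gamma = -\tfrac12 C^\top\bbE_X[\nabla^2 f]C$, i.e. that $\Gamma$ is (up to $-1/2$) the Hessian of $f$ transported to $z$-coordinates. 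This is where Stein's identity / Gaussian integration by parts enters: $\Gamma$ collects second-order Hermite coefficients of $z\mapsto f(\mu+Cz)$, and for the diagonal, $\bbE[(z_i^2-1)g(z)] = \bbE[\partial_i^2 g(z)]$, while for off-diagonals $\bbE[z_iz_j g(z)] = \bbE[\partial_i\partial_j g(z)]$; applying this to $g(z)=f(\mu+Cz)$ and using the chain rule $\nabla^2_z g = C^\top(\nabla^2 f)(\mu+Cz)\,C$ gives exactly $\Gamma = \tfrac12 C^\top\bbE_X[\nabla^2 f]C$ after tracking the $1/\sqrt2$ and $1/2$ normalisations (the $1/2$'s cancel the double-counting, and the diagonal $1/\sqrt2$ cancels against $\bbE[(z_i^2-1)^2]=2$). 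Hence $\Sigma' = -\bbE_X[\nabla^2 f]^{-1}$. The mean follows similarly: $\mu' = -\tfrac12\unvec(\beta^{(2)})^{-1}\beta^{(1)}$; substituting $\beta^{(1)} = C^{-\top}\gamma^{(1)} - 2\mu^\top\beta^{(2)}$ and using Stein's identity once more, $\gamma^{(1)} = \bbE[z f(\mu+Cz)] = \bbE[\nabla_z f(\mu+Cz)] = C^\top\bbE_X[\nabla f]$, which after simplification gives $\mu' = \mu - \bbE_X[\nabla^2 f]^{-1}\bbE_X[\nabla f]$ — i.e. one Newton step on the expected log-target, as stated.

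The main obstacle I anticipate is not conceptual but the combinatorial accounting: keeping the $\tfrac1{\sqrt2}$ (diagonal) versus $\tfrac12$ (off-diagonal) normalisations of $t$ consistent through the two changes of variables ($z\leftrightarrow x$ and Hermite coefficients $\leftrightarrow$ derivatives), and verifying the explicit flattening index for $\Gamma$ — a sign or factor-of-two slip anywhere propagates into all three blocks of~\eqref{eq:mappingfullcov}. The analytic content (orthonormality of $t$, uniqueness of $L^2$-projection, Stein's identity) is standard; the integrability hypotheses $\norm{\bbE_X[\nabla f]}<\infty$ and $0\prec -\bbE_X[\nabla^2 f]$ are exactly what is needed to justify the integration by parts and to guarantee $\unvec(\beta^{(2)})\prec 0$ so that $\phi(\eta)\in\mathcal{V}$.
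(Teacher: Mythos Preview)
Your proposal is correct and follows essentially the same route as the paper: the paper isolates the change of variables $x=\mu+Cz$ into a separate reparametrisation lemma (establishing the forward map $\eta\mapsto\gamma$ via $\eta^\top s(x)=\gamma^\top t(z)$, then inverting it), whereas you compute the inverse map directly by substituting $z=C^{-1}(x-\mu)$ into $\gamma^\top t(z)$ and reading off $\beta$; both then invoke Stein's identity for the Newton-step interpretation of $(\mu',\Sigma')$. The only cosmetic difference is the direction in which the coefficient matching is carried out.
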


    Theorem~\ref{th:full_cov_gaussian} gives
    the regressor with respect to $s(X)$ of $f(X)$, as a function of $(\mu,C)$
    and $\gamma$. Furthermore, all the involved operations have cost dominated by the computation of $C$,
    which is the same as computing products of $d\times d$ matrices, $O(d^3)$.

    \paragraph{Mean-field Gaussian family}

    The
    family of mean-field Gaussian distributions is treated similarly to the previous one by
    removing the cross-terms $z_i z_j$ in the sufficient statistic. The total cost of the OLS computation is $\OO(d)$.
    See Appendix~\ref{appendix:mf} for the explicit regression procedure.

    \subsection{Stochastic schemes tailored to Gaussian distributions}
    We now take advantage of the reparametrisation tricks previously introduced
    to derive tailored implementations of LSVI for
    Gaussian variational families, with optimal one-iteration cost in $d$.

    An unbiased estimate of the OLS~\eqref{eq:ols_fullcov} is given by
    \begin{equation}
        \hat{\gamma} = N^{-1}\sum_{i=1}^{N}
        t(Z_i)f(\mu+C Z_i),\qquad Z_1,\dots, Z_N
        \overset{\text{i.i.d.}}{\sim} N(0, I_d).
    \end{equation}
    We define $\hat{\eta}$ as the estimate obtained by plugging $\hat{\gamma}$ into~\eqref{eq:mappingfullcov} of Theorem~\ref{th:full_cov_gaussian}.
    The mean-field case is treated in a similar manner.
    See Algorithms~\ref{alg:lsvi_krasnoselskii_mfg} and~\ref{alg:lsvi_krasnoselskii_fullcov}.
    \begin{figure}[H]
        \centering
        \begin{minipage}{0.48\textwidth}
            \begin{algorithm}[H]
                \caption{LSVI-MF (mean-field Gaussian family)}\label{alg:lsvi_krasnoselskii_mfg}
                \begin{algorithmic}
                    \Require $(\mu_0, \sigma_0^2): \sigma_{0,i} > 0, \, i \in [1, d]$, $N \geq 1$
                    \State $(\hat{\mu}_0, \hat{\sigma}^2_0) \gets (\mu_0, \sigma_0^2)$
                    \State $\hat{\eta}_0 \gets (-\infty, -\mu/\hat{\sigma}^2_0, -\frac{1}{2\hat{\sigma}^2_0})$
                    \While{not converged}
                        \State $Z_1, \dots, Z_N \sim \mathcal{N}(0, I)$
                        \State $\hat{\gamma}_{t+1} \gets \frac{1}{N} \sum_{i=1}^N t(Z_i)f(\hat{\mu}_t + \hat{\sigma}_t \otimes Z_i)$
                        \State Compute $\hat{\eta}'_{t+1}$ using~\eqref{eq:nicolas_scheme1_bis}
                        \State $\varepsilon_t \gets \texttt{stepsize}(\hat{\gamma}_{t+1}, \hat{\eta}'_{t+1}, \hat{\eta}_t, Z_{1:N})$
                        \State $\hat{\eta}_{t+1} \gets \varepsilon_t \hat{\eta}'_{t+1} + (1 - \varepsilon_t)\hat{\eta}_t$
                        \State $\hat{\mu}_{t+1} \gets -\frac{1}{2}\hat{\eta}_{1,t}\hat{\eta}^{-1}_{2,t+1}$
                        \State $\hat{\sigma}^2_{t+1} \gets -\frac{1}{2}\hat{\eta}^{-1}_{2,t+1}$
                        \vspace{1.8pt}
                    \EndWhile
                \end{algorithmic}
            \end{algorithm}
        \end{minipage}
        \hfill
        \begin{minipage}{0.48\textwidth}
            \begin{algorithm}[H]
                \caption{LSVI-FC (full-covariance Gaussian family)}\label{alg:lsvi_krasnoselskii_fullcov}
                \begin{algorithmic}
                    \Require $\mu_0, \Sigma_0 \succ 0$, $N \geq 1$
                    \State $(\hat{\mu}_0, \hat{\Sigma}_0) \gets (\mu_0, \Sigma_0)$
                    \State $\hat{\eta}_0 \gets (-\infty, -\Sigma^{-1}\mu, -\frac{1}{2}\vect{\hat{\Sigma}^{-1}})$
                    \While{not converged}
                        \State $\hat{C}_t \gets \text{Cholesky}(\hat{\Sigma}_t)$
                        \State $Z_1, \dots, Z_N \sim \mathcal{N}(0, I)$
                        \State $\hat{\gamma}_{t+1} \gets \frac{1}{N} \sum_{i=1}^N t(Z_i)f(\hat{\mu}_t + \hat{C}_t Z_i)$
                        \State Compute $\hat{\eta}'_{t+1}$ using~\eqref{eq:mappingfullcov}
                        \State $\varepsilon_t \gets \texttt{stepsize}(\hat{\gamma}_{t+1},\hat{\eta}'_{t+1}, \hat{\eta}_t, Z_{1:N})$
                        \State $\hat{\eta}_{t+1} \gets \varepsilon_t \hat{\eta}'_{t+1} + (1 - \varepsilon_t)\hat{\eta}_t$
                        \State $\hat{\mu}_{t+1} \gets -\frac{1}{2}\hat{\eta}^{-1}_{2,t+1}\hat{\eta}_{1,t+1}$
                        \State $\hat{\Sigma}_{t+1} \gets  -\frac{1}{2}\unvec(\hat{\eta}_{2,t+1})^{-1}$
                    \EndWhile
                \end{algorithmic}
            \end{algorithm}
        \end{minipage}
    \end{figure}

    \section{Numerical experiments}\label{sec:experiments}
    We consider three examples: one where SGD may be used to minimise the
    KL objective, and two where it may not, because the
    reparameterisation trick is not possible: distributions $q$ in $\mathcal{Q}$ are discrete, $\log \pi$ is not differentiable,
    or because the log-derivative trick yields noisy estimates~\citep{Williams1992}.

    In the first example (logistic regression), we compare all three LSVI\footnote[1]{Python package: \url{https://github.com/ylefay/LSVI}} instances
    with other gradient-based KL minimisation procedures, including ADVI, NGD, and a
    gradient-free procedure for Gaussian mixtures.
    In the second and third examples (variable selection and Bayesian synthetic likelihood, BSL), since SGD is not available, we assess the
    approximation error of LSVI relative to the true posterior using exact Bayesian inference.

    \subsection{Logistic regression}\label{sec:Logistic regression}

    Given data $(x_i,y_i)\in \bbR^{d}\times \{-1, 1\}$, $i=1,\ldots, n$, the
    posterior distribution of a logistic regression model is:
    $    \pi(\beta) \propto p(\beta) \prod_{i=1}^{n} F(y_i
    x_i^{\top}\beta)$
    where $F(x) = 1 / (1 + e^{-x})$ and   $p(\beta)$ is a (typically Gaussian)
    prior over the parameter $\beta$. This type of posterior is often close to a
    Gaussian, and is a popular benchmark in Bayesian computation
    \citep{Chopin2017Leave}. See Appendix~\ref{app:logistic} for a summary of the considered
    datasets and the priors.

    Whenever applicable, we compare LSVI (Algorithms~\ref{alg:lsvi_krasnoselskii},~\ref{alg:lsvi_krasnoselskii_mfg},~\ref{alg:lsvi_krasnoselskii_fullcov})
    with NGD and ADVI. For NGD, the gradients are obtained via
    JAX autodifferentiation~\citep{jax2018github} and the FIM is estimated via Monte Carlo.
    For ADVI, we use the standard implementations given by pyMC3~\citep{Salvatier2016} and
    Blackjax~\citep{cabezas2024blackjaxcomposablebayesianinference} with default step size schedules (that is,
    a modification of Adam and RMSProp for pyMC, and comparable fixed step sizes for Blackjax).
    In addition, we provide a comparison of LSVI (Algorithm~\ref{alg:lsvi_krasnoselskii})
    in low dimension with the gradient-free iteration for Gaussian mixtures (GMMVI,~\citep[][]{pmlr-v80-arenz18a})
    which is a fair comparison since GMMVI and LSVI~\cref{alg:lsvi_krasnoselskii} have the same complexity in this case.
    In addition, we illustrate the compatibility of our proposed methods with
    subsampling procedures for large datasets~\citep{10.5555/2997189.2997285,JMLR:v14:hoffman13a}
    to reduce the cost of the log-likelihood evaluations.

    \Cref{fig:pima_short} summarises this comparison for the Pima dataset (full-covariance case).
    One sees that LSVI (\cref{alg:lsvi_krasnoselskii}) converges
    essentially in one step, LSVI-FC (\cref{alg:lsvi_krasnoselskii_fullcov})
    converges in less than 100 steps for linearly decreasing step sizes.
    For such a low-dimensional dataset ($d=9$), LSVI remains competitive with
    LSVI-FC since it converges faster, and the matrices it needs to invert are
    small. LSVI performs comparably to NGD and GMMVI, but is less noisy (with or without an adaptive schedule~\ref{alg:recipe_backracking_variance_control}).
    We consider larger and more challenging datasets as recommended by~\citep{Chopin2017Leave}.
    \Cref{fig:mnist_variable_selection} (left) does the same comparison for the MNIST dataset (mean-field covariance),
    In Appendix~\ref{app:extra_numerics},~\Cref{fig:sonar} for the Sonar dataset (full-covariance) and~\Cref{fig:census} for the Census-Income dataset (mean-field covariance with subsampling).
    This time, inverting the FIM is too costly (e.g., $2015\times 2015$ for Sonar), so we only use the tailored schemes
    LSVI-MF and LSVI-FC.
    ~\cref{app:logistic} contains extra details and results for all datasets in~\cref{tab:logistic_data}, including runtimes and memory usage (\cref{table:perf}),
    average cost time per iteration with respect to $N$ (\cref{fig:mnist_runtime,fig:sonar}), loss vs elapsed time and
    classification performance (\cref{fig:mnist_miss}), details on the considered schedules for the step sizes (\cref{tab:schedules}).
    \begin{figure}[ht]
        \begin{subfigure}[ht]{0.48\columnwidth} %
            \begin{center}
                \centerline{
                    \includegraphics[height=4.75cm]{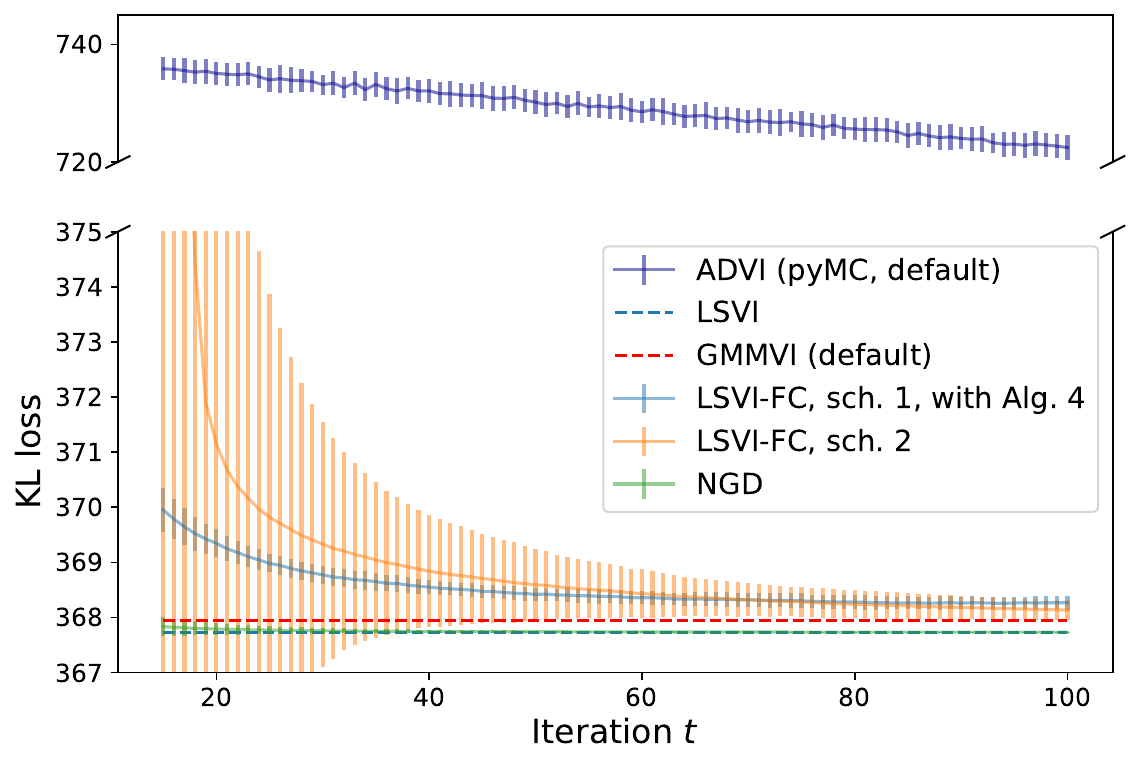}
                }
            \end{center}
        \end{subfigure}
        \hfill %
        \begin{subfigure}[ht]{0.48\columnwidth} %
            \begin{center}
                \centerline{
                    \includegraphics[height=4.75cm]{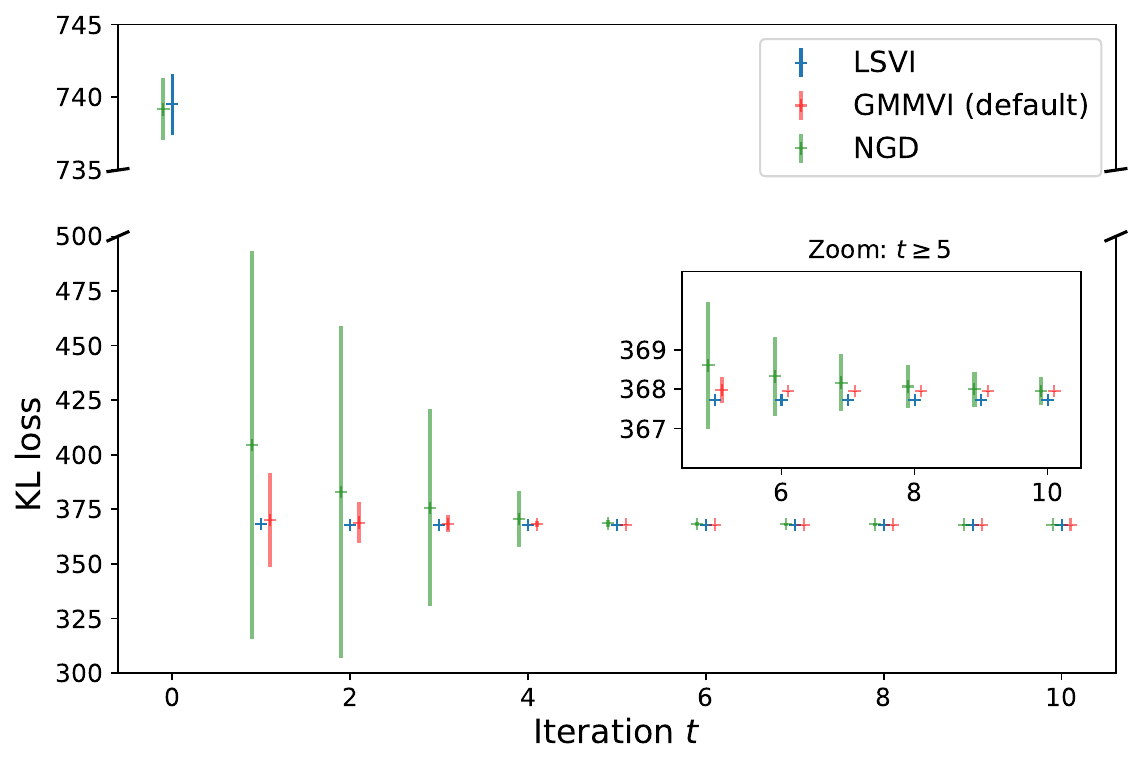}
                }
            \end{center}
        \end{subfigure}

        \caption{
            Logistic regression, Pima data, full-covariance
            approximation. KL divergence (up to an unknown constant) between the variational approximation and the
            posterior, as a function of the number of iterations. Left: truncated from iteration $t\geq 20$ for better readability. Right: focus on GMMVI, LSVI and NGD. Mean over $100$ repetitions and one standard deviation interval (\texttt{jax.numpy.std}).}
        \label{fig:pima_short}
    \end{figure}

    \begin{figure}[ht]
        \begin{subfigure}[ht]{0.48\columnwidth} %
            \begin{center}
                \centerline{
                    \includegraphics[height=4.75cm]{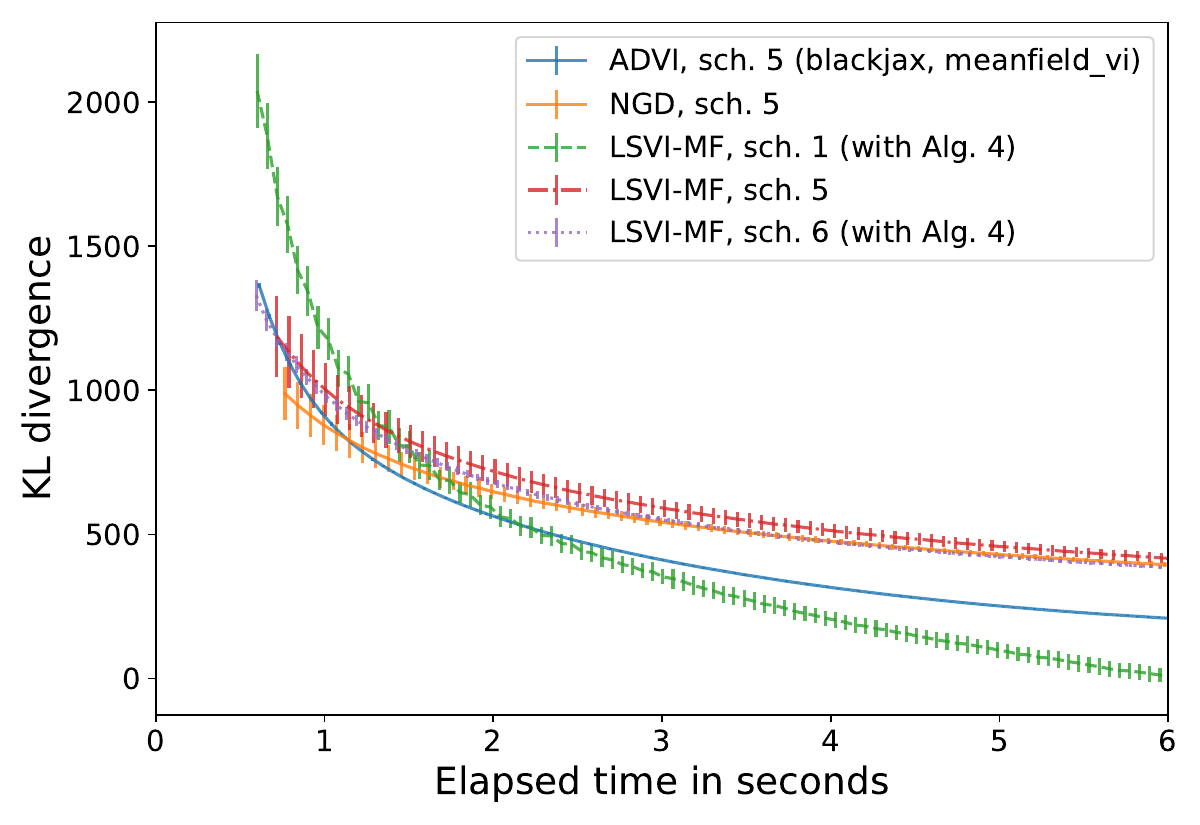}
                }
            \end{center}
        \end{subfigure}
        \hfill %
        \begin{subfigure}[ht]{0.48\columnwidth} %
            \begin{center}
                \centerline{
                    \includegraphics[height=4.8cm]{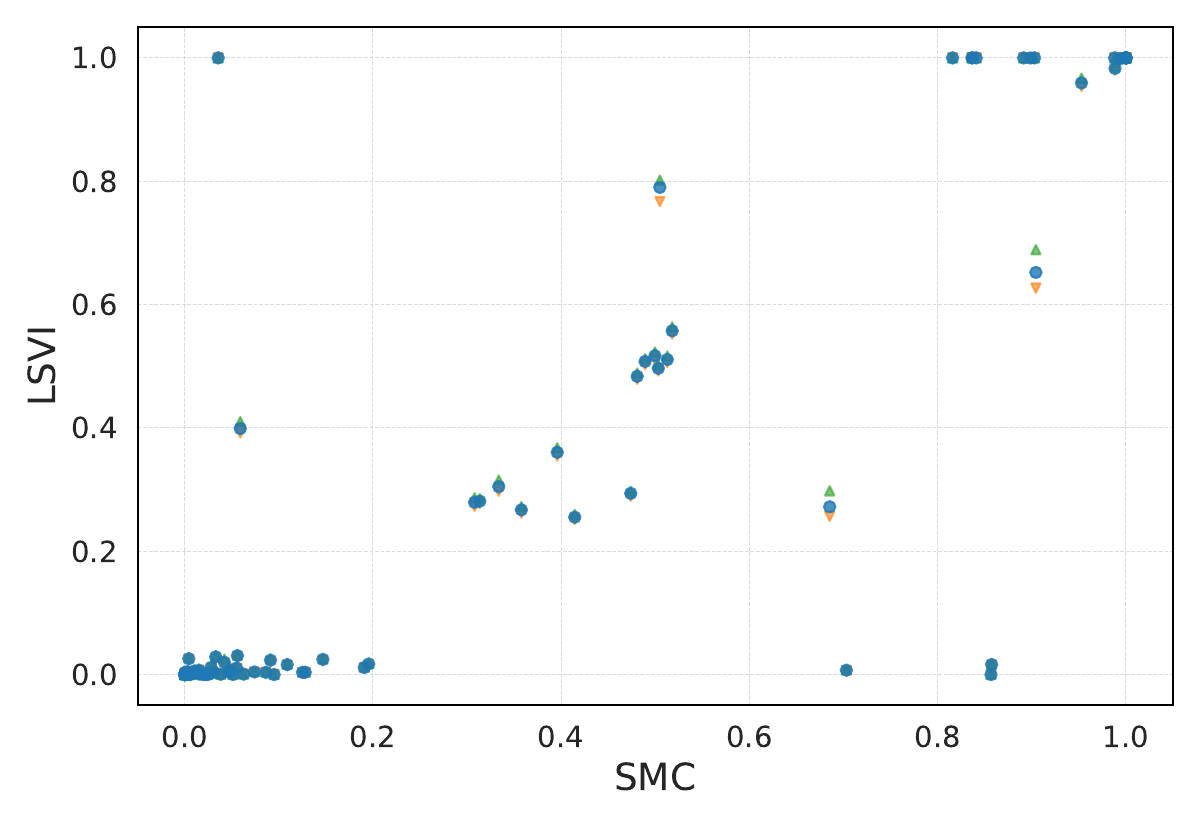}
                }
            \end{center}
        \end{subfigure}

        \caption{Left: Logistic regression, MNIST, mean-field
        approximation. KL divergence between the variational approximation and the
        posterior, as a function of the elapsed time. Truncated from iteration $t\geq 10$.
        Mean over $100$ repetitions and one standard deviation interval. Right: Variable selection example, posterior marginal probabilities
            $\pi(\gamma_i=1|\mathcal{D})$: LSVI vs SMC. (LSVI: 100 repetitions, the min-max intervals are reported with arrows, SMC: 3 repetitions).}
        \label{fig:mnist_variable_selection}
    \end{figure}

    \subsection{Variable selection}
    \label{subsec:selection}

    Given a dataset $\mathcal{D}=(x_i, y_i)_{i=1,\dots,n}$,
    $x_i\in\bbR^d$, $y_i\in \bbR$,
    the variable selection task in Bayesian linear regression may be modelled as
    $y_i = x_i^\top \diag(\gamma) \beta + \sigma \varepsilon_i, \varepsilon_i\sim N(0, 1)$,
    where $\gamma \in \{0, 1\}^d$ is a vector of inclusion variables,
    which is assigned a prior distribution that is a product of Bernoulli$(p)$; e.g.,
    $p=1/2$.
    If  $(\beta, \sigma^2)$ is assigned a conjugate prior,  the marginal posterior
    distribution $\pi(\gamma|\mathcal{D})$ (with $\beta$, $\sigma^2$ integrated
    out) admits a closed-form expression, the support of which is $\{0, 1\}^d$. It
    is therefore natural to set $\mathcal{Q}$ to the family of Bernoulli products,
    i.e.  $q(\gamma) = \prod_{i=1}^{d} q_i^{\gamma_i} (1 - q_i)^{1 - \gamma_i}$ with
    $q_i\in [0, 1]$ for $i=1,\dots, d$. This family is discrete, which precludes a
    reparametrisation trick, and the application of ADVI.

    \Cref{fig:mnist_variable_selection} (right) compares the posterior inclusion
    probabilities, i.e. $\pi(\gamma_i=1|\mathcal{D})$ approximated either through LSVI
    (\cref{alg:lsvi_krasnoselskii}), or the Sequential Monte Carlo (SMC) sampler of
    \citet{Schäfer2013}, for the concrete dataset ($d=92$).
    This dataset is challenging as it generates strong
    posterior correlations between the $\gamma_i$. Despite this,
    LSVI gives a reasonable approximation of the true posterior.
    To the best of our knowledge, this is the first time variational inference is
    implemented for variable selection using the Bernoulli product family.
    See \cref{app:extra_vs} for extra numerical results and
    more details on the prior, the data, and the implementation.

    \subsection{Bayesian synthetic likelihood}\label{subsec:bsl}

    BSL is a popular way to perform likelihood-free
    inference, that is, inference on a parametric model which is described only
    through a simulator: one is able to sample $Y\sim P_\theta$, but not to compute
    the likelihood $p(y|\theta)$; see~\citet{bsl_review} for a review.

    BSL requires to specify $s(y)$, a low-dimensional summary of the data and
    assumes that $s(y)\sim N\left(b(\theta), \Sigma(\theta)\right)$, leading to
    posterior density  $\pi(\theta) \propto p(\theta) N\left(s(y); b(\theta),
                                                           \Sigma(\theta)\right)$, where $p(\theta)$ is the prior. Since functions $b$
    and $\Sigma$ are unknown, they are replaced by empirical moments
    $\hat{b}(\theta)$, $\hat{\Sigma}(\theta)$, computed from
    simulated data. This makes BSL, and in particular its Markov Chain Monte Carlo (MCMC) implementations,
    particularly CPU-intensive, as the data simulator must be run many times.
    Furthermore, each evaluation of $\pi$ is corrupted with
    noise, making it impossible to differentiate $\log \pi$. Note that, in general,
    the data simulator is too complex to implement some form of reparametrisation
    trick, or the application of automatic differentiation procedures.

    We consider the toad's displacement example from~\citep{MARCHAND201763}, which
    has been considered in various BSL papers~\citep{bsl_review,AnNottDrovandi}.
    The model is parameterised by $\theta=(\alpha, \gamma, p_0)\in \bbR^+\times \bbR^+ \times [0, 1]$.
    See \cref{app:extra_bsl} for more details on the model.
    We implement both LSVI-MF and LSVI-FC.
    For the former, we use a family of truncated Gaussian distributions, while for
    the latter, we re-parametrise the model in terms of $\xi=f(\theta)$, where $f$
    is one-to-one transform between $\Theta$ and $\bbR^d$.
    The top panel of \Cref{fig:bsl}  shows that both LSVI algorithms converge quickly.
    The bottom panel shows that the full-covariance LSVI approximation matches
    the posterior obtained via MCMC, at a fraction of the CPU cost, see~\cref{table:perf}.
    Again, we refer to \cref{app:extra_bsl} for more details on the implementation of
    either LSVI or MCMC.
    \begin{figure}[ht]
        \begin{minipage}{0.48\columnwidth} %
            \centering
            \includegraphics[height=4.5cm]{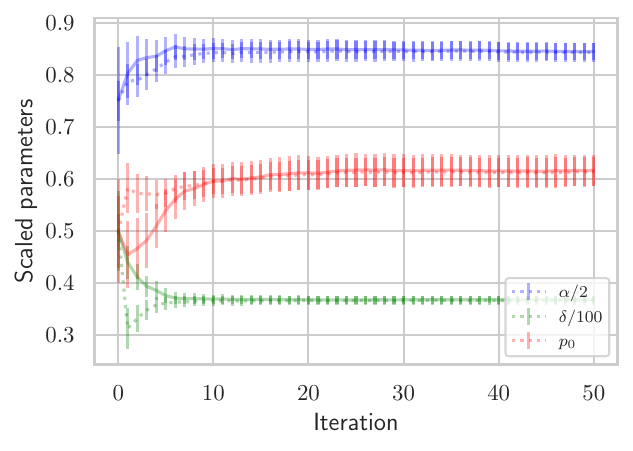}
        \end{minipage}
        \hfill %
        \begin{minipage}{0.48\columnwidth} %
            \centering
            \includegraphics[height=4.5cm]{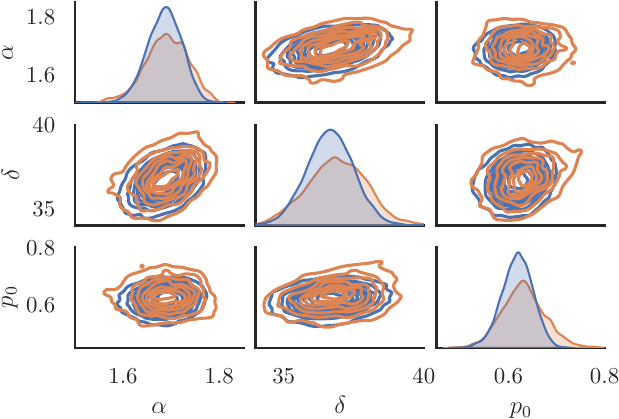}
        \end{minipage}

        \caption{
            Left: Variational approximations of each coordinate of $\theta$ with one standard
            deviation interval, normalised. Truncated Gaussian: solid line. Full
            covariance Gaussian: dashed line.
            Right: Full-covariance Gaussian variational approximation (blue),
            MCMC approximation (orange).
        }
        \label{fig:bsl}
    \end{figure}

    \section{Limitations}
    \label{sec:limitations}
    The current approach is limited to exponential families; mixture of exponential families may be tackled by adapting the expectation-maximisation approach of~\citet{pmlr-v80-arenz18a}, or by building on existing applications of NGD VI methods to mixtures of exponential families~\citep{pmlr-v97-lin19b,10.1214/13-BA858}.
    For Gaussian approximations, if the posterior contains directions that are strongly non-Gaussian, then conditional-Gaussian strategies like integrated nested Laplace approximations may be applied~\citep{rue_laplace09}.
    In discrete exponential families, independence can be lifted by considering tree-structured dependencies, which are quite flexible, see, e.g.,~\citet[][]{wainwright2008graphical}.

    \begin{ack}
        The first author gratefully acknowledges partial support from the Magnus Ehrnrooth foundation.
        The authors thank Sam Power, Mohammad Emtiyaz Khan and anonymous reviewers for insightful remarks on a preliminary version.
    \end{ack}

    \bibliography{main}
    \bibliographystyle{unsrtnat}

    \newpage
    \appendix
    \addtocontents{toc}{\protect\setcounter{tocdepth}{2}}
    \tableofcontents

    \section{Notations}
    \label{appendix:notations}
    For any vector $u\in \bbR^p$, we denote by $u^{-1}\in\bbR^p$ the component-wise inverse of $u$.
    We denote by $\otimes$ the Kronecker product.
    For any matrix $U\in \bbR^{p\times q}$, we denote by $\vect(U)$ the $p\times q$ vector obtained by vertically stacking the columns of $U$, and by $\unvec$ the inverse operation satisfying $\unvec(\vect(U)) = U$.
    For any square matrix $U\in \bbR^{p\times p}$, let $\diag(U)$ be the $p$ vector composed of the diagonal components of $U$ and let $\norm{U}$ be the spectral norm of $U$.

    For any set $A$, $\mathcal{U}(A)$ denotes the uniform distribution over $A$.
    $N(\mu,\Sigma)$ denotes the Gaussian distribution with mean $\mu$ and covariance matrix $\Sigma$, and $N(\mu,\sigma^2)$ with $\sigma^2 = (\sigma_1^2,\ldots, \sigma_d^2)^{\top}$ denotes the Gaussian distribution with mean $\mu$ and diagonal covariance matrix $\diag(\sigma^2)$.

    The $O$ is the usual big-$O$ notation, i.e., $A_n = O(B_n)$ for some sequences $A_n$, $B_n$, let it be reals, vectors or matrices, if there exists a constant $C>0$ such that for $N$ large enough and all $n\geq N$, $\norm{A_n}\leq C\norm{B_n}$.
    We write $A_n = \OO_P(1)$ for a sequence of random variables $(A_n)$ such
    that, for any $\varepsilon > 0$,  there exists a constant $B>0$ such that
    $P(\norm{A_n} > B) \leq \varepsilon$ for $n$ large enough.

    For any definite positive matrix $\Sigma$, we denote by $C = \textup{Chol}(\Sigma)$ the unique lower triangular matrix such that $C C^{\top} = \Sigma$.

    \section{Adaptive schedule algorithm}
    \label{appendix:algorithm_eps}
    \begin{algorithm}[H]
        \caption{Variance control and backtracking strategy}
        \label{alg:recipe_backracking_variance_control}
        \begin{algorithmic}[1]
            \Require $\varepsilon'>0$, $\eta\in \mathcal{V}$, $\eta'\in \bbR^m$, $N\geq 1$, $X_1, \dots, X_N \stackrel{\mathrm{i.i.d.}}{\sim} q_{\eta}$, $u>0$
            \State $\varepsilon \gets \varepsilon'$
            \While{$\varepsilon \eta'+(1-\varepsilon)\eta \notin \mathcal{V}$}
                \State $\varepsilon \gets \varepsilon/2$
            \EndWhile
            \State $\eta \gets \varepsilon\eta'+(1-\varepsilon)\eta$
            \State $\hat{m} \gets N^{-1}\sum_{i=1}^{N} f(X_i)-\eta^{\top}s(X_i)$
            \State $\hat{v}^2 \gets N^{-1}\sum_{i=1}^{N}(f(X_i) - \hat{m})^2$
            \If{$\hat{v}\geq u$}
                \State $\varepsilon \gets \textup{min}(\varepsilon, u/\hat{v})$
            \EndIf \\
            \Return $\varepsilon$
        \end{algorithmic}
    \end{algorithm}

    \section{Extra details on numerical experiments}\label{app:extra_numerics}

    \subsection{Runtime analysis}\label{app:runtime}
    All the experiments were conducted using Python 3.13, jax 0.5 with GPU support, Cuda 12.5, and using float64.
    The hardware specifications are CPU AMD EPYC 7702 64-Core Processor and GPU NVIDIA A100-PCIE-40GB, except for SONAR, Census and MNIST datasets where EPYC 7713 and NVIDIA A100-PCIE-80GB were used.
    See~\cref{table:perf}.
    \begin{table}[ht]
        \caption{For all conducted experiments, runtimes and max memory usage, across $5$ repetitions. $T$ is the number of iterations and $N$ the number of samples whenever applicable.
        LR = Logistic regression, BSL = Bayesian Synthetic likelihood, MF Gaussian
            = mean-field Gaussian, Gaussian = full-covariance Gaussian.}
        \resizebox{\columnwidth}{!}{%

            \begin{tabular}{l*{3}{c}r}
                Experiment & \multicolumn{3}{c}{Runtime (seconds)} & max resident set size (memory usage) \\
                & mean (std)          & min     & max     & \,\, (gigabytes) \\
                \hline
                BSL Gaussian, Alg.~\ref{alg:lsvi_krasnoselskii}, $(N, T)=(100, 50)$ (JAX)                                                & $72.9$ ($\pm 2.8$)  & $71.5$  & $77.8$  & $1.07$\\
                BSL Truncated MF Gaussian, Alg.~\ref{alg:lsvi_krasnoselskii}, $(100, 50)$ (JAX)                                          & $137.5$ ($\pm 0.6$) & $137.3$ & $138.7$ & $1.05$\\
                BSL MCMC, Blackjax (JAX)                                                                                                 & $268.1$ ($\pm 3.4$) & $266.5$ & $274.3$ & $1.16$           \\
                \hline
                Variable Selection, Alg.~\ref{alg:lsvi_krasnoselskii} sch. $3$, $(5\times 10^4, 25)$                          & $60.8$ ($\pm 0.3$)  & $60.3$  & $61.1$ & $0.42$\\
                Variable Selection, SMC                                                                                                  & $290.7$ ($\pm 1.7$) & $284.1$ & $298$   & $0.45$           \\
                \hline
                LR Gaussian, \emph{PIMA}, Alg.~\ref{alg:lsvi_krasnoselskii} sch. $3$, $(10^4, 10)$ (JAX)                     & $1.6$ ($\pm 1.4$) & $1.0$ & $4.1$ & $1.28$\\
                // NGD, ($10^4, 10$) (JAX)                                                                                               & $2.2$ ($\pm 1.5$)   & $1.51$  & $4.9$   & $3.29$           \\
                // Alg.~\ref{alg:lsvi_krasnoselskii_fullcov} sch. $1$, $(10^5, 100)$, (JAX)                                              & $4.3$ ($\pm 1.7$) & $3.5$                  & $7.3$       & $1.28$                     \\
                // Alg.~\ref{alg:lsvi_krasnoselskii_fullcov} sch. $2$, $(10^5, 100)$, (JAX)                                              & $3.4$ ($\pm 0.1$) & $3.3$ & $3.6$ & $1.28$\\
                // PyMC ADVI, ($T=10^4$)                                                                                                 & $5.9$ ($\pm 0.3$)   & $5.5$   & $6.4$   & $0.84$           \\
                // GMMVI, ($10^4, 10$) (TensorFlow)                                                                                      & $4.8$ ($\pm 3.8$)   & $3.0$   & $11.6$  & $4.57$           \\
                \hline
                LR Gaussian, \emph{SONAR}, Alg.~\ref{alg:lsvi_krasnoselskii_fullcov} sch. $1$, $(10^5, 100)$,  (JAX) & $5.1$ ($\pm 0.3$) & $5.0$ & $5.6$ & $3.11$ \\
                // Alg.~\ref{alg:lsvi_krasnoselskii_fullcov} sch. $2$, $(10^5, 100)$, (JAX)                                              & $5.1$ ($\pm 1.3$) & $4.4$ & $7.4$ & $3.11$\\
                // PyMC ADVI, ($10^4$)                                                                                                   & $9.6$ ($\pm 1.3$)   & $7.4$   & $10.5$  & $3.21$           \\
                \hline
                LR MF Gaussian, \emph{MNIST}, Alg.~\ref{alg:lsvi_krasnoselskii_mfg} sch. $1$, $(10^4, 500)$, (JAX) & $19.1$ ($\pm 1.0$) & $18.7$& $21.8$ & $2.36$\\
                // Alg.~\ref{alg:lsvi_krasnoselskii_mfg} sch. $5$, $(10^4, 500)$, (JAX)                                                  & $10.4$ ($\pm 0.1$) & $10.3$ & $10.6$ & $2.36$\\
                // Alg.~\ref{alg:lsvi_krasnoselskii_mfg} sch. $6$, $(10^4, 500)$, (JAX)                                                  & $18.9$ ($\pm 1.1$) & $22.1$ & $18.5$ & $2.36$\\
                // Blackjax ADVI, $(10^4, 500)$, (JAX)                                                                                   & $19.5$ ($\pm 1.0$)  & $18.9$  & $22.2$  & $3.34$           \\
                // NGD (MF), sch. $5$, $(10^4, 500)$ (JAX)                                                                               & $25.3$ ($\pm 2.2$)  & $29.2$  & $24.1$  & $4.38$           \\
                \hline
                LR MF Gaussian, subsampling, \emph{Census}, Alg~\ref{alg:lsvi_krasnoselskii_mfg}, sch. $5$, $(10^4, 10^4, P=10^3)$ (JAX) & $12.9$ ($\pm 0.2$)& $12.8$ & $13.3$ & $3.30$ \\
                // Alg~\ref{alg:lsvi_krasnoselskii_mfg}, sch. $6$, $(10^4, 10^4, 10^3)$ (JAX)                                            & $13.0$ ($\pm 0.03$)                    & $12.9$                & $13.0$& $3.30$\\
                // NGD (MF), sch. $5$, $(10^4, 10^4, 10^3)$ (JAX)                                                                        & $70.0$ ($\pm 1.7$)  & $69.0$  & $73.1$                & $3.33$     \\
            \end{tabular}
        }

        \label{table:perf}
    \end{table}

    \subsection{Logistic regression}\label{app:logistic}

    \paragraph{Data}
    The Sonar (CC BY 4.0 License) and the Census Income (CC BY 4.0 License) datasets are available in the UCI repository while the Pima dataset (CC0: Public Domain License) is in the example datasets of Python package particles (License MIT v0.4,~\citep[][Ch. 1]{Chopin2020SMC})
    and MNIST (CC BY-SA 3.0 License) is available at~\url{https://github.com/pjreddie/mnist-csv-png}.
    We use the following standard \citep[e.g.,][]{Chopin2017Leave}  pre-processing strategy for Pima,
    Sonar and Census-Income datasets: we add an intercept, and we rescale the covariates so
    that non-binary predictors are centred with standard deviation $0.5$, and the
    binary predictors are centred $0$ and range $1$. For the third dataset (MNIST
    dataset), we restrict ourselves to the binary classification
    problem by selecting pictures labelled $0$ or $8$. The gray-scale features
    which range between $0$ and $255$ are normalised to be between $0$ and $1$. No
    intercept is added. For the Census Income dataset, the categorical variables are
    mapped using one-hot encoding.

    \begin{table}[ht]

        \caption{Logistic regression example: summary of datasets and approximation
        families, in parentheses the batch-size}
        \centering
        \begin{tabular}{|c|c|c|c|}
            \hline
            Dataset              & Gaussian family & $d$ & $n$           \\ \hline
            Pima                 & full-covariance & 9   & 768           \\ \hline
            Sonar                & full-covariance & 62  & 128           \\ \hline
            Census (subsampling) & mean-field      & 48  & 49 000 (1000) \\ \hline
            MNIST                & mean-field      & 784 & 11,774        \\ \hline
        \end{tabular}
        \label{tab:logistic_data}
    \end{table}

    \paragraph{Prior}
    For all datasets except MNIST, the prior $\pi(\beta)$ is a zero-mean Gaussian
    distribution with diagonal covariance matrix, and the covariances are set to
    $25$ for all the other covariates, except for the intercept, for which it is
    set to $400$. For the MNIST dataset, the prior is a Gaussian
    distribution with zero-mean and covariance matrix $25I_n$.

    \paragraph{Initialisations, schedules and number of samples}
    The initialisation distributions for all datasets except MNIST are standard normal distributions.
    The initialisation for the MNIST dataset is $N(0, e^{-2}I_n)$.
    The learning schedules $(\varepsilon_t)$ are obtained via Algorithm~\ref{alg:recipe_backracking_variance_control} with specific inputs $(u^2,\varepsilon_t)$ summarised in~\cref{tab:schedules} along with the number of samples $N$.

    \begin{table}[ht]
        \caption{Logistic regression setup. Left: Inputs to Algorithm~\ref{alg:recipe_backracking_variance_control} by dataset. Right: Schedule index reference.}
        \centering
        \begin{minipage}{0.7\textwidth}
            \centering
            \resizebox{\textwidth}{!}{%
                \begin{tabular}{|c|c|c|c|}
                    \hline
                    Dataset & Algorithm                                 & Schedule input $(u^2,\varepsilon_t)$          & Samples $N$ \\ \hline
                    Pima    & Alg.~\ref{alg:lsvi_krasnoselskii}         & $(\infty,1)$                                  & $10^4$      \\ \hline
                    Pima    & NGD                                       & $(\infty, 1/(t+1))$                           & $10^4$      \\ \hline
                    Pima    & Alg.~\ref{alg:lsvi_krasnoselskii_fullcov} & $(10,1)$, $(\infty, 1/(t+1))$                 & $10^5$      \\ \hline
                    Sonar   & Alg.~\ref{alg:lsvi_krasnoselskii_fullcov} & $(10,1)$, $(\infty, 1/(t+1))$                 & $10^5$      \\ \hline
                    MNIST   & Alg.~\ref{alg:lsvi_krasnoselskii_mfg}     & $(10,1)$, $(\infty, 10^{-3})$, $(10,10^{-3})$ & $10^4$ \\ \hline
                    MNIST   & Blackjax (meanfield\_vi), NGD (MF)        & $(\infty, 10^{-3})$                           & $10^4$      \\ \hline
                    Census  & Alg.~\ref{alg:lsvi_krasnoselskii_mfg}     & $(10,10^{-3})$, $(\infty, 10^{-3})$ & $10^4$ \\ \hline
                    Census  & NGD (MF)                                  & $(\infty, 10^{-3})$                           & $10^4$      \\ \hline
                \end{tabular}
            }
        \end{minipage}%
        \hfill
        \begin{minipage}{0.28\textwidth}
            \centering
            \resizebox{\textwidth}{!}{%
                \begin{tabular}{|c|c|}
                    \hline
                    \# & Schedule input $(u^2,\varepsilon_t)$ \\ \hline
                    1  & $(10, 1)$                            \\ \hline
                    2  & $(\infty, 1/(t+1))$                  \\ \hline
                    3  & $(\infty, 1)$                        \\ \hline
                    4  & $(1, 1)$                             \\ \hline
                    5  & $(\infty, 10^{-3})$                  \\ \hline
                    6  & $(10, 10^{-3})$                      \\ \hline
                \end{tabular}
            }
        \end{minipage}
        \label{tab:schedules}
    \end{table}

    \begin{figure}[ht]
        \begin{subfigure}[ht]{0.48\columnwidth}
            \begin{center}
                \centerline{
                    \includegraphics[height=4.75cm]{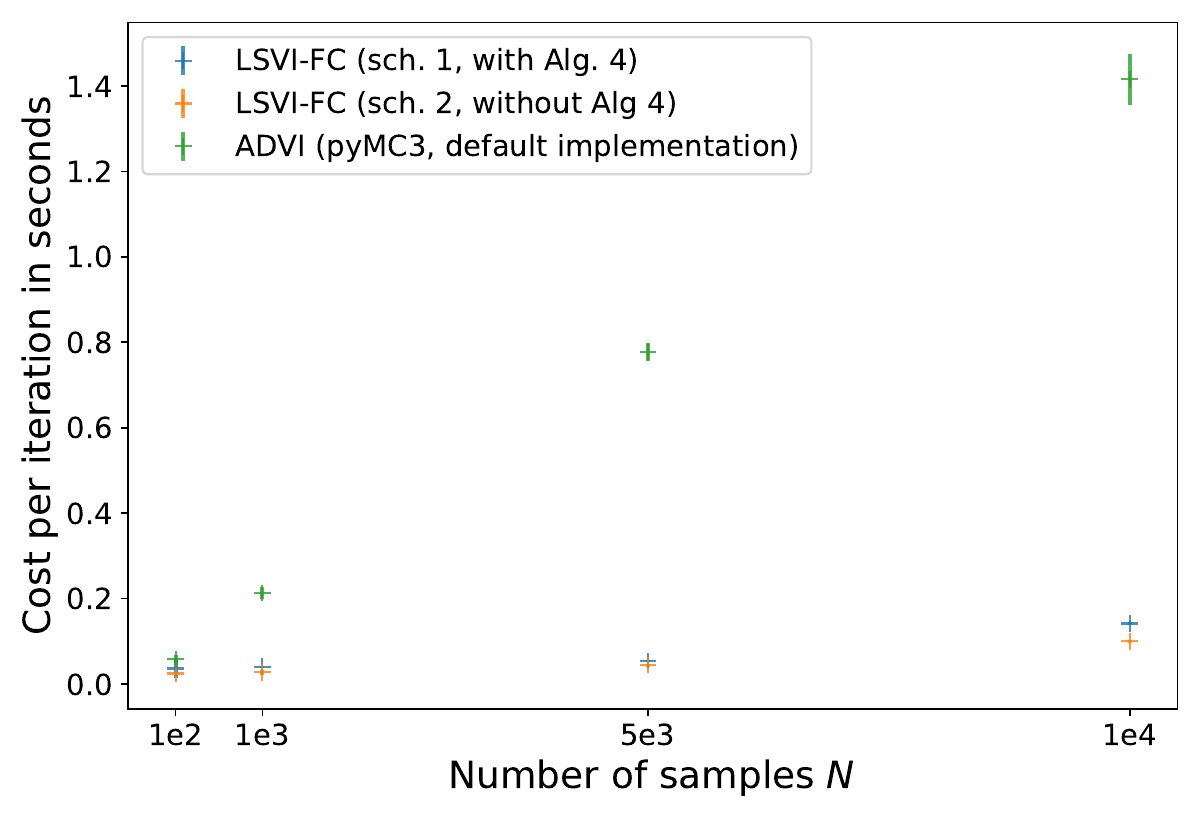}
                }
            \end{center}
        \end{subfigure}
        \begin{subfigure}[ht]{0.48\columnwidth}
            \begin{center}
                \centerline{
                    \includegraphics[height=4.75cm]{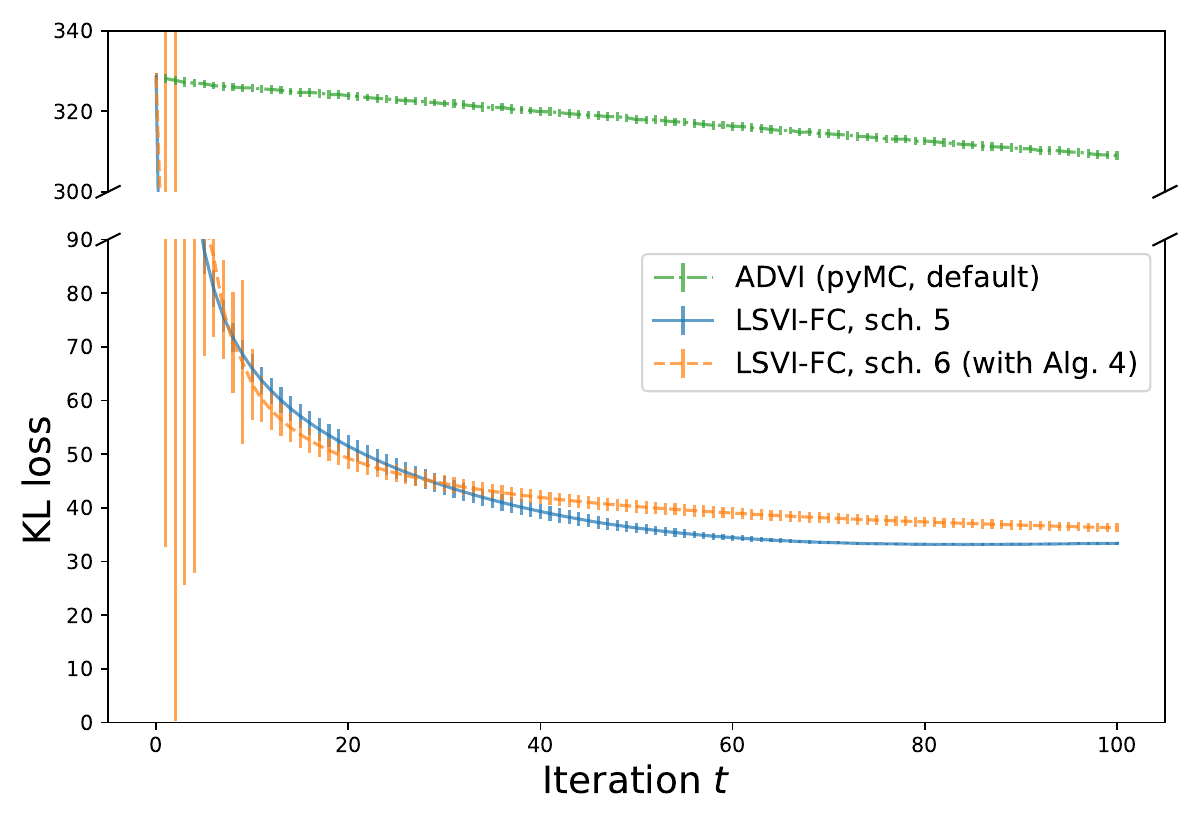}
                }
            \end{center}
        \end{subfigure}

        \caption{Logistic regression posterior, Sonar data, full-covariance approximation, LSVI-FC and ADVI implementations. Left: average cost per iteration in seconds as a function of the number of samples $N$, mean over $5$ repetitions with $2$ std interval.
        Right: KL divergence (up to an unknown constant) between current Gaussian variational approximation and the
        posterior, as a function of $t$, mean over $100$ repetitions with one standard deviation interval.}
        \label{fig:sonar}
    \end{figure}

    \paragraph{MNIST}
    The PyMC3 (License Apache 2.0 v. 5.22,~\citep[][]{Salvatier2016}) implementation fails in this context, and we resort to the stochastic gradient descent (SGD) implementation in Blackjax (License Apache 2.0 v1.2.5,~\citep[][]{cabezas2024blackjaxcomposablebayesianinference}) of the mean-field ADVI Algorithm.
    For SGD, we set the learning rate to $0.001$ and the number of samples for the Monte Carlo gradient estimates to $10^4$.
    See Figure~\ref{fig:mnist_runtime} for the average cost per iteration in seconds, and the same plot as Figure~\ref{fig:mnist_variable_selection} with respect to elapsed time.

    \begin{figure}[ht]
        \begin{subfigure}[ht]{0.48\columnwidth}
            \centerline{
                \includegraphics[height=4.75cm]{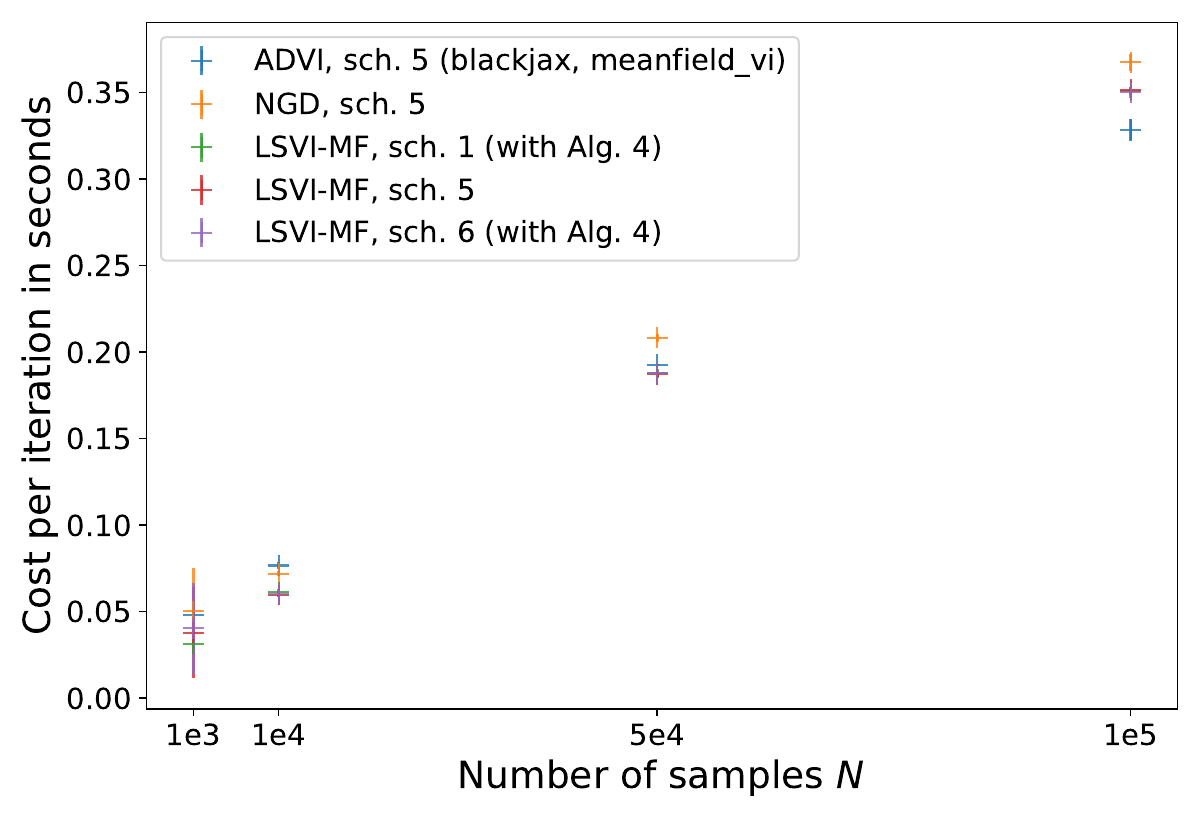}
            }
        \end{subfigure}
        \begin{subfigure}[ht]{0.48\columnwidth}
            \centerline{
                \includegraphics[height=4.75cm]{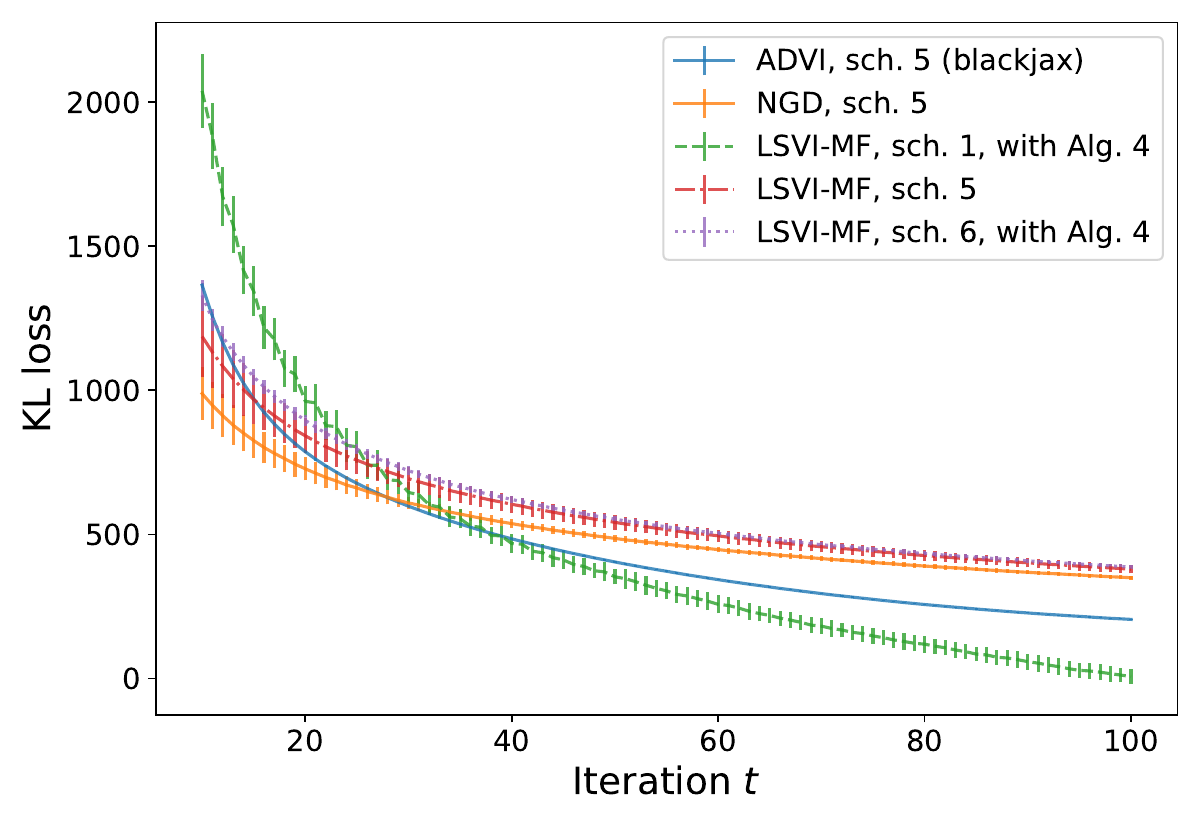}
            }
        \end{subfigure}

        \caption{
            Logistic regression posterior, MNIST data, diagonal covariance
            approximation, LSVI-MF, NGD (JAX) and Blackjax (meanfield\_vi) implementations. Left: average cost per iteration in seconds as a function of the number of samples $N$, mean over $5$ repetitions with $2$ std interval.
            Right: KL divergence (up to an unknown constant) between current Gaussian variational approximation and the
            posterior, as a function of $t$, mean over $100$ repetitions with one standard deviation interval.
        }
        \label{fig:mnist_runtime}
    \end{figure}

    In addition, we provide missclassification rate for the logistic regression model using the mean (of the Gaussian approximation) as the regression parameter, see~\cref{fig:mnist_miss}.

    \begin{figure}[ht]
        \begin{subfigure}[ht]{0.48\columnwidth}
            \centerline{
                \includegraphics[height=5cm]{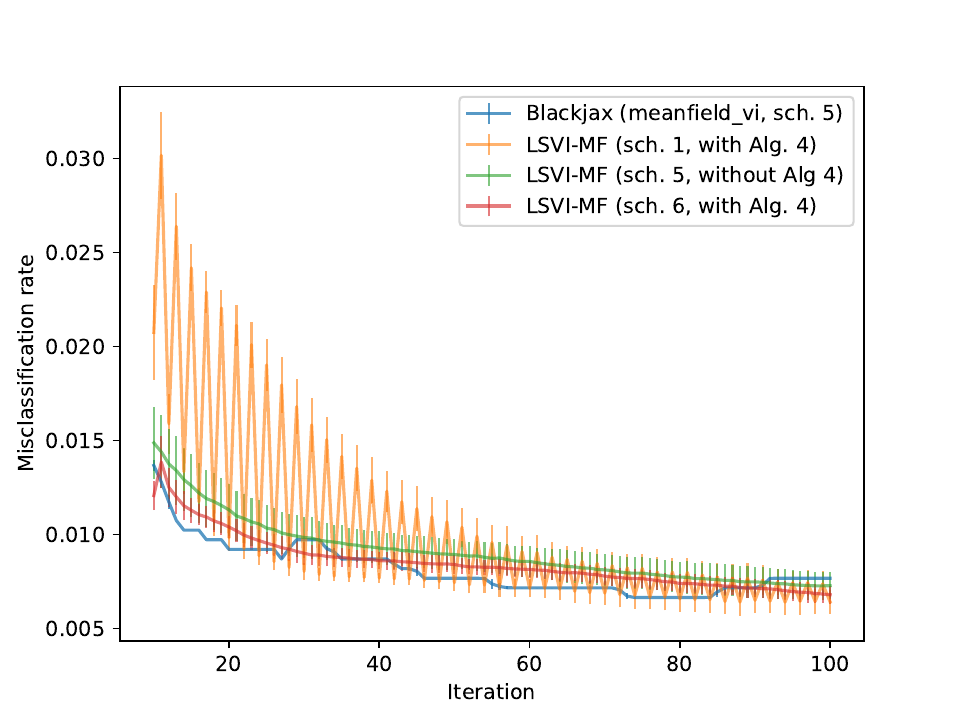}
            }
        \end{subfigure}
        \begin{subfigure}[ht]{0.48\columnwidth}
            \centerline{
                \includegraphics[height=5cm]{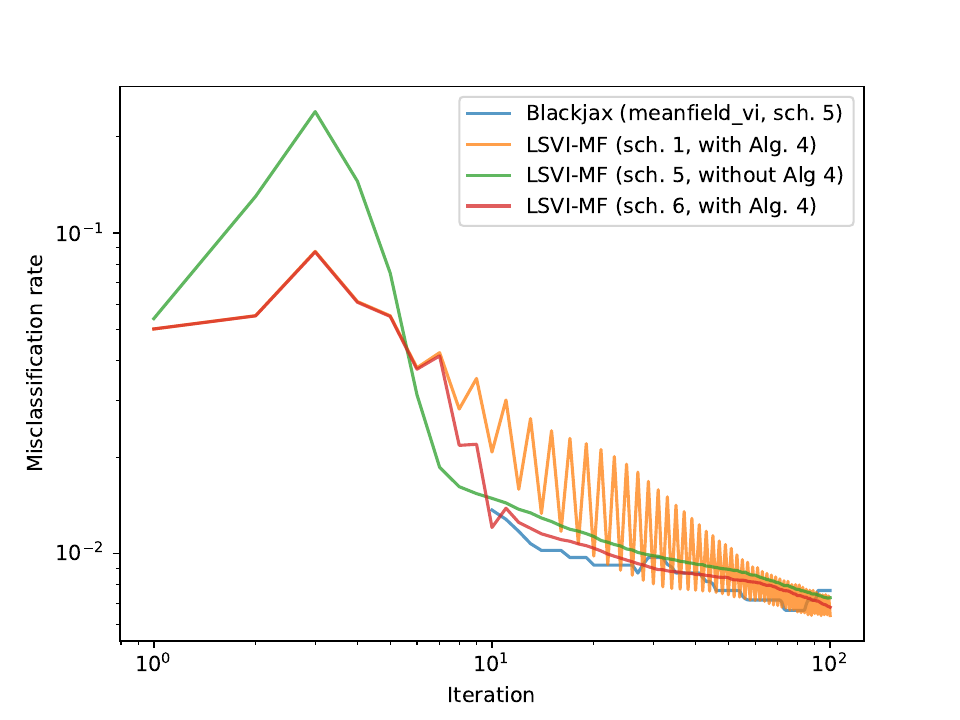}
            }
        \end{subfigure}

        \caption{
            Logistic regression posterior, MNIST data, diagonal covariance
            approximation, LSVI-MF and Blackjax (meanfield\_vi) implementations. Top: Misclassification rate as a function of the iterations, mean over $100$ repetitions with $1$ standard deviation.
            Bottom: same in log-log axis.
        }
        \label{fig:mnist_miss}
    \end{figure}

    \paragraph{Subsampling (Census dataset)}
    At each iteration $t$, a new batch is sampled uniformly with replacement from the dataset:
    \begin{equation}
        \hat{f}(\beta) = \log \hat{\pi}(\beta) = \log p(\beta) + \sum_{i=1}^{P} \log p(y_{U_i} | x_{U_i}, \beta) + \sum_{i=1}^{P} \log p(x_{U_i}),
    \end{equation}
    where $U_1,\ldots, U_P \sim \mathcal{U}({1, \ldots, n})$.
    A new batch is drawn at each iteration. The batch size is $P = 10^4$.
    We also use $\hat{f}$ for evaluating the KL loss. See~\cref{fig:census}.

    \begin{figure}[ht]
        \centerline{
            \includegraphics[height=7cm]{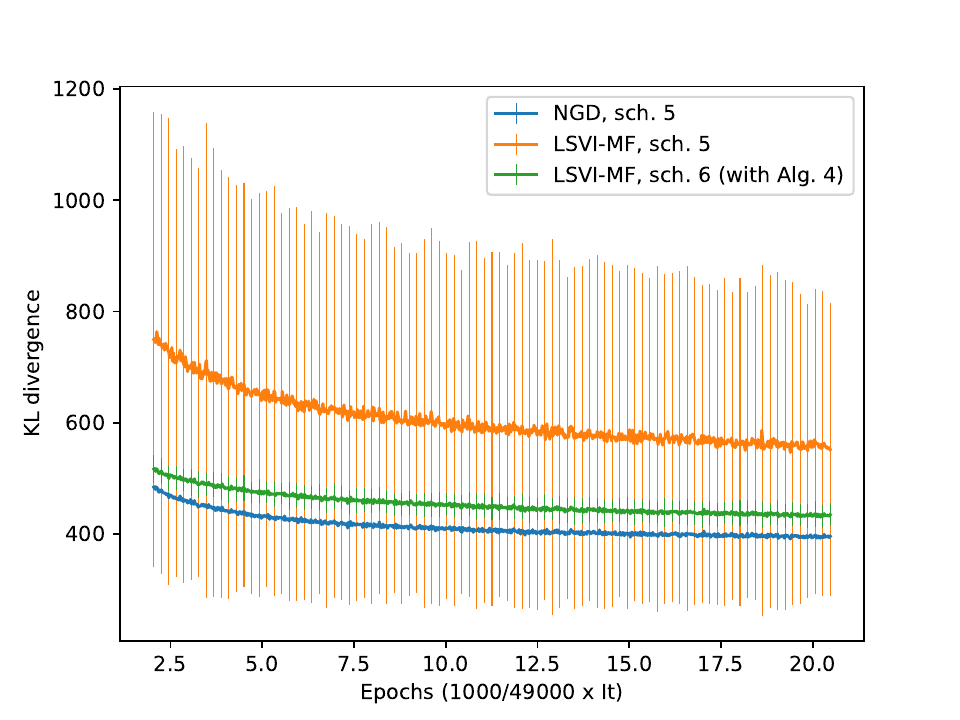}
        }

        \caption{
            Logistic regression posterior. KL loss for the Census-Income dataset (mean-field, with subsampling), mean over $100$ repetitions with $1$ standard deviation.
        }
        \label{fig:census}
    \end{figure}

    \subsection{Variable selection}\label{app:extra_vs}

    \paragraph{Dataset}
    The \emph{Concrete Compressive Strength} dataset~\citep{YEH19981797} is made of
    1030 observations and 8 initial predictors denoted by \textup{C, W, CA, FA,
        BLAST, FASH, PLAST}, and \textup{A}. We enrich the dataset by adding predictors
    computed from the existing predictors. $5$ new predictors, \textup{LG\_C,
        LG\_W, LG\_CA, LG\_FA, LG\_A}, where \textup{LG\_X} stands for the logarithm of
    the corresponding feature \textup{X}. The cross-product of the predictors is
    also added, resulting in $78$ new predictors. Finally, we add an intercept. The
    total of possible predictors is $d=92$.

    \paragraph{Prior}

    The hierarchical prior on $\beta, \sigma^2,\gamma$ is given by
    \begin{equation*}
        \pi(\beta\mid \sigma, \gamma, Z) = N\left(0, \sigma^2
        v^2\diag(\gamma)\right),\quad\pi(\sigma^2) = \textup{InvGamma}(w/2, \lambda
        w/2), \quad \pi(\gamma) = \mathcal{U}(\{0, 1\}^d).
    \end{equation*}

    We follow the recommendations of~\citep{mcculloch} by setting the
    hyperparameters to $w=4.0$, $\lambda=\hat{\sigma}_1^2$ and $v^2=10/\lambda$,
    where $\hat{\sigma}_1^2$ is the variance estimate of the residuals for the
    saturated linear model $\gamma=(1, \dots, 1)$.

    \paragraph{Close-form expression for $\pi(\gamma|\mathcal{D})$}

    For a model $\gamma\in \{0, 1\}^d$, let $Z_{\gamma} = [Z_i]_{i/\gamma_i = 1}$
    be the selected covariates and let $b_{\gamma} = Z_{\gamma}^{\top}y$. Consider
    the Cholesky decomposition $C_{\gamma, v}C_{\gamma, v}^{\top} =
    Z_{\gamma}^{\top}Z_{\gamma}+v^{-2}I_{\norm{\gamma}_1}$, and define the
    least squares estimate for the residuals based on the model given by $\gamma$,
    $\hat{\sigma^2}_{\gamma, v} = \frac{1}{d}(y^{\top}y - (C_{\gamma,
    v}^{-1}b_{\gamma})^{\top}(C_{\gamma, v}^{-1}b_{\gamma}))$. Then, the
    log-posterior for $\gamma$ up to the log-partition constant is given by
    \begin{equation*}
        \label{eq:posterior_selection} \log \pi(\gamma\mid \mathcal{D}) = -
        \sum_{i=1}^{\norm{\gamma}_1} \log c_{i,i}^{(\gamma, v)}  - \norm{\gamma}_1
        \log(v) - \frac{w+d}{2}\log(w\lambda /d +\hat{\sigma^2}_{\gamma, v}).
    \end{equation*}

    \paragraph{SMC, extra numerical results}

    As a benchmark, we compute the posterior marginal probabilities of inclusion
    using a waste-free variant of the tempering SMC algorithm of \cite{Schäfer2013}
    with chain length $P=10^4$ and $N=10^5$ particles.

    Given any probability vector $p\in [0, 1]^d$, we plot the histogram of the
    variable $\log(\pi^*(\gamma)/q(\gamma\mid  p))$ with $\gamma \sim q(\cdot \mid
    p)$ (Bernoulli product). The pendant for the SMC discrete measure
    is obtained by replacing $q$ with the SMC empirical measure $\hat{\pi}^*$. In
    Figure~\ref{fig:concrete_scores}  we plot the histograms when $\gamma$ is
    distributed according to the SMC empirical distribution $\hat{\pi}^*$, and
    when $\gamma$ is distributed according to three different mean-field Bernoulli
    distributions  $\gamma\sim q(\cdot\mid p)$: i) $p = \left(1, \frac{1}{2},\dots,
                                                            \frac{1}{2}\right)$,  i.e., the intercept is always included and the other
    coordinates has $0.5$ probability to be included,  ii) the LSVI estimates, and
    iii) the marginal posterior probabilities estimated via SMC.

    \begin{figure}[ht]
        \begin{center}
            \centerline{\includegraphics[height=7cm]{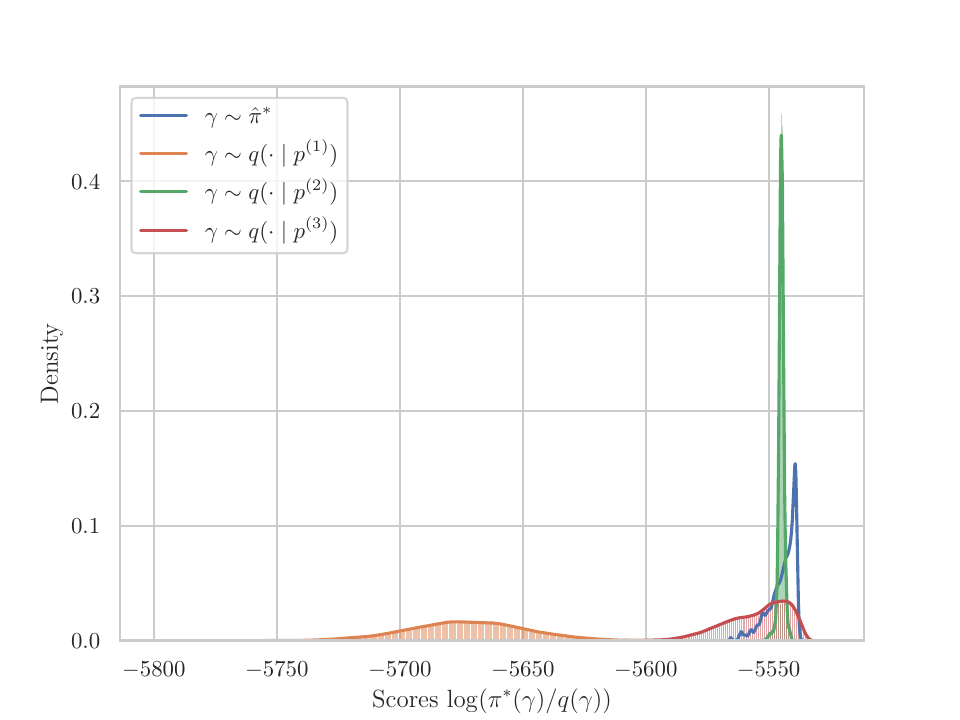}}
        \end{center}
        \caption{Variable selection example: distribution of scores $\log
        \pi^*(\gamma)/q(\gamma)$ when $\gamma\sim q=\hat{\pi}^*$, when
            $\gamma\sim q=q(\cdot\mid  p^{(i)})$ with $p^{(i)}$ given either by i),
            ii) or iii)).}
        \label{fig:concrete_scores}
    \end{figure}

    \subsection{BSL and toads displacement model}\label{app:extra_bsl}

    \paragraph{Model}

    The model assumes that $M$ toads move along a one-dimensional axis during $D$
    days. For any day $1\leq t\leq D$, the toad labelled by $1\leq i\leq M$, has
    observed position $y_{i,t}$. During the night of day $t+1$, the toad moves
    according to an overnight displacement, $\delta y_{i,t}$ which is assumed to be
    a Lévy-alpha stable distribution with stability parameter $\alpha $ and scale
    parameter $\delta$. With probability $p_0$, the toad takes refuge at $y_{i,t} +
    \delta y_{i,t}$. With probability $1-p_0$, the toad moves back to one the
    previously explored sites $y_{i,t'}$ with $t'$ chosen uniformly in $1,\dots,
    t$. Finally, for any day $1\leq t<D$ the observed position is
    \begin{equation}
        y_{i,t+1} = B_{i,t}(y_{i,t} + \delta y_{i,t}) + (1-B_{i,t}) y_{i,t'},
    \end{equation}
    with $B_{i,t}\sim \textup{Ber}(p_0)$, $t'\sim \mathcal{U}\{1,\dots, t\}$ and
    $\delta y_{i,t}\sim \textup{Lévy-alpha}(\alpha, \delta)$, all variables being
    mutually independent. The initial position $y_{i}^{(1)}$ is set to $\delta
    y_{i}^{(0)}\sim \textup{Lévy-alpha}(\alpha, \delta)$. The model is parametrised
    by $\theta = (\alpha, \delta, p_0)\in [1,2]\times[0, 100]\times [0,
        0.9]\coloneq \Theta$. Simulating from the previous model yields the observed
    data $Y = (y_{i,t})_{1\leq t \leq D, 1\leq i \leq M}$.

    \paragraph{Summary statistic}
    The summary statistic is the concatenation of $4$ sets of statistics of size
    $12$ resulting in a total statistic of dimension $48$. Each subset is computed
    from the displacement information of lag $l$ for $l\in \{1, 2, 4, 8\}$, denoted
    by $Y_l = (\abs{y_{i,t}-y_{i,t+1}})_{1\leq t\leq D-1, 1\leq i\leq M}$. If the
    displacement from $t$ to day $t+1$ of the toad $i$, $Y_l^{(i,
    d)}=\abs{y_{i,t}-y_{i,t+1}}$ is less than $10$, it is assumed the toad has not
    moved. The first statistic is the number of pairs $(i,t)$ such that $Y_l^{(i,
    t)}\leq 10$. We then compute the median displacement and the log difference
    between adjacent $p$-quantiles with $p=0, 0.1, \ldots, 1$ for all the
    displacements greater than $10$.

    \paragraph{Truncated Gaussian distributions approximation}
    The dataset $Y$ is generated with $(M, D)=(66, 63)$ and underlying $\theta^* =
    (1.7, 35, 0.6)$. The mean and covariance estimates are obtained with
    $P=100$ samples for each evaluation of the synthetic likelihood.
    We follow the methodology of~\citep{bsl_review} and use~\citep{warton}
    shrinkage covariance estimate given by $\hat{\Sigma} = \hat{D}^{1/2}(\gamma \hat{C}+(1-\gamma)I)\hat{D}^{1/2}$
    where $\hat{D}$ is the estimated correlation matrices and $\gamma=0.5$ is
    the regularization parameter. The prior distribution is the uniform distribution
    over $\Theta$. The variational family is the set of truncated Gaussian
    distributions over $\Theta$ with diagonal covariances. The initial distribution
    has mean $\mu = (1.5, 50, 0.5)$ and diagonal covariances $\sigma^2 = (0.05, 10,
    0.01)$. We run Algorithm~\ref{alg:lsvi_krasnoselskii} with $N=100$ samples and
    $T=50$ iterations, the step sizes are obtained by
    Algorithm~\ref{alg:recipe_backracking_variance_control} with $u=1$ and linearly
    decreasing step sizes.

    \paragraph{Full-covariance Gaussian distributions on transformed parameters}

    To constrain the parameters $\theta$, we perform inference on the transformed
    parameters $g(\theta) = \textup{logit}(g_i(\theta_i))$ with $g_i(\theta_i) =
    (\theta_i - a_i)/b_i$, with $a_i$, $b_i$ such that $g_i$ scales $\theta_{i}$ to
    $[0, 1]$. The prior distribution on the unconstrained parameters $\theta'$ is
    $\mathds{1}_{\Theta}\circ g^{-1}(\theta') \times \abs{\nabla g^{-1}
        (\theta')}$. The variational family is the set of full-covariance Gaussian
    distributions. The initial distribution for $\theta'$ has mean $\mu' = (0, 0,
    0)$ and covariance matrix $\Sigma' = \diag(0.1, 0.1, 0.1)$. The benchmark is
    obtained via MCMC with random walk step $N(0, 0.1I_3)$, the acceptance
    rate over the chain of length $10^4$ is roughly $31\%$, excluding
    the first $10^3$ states.

    \section{Proofs}
    \label{appendix:proofs}

    \subsection{First order condition and critical points of the uKL objective}
    \label{proof:first_order_condition_proof}
    \begin{proof}[Proof of Proposition~\ref{prop:first_order_optimality}]
        Injecting $\pi=\exp\left(f\right)$ and $q_\eta=\exp\left(\eta^{\top}s\right)$ into the objective function, we obtain
        \begin{equation}
            \label{eq:kl_minimization}
            \uKL(q_{\eta}\mid \pi) = \int ( \eta^{\top}s - f) q_{\eta}
            + \int \pi - \int \exp (\eta^{\top}s) .
        \end{equation}
        Using~\eqref{eq:kl_minimization}, we have
        \begin{equation}
            \label{eq:first_order_optimality_2}
            \nabla_{\eta}\uKL = \int ss^{\top}q_{\eta} \eta
            - \int s f q_{\eta}  .
        \end{equation}
        Writing the first-order optimality condition for the following minimisation problem
        \begin{equation}
            \label{eq:objective_1_kl_min}
            \begin{split}
                \eta^{*} \in \argmin_{\eta\in \mathcal{V}} \uKL(q_{\eta}\mid \pi)
            \end{split}
        \end{equation}
        and applying~\eqref{eq:first_order_optimality_2}, then dividing by $Z_{\eta}<\infty$, yield $\bbE_{\eta}[ss^{\top}]\eta= \bbE_{\eta}[f s]$.
        Let $s = (1, \bar{s}^{\top})^{\top}$ be some fixed statistic with first component $1$.
        Assume that $\eta=(\eta^{(0)}, \bar{\eta}^{\top})^{\top}\in \mathcal{V}$ is a critical point, i.e., $\nabla_{\eta} \uKL(q_\eta \mid \pi) = 0$.
        We have
        \begin{equation}
            \label{eq:gradfirstcomponent}
            \partial_{\eta^{(0)}} \uKL(q_{\eta}\mid \pi) = \eta^{\top}\int s q_{\eta} - \int f q_{\eta}.
        \end{equation}
        Injecting $\eta^{\top}s = \eta^{(0)} + \bar{\eta}^{\top}\bar{s}$ into~\eqref{eq:gradfirstcomponent}, setting $\nabla_{\eta^{(0)}} \uKL(q_{\eta}\mid \pi) = 0$ and normalising by $Z_{\eta}$ yields
        \begin{equation}
            \label{eq:eta0}
            \eta^{(0)} = \bbE_{\bar{\eta}}\left[f-\bar{\eta}^{\top}\bar{s}\right],
        \end{equation}
        from the definition of the KL divergence, we deduce
        \begin{equation}
            \eta^{(0)} = -\KL(\bar{q}_{\bar{\eta}}\mid \bar{\pi})
            + \log \left(Z(\pi) / \int_{\setX} \exp\left( \bar{\eta}^\top
                                                       \bar{s}\right)\right).
        \end{equation}
    \end{proof}

    \begin{proof}[Proof of Proposition~\ref{prop:min_ukl_kl}]
        \label{proof:min_ukl_kl}
        We have
        \[\KL(\bar{q}_{\bar{\eta}}\mid \bar{\pi})
        = Z^{-1}_{\bar{\eta}}\int e^{\bar{\eta}^{\top}\bar{s}}(\bar{\eta}^{\top}\bar{s}-f)
        - \log Z_{\bar{\eta}} + \log Z_{\pi}.
        \]
        Computing the gradient of the KL requires computing the gradients of $Z_{\eta}$, $\log Z_{\eta}$, and $Z_{\eta}^{-1}$.
        We have $\nabla_{\bar{\eta}} Z_{\bar{\eta}} = \int \bar{s} e^{\bar{\eta}^{\top}\bar{s}}$, and $\nabla_{\bar{\eta}} \log(Z_{\bar{\eta}}) = Z_{\bar{\eta}}^{-1}\nabla_{\bar{\eta}}Z_{\bar{\eta}} = \bbE_{\bar{\eta}} [\bar{s}]$.
        Similarly, $\nabla_{\bar{\eta}} Z_{\bar{\eta}}^{-1} = -Z_{\bar{\eta}}^{-2}\nabla_{\bar{\eta}}Z_{\bar{\eta}} = -Z_{\bar{\eta}}^{-1}\bbE_{\bar{\eta}}[\bar{s}]$.
        Then, using the previous equalities, the gradient of the KL is
        \begin{equation}
            \label{eq:gradkl}
            \begin{split}
                \nabla_{\bar{\eta}} \KL(\bar{q}_{\bar{\eta}}\mid \bar{\pi}) &= \nabla_{\bar{\eta}} \left(Z_{\bar{\eta}}^{-1}\int e^{\bar{\eta}^{\top}\bar{s}} (\bar{\eta}^{\top}\bar{s} - f) \right) - \nabla_{\bar{\eta}} \log(Z_{\bar{\eta}})\\
                &= \nabla_{\bar{\eta}}Z_{\bar{\eta}}^{-1}\times \int e^{\bar{\eta}^{\top}\bar{s}}(\bar{\eta}^{\top}\bar{s}-f) + Z_{\bar{\eta}}^{-1}\nabla_{\bar{\eta}}\left( \int e^{\bar{\eta}^{\top}\bar{s}}(\bar{\eta}^{\top}\bar{s}-f)\right) - \nabla_{\bar{\eta}} \log(Z_{\bar{\eta}})\\
                &=  - \bbE_{\bar{\eta}}[\bar{s}] \bbE_{\bar{\eta}}[\bar{\eta}^{\top}\bar{s}- f] + \bbE_{\bar{\eta}}[\bar{s}\bar{s}^{\top}\bar{\eta} - \bar{s}f + s] - \bbE_{\bar{\eta}}[s] \\
                & = - \bbE_{\bar{\eta}}[\bar{s}] \bbE_{\bar{\eta}}\left[\bar{\eta}^{\top}\bar{s}- f\right]
                + \bbE_{\bar{\eta}}\left[\bar{s}\bar{s}^{\top}\bar{\eta}- \bar{s}f\right].
            \end{split}
        \end{equation}
        Now, let us compute the gradient of the $\uKL$ objective with respect to $\eta=(\eta^{(0)}, \bar{\eta}^{\top})^{\top}$.
        Using
        \begin{equation}
            s s^{\top} = \begin{pmatrix}
                             1 & \bar{s}^{\top}\\ \bar{s} & \bar{s}\bar{s}^{\top}
            \end{pmatrix}
        \end{equation}
        to expand~\eqref{eq:first_order_optimality_2} yields
        \begin{equation}
            \begin{split}
                \label{eq:ukl_grad0}
                \partial_{\eta^{(0)}} \uKL(q_{\eta}\mid \pi) &= \int (\eta^{(0)} + \bar{s}^{\top}\bar{\eta})q_{\eta} - \int f q_{\eta},\\
                \nabla_{\bar{\eta}} \uKL(q_{\eta}\mid \pi) &= \int (\bar{s}\eta^{(0)} + \bar{s}\bar{s}^{\top}\bar{\eta}) q_{\eta} - \int \bar{s} f q_{\eta}.
            \end{split}
        \end{equation}
        Assume that $\nabla_{\eta}\uKL(q_{\eta}\mid \pi) = 0$, then from~\eqref{eq:ukl_grad0}, we obtain $\bbE_{\bar{\eta}} [\bar{s}\eta^{(0)} + \bar{s}\bar{s}^{\top}\bar{\eta}] - \bbE[\bar{s}f] = 0$.
        Reinjecting the previous inequality into the gradient of the KL~\eqref{eq:gradkl} yields
        \begin{equation}
            \label{eq:gradklafterukl0}
            \nabla_{\bar{\eta}}\KL(\bar{q}_{\bar{\eta}}\mid \bar{\pi}) = - \bbE_{\bar{\eta}}[\bar{s}]\bbE_{\bar{\eta}}[\bar{s}^{\top}\bar{\eta}- f] - \bbE_{\bar{\eta}}[\bar{s}]\eta^{(0)}.
        \end{equation}
        Injecting the expression for $\eta^{(0)}$~\eqref{eq:eta0} into~\eqref{eq:gradklafterukl0} yields $\nabla_{\bar{\eta}}\KL(\bar{q}_{\bar{\eta}}\mid\bar{\pi}) = 0$.
        Conversely, the previous computations show that if $\nabla_{\bar{\eta}}\KL(\bar{q}_{\bar{\eta}}\mid \bar{\pi})=0$ and $\partial_{\eta^{(0)}}\uKL(q_{\eta}\mid \pi) = 0$, then $\nabla_{\eta}\uKL(q_{\eta}\mid \pi)=0$.
    \end{proof}

    \subsection{The exact LSVI is a natural gradient descent}
    \begin{proof}[Proof of Proposition~\ref{prop:gradient_descent}]
        Assumption~\ref{ass:minimality_and_regularity} ensures that for any $\eta\in \mathcal{V}$, $F_{\eta}$ is invertible (minimality assumption), and ensures the differentiability of all the involved functions (regularity).
        Let us denote by $\nabla_\eta l $ the Jacobian of $\eta\mapsto l(\eta)$.
        Let $\eta\in \mathcal{V}$, using~\eqref{eq:first_order_optimality_2}, we have
        \begin{equation}
            \label{eq:gradient_kl_proof}
            \nabla  l_\eta(\eta) = Z_{\eta}(F_{\eta}\eta - z_{\eta}),
        \end{equation}
        where $Z_{\eta}=Z(q_\eta)$ is the normalisation constant of $q_{\eta}$.
        Let $(\eta_t)$ be the sequence obtained via natural gradient descent given by~\eqref{eq:first_iteration_gd}.
        Then, by~\eqref{eq:first_iteration_gd} and~\eqref{eq:gradient_kl_proof}, we have
        \begin{equation}
            \begin{split}
                \eta_{t+1} &= \eta_t - \frac{\varepsilon_t}{Z_{\eta_{t}}} F_{\eta_{t}}^{-1}\nabla_{\eta}l(\eta_t)\\
                &= \eta_{t}- \varepsilon_t F_{\eta_{t}}^{-1}   (F_{\eta_{t}}\eta_{t} - z_{\eta_{t}})\\
                &= \left(1-\varepsilon_t \right) \eta_{t} + \varepsilon_t F_{\eta_{t}}^{-1}z_{\eta_{t}}\\
                &=\left(1-\varepsilon_t\right) \eta_{t} + \varepsilon_t \phi(\eta_{t}).
            \end{split}
        \end{equation}
        Thus, the LSVI iteration with learning schedule $(\varepsilon_t)$ given by~\eqref{eq:krasnoselskii-mann} is the natural gradient descent $(\eta_{t})$ with learning schedule $(\varepsilon_t/Z_{\eta^{(t)}})$ given by~\eqref{eq:first_iteration_gd}.
        Let us now prove~\eqref{eq:second_iteration_gd}.
        We have
        \begin{equation}
            \label{eq:nablaetaomega}
            \begin{split}
                \nabla_{\eta} \omega &=  \nabla^2_{\eta}Z\\
                &=  \int s s^{\top} q_{\eta}\dd\mu\\
                &=Z_{\eta}F_{\eta}.
            \end{split}
        \end{equation}
        By the chain rule and~\eqref{eq:nablaetaomega}, the Jacobian of $\eta \mapsto l(\eta)$ is
        \begin{equation}
            \label{eq:nablaetal}
            \begin{split}
                \nabla_{\eta} l  &=  \nabla_{\eta} \omega \times \nabla_{\omega}  l \\
                &= Z_{\eta} F_{\eta} \times \nabla_{\omega}  l .
            \end{split}
        \end{equation}
        Finally, injecting~\eqref{eq:nablaetal} into~\eqref{eq:first_iteration_gd} yields
        \begin{equation}
            \eta_{t+1} = \eta_{t} -  \varepsilon_t \nabla_{\omega} l(\omega(\eta_{t})),
        \end{equation}
        which is~\eqref{eq:second_iteration_gd}.
        This shows the first equivalence.
        Let $\omega_0\in \mathcal{W}$, and define $(\omega_t)$ as given by~\eqref{eq:mirror_descent}.
        The first order condition on the minimisation problem~\eqref{eq:mirror_descent} yields
        \begin{equation}
            \label{eq:mirror_descent1}
            \nabla Z^\star(\omega_{t+1}) = \nabla_{\omega}Z^\star(\omega_t)
            - \varepsilon_t \nabla_{\omega}l(\omega_t),
        \end{equation}
        but $\nabla Z^*(\omega_t) = \eta(\omega_t) = \eta_t$, thus~\eqref{eq:mirror_descent1} is exactly~\eqref{eq:second_iteration_gd}.
    \end{proof}

    \subsection{The exact LSVI mapping for mean-field Gaussian distributions}
    \label{proof:mfg}
    \label{appendix:mf}
    Let $s(x)\coloneq (1, x, x^2)^{\top}$ where $x = (x_1, \dots, x_d)$
    and $x^2 = (x_1^2,\dots, x_d^2)$.
    The set of admissible natural parameters is given by
    $\mathcal{V} = \bbR \times \bbR^d \times (\bbR^{-}\backslash\{0\})^d\times \subset
    \bbR^{m}$, $m=2d+1$. Let $\eta = (\eta^{(0)}, \eta^{(1),\top}, \eta^{(2),\top})^{\top}\in \mathcal{V}$.
    The natural mapping from $\eta$ to $(\mu, \sigma^2)$ is given by $T(\eta) \coloneq \left(-\frac{1}{2}\eta^{(1)}\otimes \eta^{(2), -1}, -\frac{1}{2}\eta^{(2), -1}\right)$ where $\otimes$ is the Kronecker product and $\eta^{(2), -1}$ is the component-wise inverse of $\eta^{(2)}$.

    \begin{lemma}[Reparametrisation of the regression in the mean-field case]
        \label{lemma:reparametrization_mfg}
        Let $X\sim N(\mu,\sigma^2)$ be a mean-field Gaussian distribution with $\mu, \sigma\in \bbR^d$, and $\sigma_i > 0$ for all $i\in\{1,\ldots, d\}$.
        Let $\eta=(\eta^{(0)},\eta^{(1),\top},\eta^{(2),\top}) \in \mathcal{V}$, $\eta^{(0)}\in \bbR$, $\eta^{(1)}\in \bbR^d$, $\eta^{(2)}\in \bbR^d$ be the natural parameter associated with $X$ for the statistic $s : x\in \bbR^d \mapsto (1, X, X^2)^{\top}\in \bbR^{1+2d}$.
        Let $t$ be given by~\eqref{eq:statistic_mfg}.
        For any $z\in\bbR^d$, let $x(z)=\mu +\sigma\otimes z$, if $Z\sim N(0, I)$, then $x(Z)\sim N(\mu, \sigma^2)$.
        Let $\gamma = (\gamma^{(0)}, \gamma^{(1),\top}, \gamma^{(2), \top})^{\top}\in \bbR^{2d+1}$, $\gamma^{(0)}\in\bbR$, $\gamma^{(1)}\in\bbR^d$, $\gamma^{(2)}\in \bbR^d$
        be defined component-wise by
        \begin{equation}
            \label{eq:reparam_mfg}
            \gamma = \begin{pmatrix}
                         \eta^{(0)} + \eta^{(1),\top}\mu + \eta^{(2),\top}(\mu_j^2 + \sigma_j^2)_j \\
                         \eta^{(1)}\otimes \sigma + 2 \eta^{(2)}\otimes \mu\otimes \sigma          \\
                         \sqrt{2}\eta^{(2)} \otimes \sigma^2                                   \\
            \end{pmatrix}.
        \end{equation}
        Then, for any $z\in\mathbb{R}^d$
        \begin{equation}
            \label{eq:equation_to_besatisfied_reparametrization_mfg}
            \gamma^{\top}t(z) = \eta^{\top}s(x(z)).
        \end{equation}

        \begin{proof}
            Let us identify $\gamma$ such that~\eqref{eq:equation_to_besatisfied_reparametrization_mfg} is satisfied.
            Suppose that for all $z\in\mathbb{R}^d$, we have~\eqref{eq:equation_to_besatisfied_reparametrization_mfg}, then
            \begin{equation}
                \begin{split}
                    \eta^{\top}s(x) &= \eta^{(0)} + \eta^{(1),\top} x + \eta^{(2),\top}x^2\\
                    &= \eta^{(0)}+\eta^{(1),\top}\mu + \eta^{(1),\top}(\sigma\otimes z) + \eta_{2}^{\top}(\mu_j^2)_j \\
                    &\qquad + 2\eta^{(2),\top}(\mu_j \sigma_j z_j)_j + \eta^{(2),\top}(\sigma_j^2 z_j^2)_j\\
                    &=\underbrace{ \eta^{(0)} + \eta^{(1),\top}\mu}_{\textup{terms in group $1$}} +\underbrace{(\eta^{(1)} \otimes \sigma)^{\top} z}_{\textup{term in group $2$}} + \underbrace{\eta^{(2),\top}(\mu_j^2)_j}_{\textup{term in group $1$}} \\
                    &\qquad + \underbrace{2 (\eta^{(2)}\otimes \sigma\otimes \mu)^{\top}z}_{\textup{term in group $2$}} + \underbrace{(\eta^{(2)}\otimes \sigma^2)^{\top}1_d}_{\textup{term in group $1$}} \\
                    &\qquad + \underbrace{(\eta^{(2)}\otimes\sigma^2)^{\top}(z_j^2-1)_j}_{\textup{term in group $3$}}\\
                    &=\gamma^{\top}t(z).
                \end{split}
            \end{equation}
            By identifying the factors in front of $1$ (group $1$), the $z_j$'s (group $2$), and the $z_j^2$'s (group $3$), we obtain~\eqref{eq:reparam_mfg}.
            By injecting~\eqref{eq:reparam_mfg} into~\eqref{eq:equation_to_besatisfied_reparametrization_mfg}, the equality is satisfied.
        \end{proof}
    \end{lemma}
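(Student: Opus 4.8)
The claim of Lemma~\ref{lemma:reparametrization_mfg} is a pure reparametrisation identity, so the plan is to expand $\eta^{\top}s(x(z))$ as a polynomial in $z$, match its coefficients against the entries of $t(z)$, and then observe that the converse direction is automatic because every step is an equality valid for all $z$.

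First I would substitute $x(z)=\mu+\sigma\otimes z$ into $s(x)=(1,x,x^2)^{\top}$ (component-wise square), obtaining $\eta^{\top}s(x(z)) = \eta^{(0)}+\eta^{(1),\top}(\mu+\sigma\otimes z)+\sum_{j=1}^{d}\eta^{(2)}_j(\mu_j+\sigma_j z_j)^2$, and expand $(\mu_j+\sigma_j z_j)^2=\mu_j^2+2\mu_j\sigma_j z_j+\sigma_j^2 z_j^2$. This is a degree-$2$ polynomial in $z$ whose only monomials are $1$, the $z_j$, and the $z_j^2$.

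The one step that is not purely mechanical is to rewrite each $z_j^2$ as $(z_j^2-1)+1$, so that the ``$+1$'' pieces fold into the constant term while the ``$(z_j^2-1)$'' pieces line up with the centred quadratic entries $\tfrac{z_j^2-1}{\sqrt{2}}$ of $t$ in~\eqref{eq:statistic_mfg}. Collecting, the coefficient of $1$ is $\eta^{(0)}+\eta^{(1),\top}\mu+\sum_j\eta^{(2)}_j(\mu_j^2+\sigma_j^2)$; the vector of coefficients of the $z_j$ is $\eta^{(1)}\otimes\sigma+2\,\eta^{(2)}\otimes\mu\otimes\sigma$; and the coefficient of $(z_j^2-1)$ is $\eta^{(2)}_j\sigma_j^2$, so matching against $\tfrac{z_j^2-1}{\sqrt{2}}$ forces the corresponding block of $\gamma$ to be $\sqrt{2}\,\eta^{(2)}\otimes\sigma^2$. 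These are exactly the three blocks of~\eqref{eq:reparam_mfg}. Finally I would substitute this $\gamma$ back into $\gamma^{\top}t(z)$ and run the expansion in reverse to obtain~\eqref{eq:equation_to_besatisfied_reparametrization_mfg} for every $z$, which is the converse.

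I do not expect a genuine obstacle: the proof is bookkeeping. The only care points are keeping the three component-wise/Kronecker factors in the degree-$1$ block straight, and the $\sqrt{2}$ normalisation --- which is deliberate, since it is precisely what makes $\bbE[t(Z)t(Z)^{\top}]=I_m$ (a direct check using $\bbE[Z_i^4]=3$ and independence/centring of the coordinates, which also kill the constant$\times$linear, constant$\times$quadratic and linear$\times$quadratic cross-moments). I would then record that orthonormality as the corollary that actually matters: the OLS regressor of $f(\mu+\sigma\otimes Z)$ onto $t(Z)$ is simply $\gamma^{\star}=\bbE[t(Z)f(\mu+\sigma\otimes Z)]$, with no FIM to invert, and pushing $\gamma^{\star}$ through the inverse of~\eqref{eq:reparam_mfg} (read off $\beta^{(2)}$, then $\beta^{(1)}$, then $\beta^{(0)}$, which is legitimate since $\sigma_i>0$) yields the explicit $\OO(d)$ mapping $\phi(\eta)$ used in Algorithm~\ref{alg:lsvi_krasnoselskii_mfg}; if $f$ is twice differentiable with the integrability of Theorem~\ref{th:full_cov_gaussian}, one further evaluates $\gamma^{\star(1)}$ and $\gamma^{\star(2)}$ by coordinatewise Gaussian integration by parts to recover the diagonal Laplace-type update $\mu_i'=\mu_i-\bbE[\partial_i^2 f]^{-1}\bbE[\partial_i f]$, $\sigma_i'^2=-\bbE[\partial_i^2 f]^{-1}$.
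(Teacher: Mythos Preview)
Your proof is correct and follows essentially the same route as the paper: expand $\eta^{\top}s(\mu+\sigma\otimes z)$, write $z_j^2=(z_j^2-1)+1$ to align with the centred quadratic entries of $t$, then identify the coefficients of $1$, $z_j$, and $(z_j^2-1)/\sqrt{2}$ to read off~\eqref{eq:reparam_mfg}. The additional material you sketch (orthonormality of $t(Z)$, the $\OO(d)$ inverse mapping, and the Stein/integration-by-parts identification of $\gamma^{\star}$) is not part of this lemma but is exactly the content of the subsequent Theorem~\ref{th:mfg}, and your outline of it is also in agreement with the paper.
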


    \begin{theorem}[LSVI mapping $\phi$ for the mean-field Gaussian distributions]
        \label{th:mfg}
        Let $X\sim N(\mu, \sigma^2)$, and $\eta\in \mathcal{V}$ be the corresponding natural parameter and let $t$ be given by
        \begin{equation}
            \label{eq:statistic_mfg}
            t(z) \coloneq \left(1, z^{\top}, \frac{z_1^2-1}{\sqrt{2}}, \dots, \frac{z_d^2 -1}{\sqrt{2}}\right)^{\top}.
        \end{equation}
        Then, the LSVI mapping $\beta \coloneq \phi(\eta)$ is defined recursively bottom to top by
        \begin{equation}
            \label{eq:nicolas_scheme1_bis}
            \beta=\begin{pmatrix}
                      \gamma^{(0)} - \beta^{(1),\top}\mu - \beta^{(2),\top}(\mu^2+\sigma^2) \\
                      \gamma^{(1)}\otimes \sigma^{-1} - 2\beta^{(2)}\otimes\mu              \\
                      \gamma^{(2)}\otimes \left(\sqrt{2}\sigma^2\right)^{-1}
            \end{pmatrix}
        \end{equation}
        and $\gamma \coloneq \bbE\left[t(Z)f(\mu+\sigma \otimes Z)\right]$,
        with subcomponents $\gamma=(\gamma^{(0)}, \gamma^{(1),\top}, \gamma^{(2),\top})^{\top}$, $\gamma^{(0)}\in\bbR$, $\gamma^{(1)}, \gamma^{(2)}\in \bbR^d$.
        In addition, if $f$ admits second-order derivatives such that $\bbE_{X}
        [f]<\infty$, $\norm{\bbE_{X}[\nabla f]}< \infty$, and $0\prec
        -\bbE_{X}\left[\textup{Diag}(\nabla^2 f)\right]$, then $\phi(\eta)$ defines a
        Gaussian distribution with mean and variance given by
        \begin{align}
            \label{eq:canonical_parameters_mfg_gaussian}
            (\mu',\Sigma')= \left(\mu - \left(\bbE\left[ \diag(\nabla^2 f)(X)\right]\right)^{-1} \otimes
                                \bbE\left[ \nabla f (X)\right], -\left(\bbE\left[ \diag(\nabla^2 f)(X)\right]\right)^{-1}\right).
        \end{align}
        \begin{proof}
            We know that $\phi(\eta)$ realises the minimum of the OLS objective~\eqref{eq:LSVI_as_OLS}, i.e.,
            \begin{equation}
                \label{eq:ols2}
                \begin{split}
                    \phi(\eta) &= \argmin_{\beta\in \bbR^m} \bbE_{X\sim N(\mu, \sigma^2)}\left[\left(\beta^{\top}s(X) - f(X)\right)^2\right].
                \end{split}
            \end{equation}
            Using Lemma~\ref{lemma:reparametrization_mfg}, we can rewrite the regression objective with covariates given by $s$ into a regression with covariates given by $t$.
            Using the notations of Lemma~\ref{lemma:reparametrization_mfg}, we let $\gamma$ be given such that $\gamma^{\top}t(z) = \beta^{\top}s(x(z))$ for all $z\in\bbR^d$, and where $\beta=\phi(\eta)$ is the unique minimizer of the OLS objective~\eqref{eq:ols2}.
            Then,
            \begin{equation}
                \label{eq:nicolas_scheme0}
                \begin{split}
                    \gamma &= \argmin_{\gamma\in \bbR^m} \bbE_{Z}\left[\left(\gamma^{\top}t(Z) - f(\mu + \sigma \otimes Z)\right)^2\right]\\
                    &= \left(\bbE_{Z}\left[ tt^{\top}(Z)\right]\right)^{-1} \bbE_{Z}\left[t(Z)f(\mu+\sigma \otimes Z)\right]\\
                    &= \bbE_{Z}\left[t(Z)f(\mu+\sigma \otimes Z)\right],
                \end{split}
            \end{equation}
            since $\bbE_{Z} \left[tt^{\top}(Z)\right] = I_m$.
            Inverting the relation~\eqref{eq:reparam_mfg} given by Lemma~\ref{lemma:reparametrization_mfg} between $\gamma$ and $\beta$, which is possible since all the $\sigma_i$'s are strictly positive, we obtain
            \begin{equation}
                \begin{split}
                    \label{eq:nicolas_scheme1}
                    \beta &= \begin{pmatrix}
                                 \gamma^{(0)} - \beta^{(1),\top}\mu - \beta^{(2),\top}(\mu^2+\sigma^2) \\
                                 \gamma^{(1)}\otimes \sigma^{-1} - 2\beta^{(2)}\otimes\mu              \\
                                 \gamma^{(2)}\otimes \left(\sqrt{2}\sigma^2\right)^{-1}                       \\
                    \end{pmatrix}.
                \end{split}
            \end{equation}
            But $\beta = \phi(\eta)$, this proves the first statement~\eqref{eq:nicolas_scheme1_bis} of Theorem~\ref{th:mfg}.
            For the second statement, assume that $f$ admits second-order derivatives.
            Using Stein's Lemma and~\eqref{eq:nicolas_scheme0}, we obtain,
            \begin{equation}
                \begin{split}
                    \label{eq:it_gamma}
                    \gamma  &= \bbE_{Z\sim N(0, I_n)} [t(Z)f(\mu+\sigma\otimes Z)]\\
                    &= \begin{pmatrix}
                           \bbE_{Z}\left(f(\mu+\sigma\otimes Z)\right)                                                    \\
                           \sigma\otimes \bbE_{Z}(\nabla f)(\mu+\sigma \otimes Z)                                         \\
                           \frac{1}{\sqrt{2}}\left(\sigma^2 \otimes \bbE_{Z}\diag(\nabla^2 f)(\mu+\sigma\otimes Z)\right)\\
                    \end{pmatrix}.
                \end{split}
            \end{equation}
            Injecting~\eqref{eq:it_gamma} into~\eqref{eq:nicolas_scheme1_bis}, we obtain for
            \begin{equation}
                \label{eq:it_eta}
                \phi(\eta) = \begin{pmatrix}
                                 \phi(\eta)_0                                                 \\
                                 \bbE_{X}(\nabla f)(X) - \mu\otimes \bbE \diag(\nabla^2 f)(X) \\
                                 \frac{1}{2}\bbE \diag(\nabla^2 f) (X)                        \\
                \end{pmatrix}.
            \end{equation}
            Using the natural mapping $T(\eta) = \left(-\frac{1}{2}\eta^{(1)} \eta^{(2), -1}, -\frac{1}{2}\eta^{(2)}\right)$, we obtain~\eqref{eq:canonical_parameters_mfg_gaussian}.
        \end{proof}
    \end{theorem}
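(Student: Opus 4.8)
The plan is to establish Theorem~\ref{th:mfg} in two parts, mirroring its two-part statement. First I would prove the explicit recursion~\eqref{eq:nicolas_scheme1_bis}, and second the asymptotic mean-field Laplace-type formula~\eqref{eq:canonical_parameters_mfg_gaussian}.

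For the first part, I start from the defining property that $\phi(\eta)$ is the OLS regressor of $f(X)$ onto $s(X)$ with $X\sim N(\mu,\sigma^2)$, i.e.~\eqref{eq:LSVI_as_OLS} specialised to the mean-field statistic $s(x)=(1,x^\top,(x^2)^\top)^\top$. The key device is Lemma~\ref{lemma:reparametrization_mfg}: writing $X=\mu+\sigma\otimes Z$ with $Z\sim N(0,I_d)$, the reparametrisation $\gamma^\top t(z)=\beta^\top s(x(z))$ turns the regression of $f$ onto $s$ into an equivalent regression of $f(\mu+\sigma\otimes Z)$ onto $t(Z)$. The decisive simplification is that $t$ is built from the probabilist's Hermite polynomials of degrees $0,1,2$ (namely $1$, $z_j$, and $(z_j^2-1)/\sqrt 2$), which are orthonormal under the standard Gaussian: a direct second-moment computation gives $\bbE_Z[t(Z)t^\top(Z)]=I_m$. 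Hence the OLS solution for $\gamma$ collapses to a simple projection $\gamma=\bbE_Z[t(Z)f(\mu+\sigma\otimes Z)]$, needing no matrix inversion. I then invert the explicit affine relation~\eqref{eq:reparam_mfg} between $\gamma$ and $\beta$; because every $\sigma_i>0$, this inversion is well-defined and yields the bottom-to-top recursion~\eqref{eq:nicolas_scheme1_bis}, reading off $\beta^{(2)}$ from $\gamma^{(2)}$, then $\beta^{(1)}$, then $\beta^{(0)}$.

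For the second part, I assume $f$ has the stated integrable second-order derivatives and apply Stein's Lemma (Gaussian integration by parts) componentwise to the expression $\gamma=\bbE_Z[t(Z)f(\mu+\sigma\otimes Z)]$. Since $\partial_{z_j}f(\mu+\sigma\otimes z)=\sigma_j(\partial_j f)(x)$, the first-moment coordinates give $\bbE_Z[z_j f]=\sigma_j\,\bbE_Z[(\partial_j f)(x)]$, and a second application on the Hermite-quadratic coordinates gives $\bbE_Z[(z_j^2-1)f]=\sigma_j^2\,\bbE_Z[(\partial_j^2 f)(x)]$. This produces the closed form~\eqref{eq:it_gamma} for $\gamma$ in terms of $\bbE_X[\nabla f]$ and $\bbE_X[\diag(\nabla^2 f)]$. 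Substituting into the recursion~\eqref{eq:nicolas_scheme1_bis}, the $\sigma$-factors cancel cleanly, yielding~\eqref{eq:it_eta} for $\phi(\eta)$ in natural coordinates. Finally, applying the natural-to-moment map $T(\eta)=(-\tfrac12\eta^{(1)}\otimes\eta^{(2),-1},-\tfrac12\eta^{(2),-1})$ converts these natural parameters into the mean/variance pair~\eqref{eq:canonical_parameters_mfg_gaussian}; the invertibility needed here is exactly the hypothesis $0\prec-\bbE_X[\diag(\nabla^2 f)]$, which guarantees a valid (negative-definite) second natural parameter.

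The main obstacle I anticipate is not the algebra but the careful justification of interchanging differentiation/expectation and of the Stein identities under only the stated moment hypotheses, so that each boundary term in the integration-by-parts vanishes; the mean-field structure helps here, since everything factorises coordinatewise and only the diagonal of the Hessian ever appears, avoiding the cross-term bookkeeping needed in the full-covariance case. A secondary point to handle cleanly is the bottom-to-top ordering of the recursion~\eqref{eq:nicolas_scheme1_bis}: one must verify that $\beta^{(2)}$ depends only on $\gamma^{(2)}$, $\beta^{(1)}$ only on $\gamma^{(1)}$ and $\beta^{(2)}$, and $\beta^{(0)}$ on the remaining quantities, so that the inversion of~\eqref{eq:reparam_mfg} is genuinely well-posed.
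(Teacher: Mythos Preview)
Your proposal is correct and follows essentially the same route as the paper's proof: reparametrise the OLS via Lemma~\ref{lemma:reparametrization_mfg}, exploit $\bbE_Z[t(Z)t^\top(Z)]=I_m$ to obtain $\gamma$ without inversion, invert~\eqref{eq:reparam_mfg} for the recursion, then apply Stein's Lemma and the natural-to-moment map for the second part. Your added remarks on the Hermite-polynomial orthonormality and the well-posedness of the bottom-to-top inversion are accurate elaborations of steps the paper leaves implicit.
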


    \subsection{The exact LSVI mapping for Gaussian distributions}
    \begin{lemma}[Reparametrisation of the regression in the full-covariance case]
        \label{lemma:reparametrization_fullcov}
        Let $X\sim N(\mu,\Sigma)$ be a Gaussian distribution with $\mu\in \bbR^d$, and $\Sigma\succ 0$.
        Let $\eta \in \mathcal{V}$ be the natural parameter associated with $X$ for the statistic $s : x\in \bbR^d \mapsto (1, X, (\vect{XX^{\top}})^{\top})^{\top}\in \bbR^{1+d+d^2}$.
        Let $t$ be given by~\eqref{eq:statistic_full_gaussian_case}.
        For any $z\in\bbR^d$, let $x(z)=\mu +C z$ with $C\in\mathbb{R}^{d\times d}$ such that $CC^{\top} =  \Sigma$.
        If $Z\sim N(0, I_d)$, then $x(Z)\sim N(\mu, \Sigma)$, and for any $z\in\mathbb{R}^d$
        \begin{equation}
            \label{eq:equation_to_besatisfied_reparametrization_fullcov}
            \gamma^{\top}t(z) = \eta^{\top}s(x(z)),
        \end{equation}
        with $\gamma = (\gamma^{(0)}, \gamma^{(1),\top}, \gamma^{(2),\top})^{\top}\in \bbR^{1+d+d(d+1)/2}$.
        Furthermore, the components of $\gamma$, $\gamma^{(0)}\in\bbR$, $\gamma^{(1)}\in\bbR^d$, $\gamma^{(2)}\in \bbR^{d(d+1)/2}$ are given by
        \begin{equation}
            \label{eq:reparam_fullcov}
            \gamma = \begin{pmatrix}
                         \eta^{(0)} + \eta^{(1),\top}\mu +\eta^{(2),\top}\vect{\mu\mu^{\top}} + \sum_{i=1}^{n}\Gamma_{i,i}\\
                         C^{\top}\eta^{(1)} + 2\left(\mu\otimes C\right)^{\top}\eta^{(2)}                                  \\
                         \gamma^{(2)}
            \end{pmatrix},
        \end{equation}
        where
        \begin{equation}
            \Gamma = \unvec{\left((C\otimes C)^{\top}\eta^{(2)}\right)},
        \end{equation}
        and
        \begin{align}
            \gamma^{(2)} = \left(
                               \sqrt{2}\Gamma_{1, 1}, 2\Gamma_{1, 2}, \dots, 2\Gamma_{1, d},\sqrt{2}\Gamma_{2, 2}, 2\Gamma_{2, 3},\dots, 2\Gamma_{2, d}, \dots, \sqrt{2}\Gamma_{d, d}\right)^{\top}.
        \end{align}
        \begin{proof}
            The proof is similar to the proof of Lemma~\ref{lemma:reparametrization_mfg}.
            Let us rewrite the regression with respect to $Z$.
            Let $\eta = (\eta^{(0)}, \eta^{(1),\top}, \eta^{(2),\top})^{\top}\in \bbR^{1+d+d^2}$ with $\eta^{(0)}\in \bbR$, $\eta^{(1)}\in \bbR^d$, $\eta^{(2)}\in \bbR^{d^2}$.
            Let $X = \mu + C Z$ with $C$ such that $CC^{\top} = \Sigma$.
            Rewriting the linear regression on $s(X)$ with $s(Z)$, we have
            \begin{equation}
                \label{eq:dev_regXZ}
                \begin{split}
                    \eta^{\top} s(X) &= \eta^{(0)} + \eta^{(1),\top}\mu + \eta^{(1),\top}C Z + \eta^{(2),\top}\vect{\mu\mu^{\top}} \\
                    &\quad + \eta^{(2),\top}\vect{\mu Z^{\top}C^{\top}}+\eta^{(2),\top}\vect{CZ\mu^{\top}} \\
                    &\quad + \eta^{(2),\top}\vect{CZZ^{\top}C^{\top}}\\
                    &= \hat{\gamma}^{\top}s(Z),
                \end{split}
            \end{equation}
            where $\hat{\gamma} = (\hat{\gamma}^{(0)}, \hat{\gamma}^{(1),\top},\hat{\gamma}^{(2),\top})^{\top}\in\bbR^{1+d+d^2}$ are left to be identified.
            By identifying the quadratic terms in~\eqref{eq:dev_regXZ}, we have for $\hat{\gamma}^{(2)}$
            \begin{equation}
                \eta^{(2),\top}\vect{CZZ^{\top}C^{\top}} = \eta^{(2),\top}(C\otimes C)\vect{ZZ^{\top}} = \hat{\gamma}^{(2),\top}\vect{ZZ^{\top}},
            \end{equation}
            where we used $\vect{ABC} = (C^{\top}\otimes A)\vect{B}$.
            Thus,
            \begin{equation}
                \eta^{(2)} = (C\otimes C)^{-\top} \hat{\gamma}^{(2)} =(C^{-1}\otimes C^{-1})^{\top}\hat{\gamma}^{(2)},
            \end{equation}
            where we used $(A\otimes B)^{-1} = A^{-1}\otimes B^{-1}$.
            For $\hat{\gamma}^{(1)}$, expanding the linear term in~\eqref{eq:dev_regXZ}, we have
            \begin{equation}
                \begin{split}
                    \eta^{(1),\top}C z + \eta^{(2),\top}\vect{Cz\mu^{\top}}+\eta^{(2),\top}\vect\mu z^{\top}C^{\top}&=\eta^{(1),\top}Cz + 2\eta^{(2),\top}\vect{Cz\mu^{\top}}\\
                    &=\left(\eta^{(1),\top}C + 2\eta^{(2),\top}\left(\mu\otimes C\right)\right)z\\
                    &=\hat{\gamma}^{(1),\top}z,
                \end{split}
            \end{equation}
            i.e.,
            \begin{equation}
                \hat{\gamma}^{(1)} = C^{\top}\eta^{(1)} + 2\left(\mu\otimes C\right)^{\top}\eta^{(2)}.
            \end{equation}
            Regrouping all the constants in~\eqref{eq:dev_regXZ}, we obtain for $\hat{\gamma}^{(0)}$,
            \begin{equation}
                \begin{split}
                    \hat{\gamma}^{(0)} = \eta^{(0)}+\eta^{(1),\top}\mu + \eta^{(2),\top}\vect{\mu\mu^{\top}.}
                \end{split}
            \end{equation}
            Now, we want to rewrite the regression on $s(Z)$ in terms of $t(Z)$ where
            \begin{align}
                t(Z) =
                \left(1,
                Z^{\top}                       ,
                    \frac{Z_1^2-1}{\sqrt{2}} ,
                Z_1 Z_2                  ,
                    \dots                   ,
                Z_1 Z_d                  ,
                    \frac{Z_2^2-1}{\sqrt{2}} ,
                Z_2 Z_3                  ,
                    \dots                   ,
                    \frac{Z_d^2-1}{\sqrt{2}}
                \right)^{\top},
            \end{align}
            which satisfies $\bbE_{Z}\left[tt^{\top}\right] = I_{m'}$ with $m'=d+d(d+1)/2+1$.
            We do that in two steps, let
            \begin{align}
                t_1(Z) =
                \left(1,
                Z^{\top}                       ,
                    \frac{Z_1^2-1}{\sqrt{2}} ,
                Z_1 Z_2                  ,
                    \dots                   ,
                Z_1 Z_d                  ,
                Z_1 Z_2,
                    \frac{Z_2^2-1}{\sqrt{2}} ,
                Z_2 Z_3                  ,
                    \dots                   ,
                    \frac{Z_d^2-1}{\sqrt{2}}
                \right)^{\top}.
            \end{align}
            Let $\tilde{\gamma}=(\tilde{\gamma}^{(0)}, \hat{\gamma}^{(1),\top},\tilde{\gamma}^{(2),\top})^{\top} \in \bbR^{1+d+d^2}$ be such that
            \begin{equation}
                \label{eq:equation_to_satisfy_reparametrization}
                \tilde{\gamma}^{\top}t_1(Z) = \hat{\gamma}^{\top}s(Z),
            \end{equation}
            i.e., keeping only the constant terms and the terms quadratic in $Z$,
            \begin{equation}
                \label{eq:eq_to_satisfy_reparam2}
                \begin{split}
                    &\hat{\gamma}^{(0)} + \sum_{j=1}^{d^2} \hat{\gamma}_{j}^{(2)}(\vect{ZZ^{\top}})_j= \tilde{\gamma}^{(0)} + \sum_{k=0}^{d-1}\tilde{\gamma}^{(2)}_{ 1+(d+1)k}\left\{\frac{Z_k^2-1}{\sqrt{2}}\right\} + \sum_{j\neq 1+(d+1)k} \tilde{\gamma}_{2,j} (\vect{ZZ^{\top}})_j.
                \end{split}
            \end{equation}
            We set, for any $k\geq 0$, $\tilde{\gamma}^{(2)}_{1+(d+1)k} = \hat{\gamma}^{(2)}_{1+(d+1)k}\sqrt{2}$, and $\tilde{\gamma}^{(0)} = \hat{\gamma}^{(0)} + \sum_{k=0}^{d-1}\hat{\gamma}^{(2)}_{ (1+(d+1)k)}$.
            Then,~\eqref{eq:equation_to_satisfy_reparametrization} and~\eqref{eq:eq_to_satisfy_reparam2} are satisfied.
            To go from $t_1$ to $t$, we need to get rid of the coordinates $t(Z)_{k} = Z_i Z_j$ for some $i>j$, i.e., $k\in [dp+1, (d+1)p]$ for some integer $p$.
            Let $\Gamma = \unvec(\hat{\gamma}^{(2)})$, and let $\gamma^{(2)}\in \bbR^{d(d+1)/2}$ be defined by
            \begin{align}
                \gamma^{(2)} = \left(
                                   \sqrt{2}\Gamma_{1, 1}, 2\Gamma_{1, 2}, \dots, 2\Gamma_{1, d}, \sqrt{2}\Gamma_{2, 2}, 2\Gamma_{2, 3},\dots, 2\Gamma_{2, d}, \dots, \sqrt{2}\Gamma_{d, d}
                \right)^{\top}.
            \end{align}
            Then $\gamma = [\tilde{\gamma}^{(0)}, \hat{\gamma}^{(1),\top}, \gamma^{(2),\top}]^{\top}\in\bbR^m$ satisfies
            \begin{equation}
                \gamma^{\top}t(Z) = \tilde{\gamma}^{\top}t_1(Z) =\hat{\gamma}^{\top}s(Z) =\eta^{\top}s(X).
            \end{equation}
            All the previous computations give the expression of $\gamma$ as a function of $\eta$.
        \end{proof}
    \end{lemma}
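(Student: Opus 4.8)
The plan is to establish the identity~\eqref{eq:equation_to_besatisfied_reparametrization_fullcov} by expanding $\eta^\top s(x(z))$ as a degree-two polynomial in $z$ and matching its coefficients against the orthonormalised statistic $t$, following the same route as the mean-field Lemma~\ref{lemma:reparametrization_mfg} but with the extra bookkeeping forced by the symmetry of $zz^\top$. First I would record the elementary push-forward fact: since $CC^\top=\Sigma$ and $Z\sim N(0,I_d)$, the affine map $x(z)=\mu+Cz$ sends $Z$ to $N(\mu,\Sigma)$. This both justifies rewriting the regression of $f(X)$ on $s(X)$ as a regression in $z$, and makes $t$ orthonormal: a direct Gaussian-moment computation gives $\bbE_Z[t(Z)t(Z)^\top]=I$, using $\bbE[(Z_i^2-1)^2]=2$ (the origin of the $1/\sqrt2$ normalisation) and $\bbE[(Z_iZ_j)^2]=1$ for $i\neq j$.

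Next I would substitute $x=\mu+Cz$ into $s(x)=(1,x^\top,(\vect{xx^\top})^\top)^\top$ and expand $xx^\top=\mu\mu^\top+Cz\mu^\top+\mu z^\top C^\top+Czz^\top C^\top$. Applying the vectorisation identity $\vect{ABC}=(C^\top\otimes A)\vect{B}$ to the last block gives $\vect{Czz^\top C^\top}=(C\otimes C)\vect{zz^\top}$, and the identity $(\mu\otimes C)z=\mu\otimes Cz$ handles the cross terms. Collecting by degree in $z$ then yields the intermediate representation $\eta^\top s(x(z))=\hat\gamma^\top s(z)$ with $s(z)=(1,z^\top,(\vect{zz^\top})^\top)^\top$, where $\hat\gamma^{(0)}=\eta^{(0)}+\eta^{(1),\top}\mu+\eta^{(2),\top}\vect{\mu\mu^\top}$, $\hat\gamma^{(1)}=C^\top\eta^{(1)}+2(\mu\otimes C)^\top\eta^{(2)}$, and $\hat\gamma^{(2)}=(C\otimes C)^\top\eta^{(2)}$, i.e.\ $\Gamma=\unvec(\hat\gamma^{(2)})$. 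The factor $2$ in $\hat\gamma^{(1)}$ requires the two cross terms to coincide when contracted with $\eta^{(2)}$, which holds because $\unvec(\eta^{(2)})=-\tfrac12\Sigma^{-1}$ is symmetric; the same symmetry makes $\Gamma=C^\top\unvec(\eta^{(2)})C$ symmetric, a fact I would flag for later use.

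The remaining and most delicate step is to pass from the redundant statistic $s(z)$, which carries all $d^2$ entries of $\vect{zz^\top}$, to the compact orthonormal statistic $t(z)$ of~\eqref{eq:statistic_full_gaussian_case}, which keeps only the $d(d+1)/2$ distinct quadratic monomials. I would split the quadratic form $\hat\gamma^{(2),\top}\vect{zz^\top}=\sum_{i,j}\Gamma_{ij}z_iz_j$ into diagonal and off-diagonal parts. For the diagonal I use $\Gamma_{ii}z_i^2=\sqrt2\,\Gamma_{ii}\cdot\frac{z_i^2-1}{\sqrt2}+\Gamma_{ii}$, so the coefficient on the $t$-slot becomes $\sqrt2\,\Gamma_{ii}$ and the constant $\sum_{i=1}^d\Gamma_{ii}$ is absorbed into the intercept. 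For the off-diagonal part, each monomial $z_iz_j$ with $i\neq j$ occurs twice in $\vect{zz^\top}$ (as the $(i,j)$ and $(j,i)$ entries), so merging them into the single slot of $t$ and using $\Gamma_{ij}=\Gamma_{ji}$ yields the coefficient $2\,\Gamma_{ij}$. Reading off the result gives $\gamma^{(0)}=\hat\gamma^{(0)}+\sum_{i=1}^d\Gamma_{ii}$, $\gamma^{(1)}=\hat\gamma^{(1)}$, and $\gamma^{(2)}$ the half-vectorisation with entries $\sqrt2\,\Gamma_{ii}$ and $2\,\Gamma_{ij}$ in the stated order, which is exactly~\eqref{eq:reparam_fullcov}; a final back-substitution verifies $\gamma^\top t(z)=\eta^\top s(x(z))$ for every $z$.

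The hard part is purely the combinatorial bookkeeping of this last step: correctly tracking which constant is absorbed into $\gamma^{(0)}$, distinguishing the $\sqrt2$ (diagonal) from the $2$ (off-diagonal) scalings, and pinning down the precise linear index that each entry $\Gamma_{ij}$ of the symmetric matrix $\Gamma$ occupies under the half-vectorisation ordering used in~\eqref{eq:reparam_fullcov}. No analysis is involved; the only genuine difference from the mean-field case is the duplicated-entry accounting caused by the symmetry of $zz^\top$, so I would double-check the index map and the symmetry of $\Gamma$ as the two places where an error is most likely to creep in.
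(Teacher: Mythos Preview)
Your proposal is correct and follows essentially the same route as the paper: expand $\eta^\top s(\mu+Cz)$, collect terms by degree to obtain an intermediate $\hat\gamma^\top s(z)$, then rewrite in the orthonormal basis $t(z)$. The only cosmetic difference is that the paper passes through an auxiliary statistic $t_1$ in two steps whereas you do the diagonal/off-diagonal split in one pass; you are also more explicit than the paper about where the symmetry of $\unvec(\eta^{(2)})$ (and hence of $\Gamma$) is invoked to justify the factor $2$ in $\hat\gamma^{(1)}$ and in the off-diagonal entries of $\gamma^{(2)}$.
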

    We now turn to prove Theorem~\ref{th:full_cov_gaussian} using the previous Lemma.
    \begin{proof}[Proof of Theorem~\ref{th:full_cov_gaussian}]
        As in the proof of Theorem~\ref{th:mfg}, the least squares regression on $s(X)$ can be rewritten in terms of $t(Z)$.
        Then, by applying Lemma~\ref{lemma:reparametrization_fullcov}, we can map the regressor $\gamma$ with respect to $t$, to the regressor with respect to $s$, given $\beta=\phi(\eta)$.
        Since $\bbE_{Z}[tt^{\top}] = I$, the OLS simplifies to $\gamma = \bbE_{Z}[t(Z)f(\mu+CZ)]$.
        By Lemma~\ref{lemma:reparametrization_fullcov}, the mapping from $\gamma$ to $\beta$ is given by
        \begin{equation}
            \label{eq:inversemappingfullcov}
            \beta = \begin{pmatrix}
                        \gamma^{(0)} - \sum_{i=1}^{d}\Gamma_{i,i} - \beta^{(1),\top}\mu - \beta^{(2),\top}\vect{\mu\mu^{\top}} \\
                        C^{-\top}\gamma^{(1)}-2\mu^{\top}\beta^{(2)}                                                           \\
                        \vect{\left(C^{-1}\Gamma C^{-\top}\right)}
            \end{pmatrix},
        \end{equation}
        where
        \begin{equation}
            \Gamma = \begin{pmatrix}
                         \gamma_{1}^{(2)}/\sqrt{2} & \gamma_{ 2}^{(2)}/2          & \dots  & \dots  & \gamma_{ d}^{(2)}/2               \\
                         \gamma_{2}^{(2)}/2        & \gamma_{ d+1}^{(2)}/\sqrt{2} & \dots  & \dots  & \gamma_{ 2d-1}^{(2)}/2            \\
                         \vdots                 &                          & \ddots &        & \vdots                        \\
                         \vdots                 &                          &        & \ddots & \vdots                        \\
                         \gamma_{d}^{(2)}/2        & \dots                    & \dots  & \dots  & \gamma_{d(d+1)/2}^{(2)}/\sqrt{2}
            \end{pmatrix},
        \end{equation}
        or component-wise $\Gamma_{i,i} = \gamma^{(2)}_{1+1/2(2d+2-i)(i-1)}/\sqrt{2}$, $\Gamma_{i,i+k}=\gamma^{(2)}_{1+1/2(2d+2-i)(i-1)+k}/2$ for $1\leq i\leq d$ and $1\leq k\leq d-i$, and $\Gamma_{i,j} = \Gamma_{j,i}$ for $j<i$.
        Regarding the complexity, the computation of the Cholesky matrix $C$ and its inverse requires $\OO(d^3)$ operations; consequently, the computation of $\gamma$ can also be performed in $\OO(d^3)$ operations.
        Using~\eqref{eq:inversemappingfullcov} to map $\gamma$ to $\eta$, involves computing $\vect{\left(C^{-1}\Gamma C^{-\top}\right)}$ and $C^{-\top}\gamma^{(1)}$, both of which can be performed in $\OO(d^3)$.
    \end{proof}

    \subsection{Concentration bounds for the Fisher matrix in the compact case}
    We now prove a Lemma to control the bias induced by inverting the estimated FIM, conditioned on the event that the estimated FIM is well-conditioned, which happens with high-probability given the number of samples $N$ is large enough.
    We first prove a version of the Lemma when $s$ is bounded (Lemma~\ref{lemma:inverse_g_compactA}), and then tackle the case where $s$ is unbounded but with
    bounded-moments (Lemma~\ref{lemma:polynomial}).
    \begin{lemma}[Mean error bound for the inverse of $\hat{F}$ when $s$ is uniformly bounded]
        \label{lemma:inverse_g_compactA}
        Let $\delta \in (0, 1)$, $N\geq B(4/3r+2B)r^{-2} \log(2m \delta^{-1}) $, $\omega\in \mathcal{W}$, and $\mathcal{A}(\omega) = [\norm{F_\omega-\hat{F}_\omega}<\norm{F^{-1}_{\omega}}^{-1}]$.
        Then, under Assumptions~\ref{ass:minimality_and_regularity},~\ref{ass:boundedfromzero}, and $\norm{s}_2^2\leq B$, $\mathcal{A}(\omega)$ occurs with probability at least $1-\delta$.
        Furthermore,
        \begin{equation}
            \label{eq:upperboundconditionalexpectationA}
            \norm[\bigg]{\bbE \left[\hat{F}^{-1}_{\omega}-F^{-1}_{\omega} \middle| \mathcal{A}(\omega)\right]} = \OO(N^{-1}),
        \end{equation}
        where the constant in the big-$\OO$ term can be chosen independently of $\omega$.
        \begin{proof}[Proof of Lemma~\ref{lemma:inverse_g_compactA} (exponential tail bound)]
            \label{proof:exponential_tail_bound}
            Fix $\omega\in\mathcal{W}$.
            For the sake of notation, we drop the subscript in $\omega$ but indicate the dependency in $N$.
            For any $N\geq 1$, let $\hat{F}_N = N^{-1}\sum_{i=1}^N ss^{\top}(X_i)$ with $X_1, \ldots, X_N\overset{\textup{i.i.d}}{\sim} q $.
            Conditionally on $\mathcal{A}^{(N)}=[\norm{\hat{F}_N-F}<\norm{F^{-1}}^{-1}]$, $\hat{F}_N=F(I-(I-F^{-1}\hat{F}_N))$ is invertible because $F$ is invertible thanks to Assumption~\ref{ass:minimality_and_regularity} and $0 < 1 - \norm{F^{-1}}\norm{F - \hat{F}_N} \leq 1 - \norm{I - F^{-1}\hat{F}_N} = \norm{I-(I-F^{-1}\hat{F}_N)}$.
            Using the Neumann series, we have
            \begin{equation}
                \begin{split}
                    \label{eq:lemma7firsteq}
                    \hat{F}^{-1}-F^{-1} = &\left(I-F^{-1}\hat{F} + \OO(\norm{I-F^{-1}\hat{F}}^2)\right)F^{-1}.
                \end{split}
            \end{equation}
            Thanks to the boundedness assumption on $s$, the second moments of $I-F^{-1}\hat{F}_N$ exist.
            Consequently, the central limit theorem (CLT) holds for any component of the sequence of unconditional random matrices $(I-F^{-1}\hat{F}_N)_{i,j}$.
            By the strong law of large numbers, we have $\hat{F}_N \overset{a.s.}{\to} F$, thus $\mathds{1}[\mathcal{A}^{(N)}] \overset{a.s.}{\to} 1$.
            Therefore, the CLT also holds for the sequence of conditional random matrices $(I-F^{-1}\hat{F}_N)_{i,j} \mid \mathcal{A}^{(N)}$.
            Applying this conditional CLT to each component of $I-F^{-1}\hat{F}_N $ yields in particular that for any $1\leq i,j\leq m$, $\sqrt{N}(I-F^{-1}\hat{F})_{i,j} \mid \mathcal{A}^{(N)}$ converges in law with finite variance $\bbE[(I-F^{-1}\hat{F})_{i,j}^2]$.
            Thus, conditioned on $\mathcal{A}^{(N)}$, $\sqrt{N}(I-F^{-1}\hat{F})_{i,j} = \OO_P(1)$, which implies that $\sqrt{N}\norm{I-F^{-1}\hat{F}}_{\text{F}} = \OO_P(1)$.
            Since the spectral norm of $I-F^{-1}\hat{F}$ is upper bounded by the Frobesnius norm, the previous convergence in probability implies
            $N \norm{I-F^{-1}\hat{F}_N}^2 \mid \mathcal{A}^{(N)} = \OO_P(1)$.
            Finally, taking the expectation in~\eqref{eq:lemma7firsteq} yields $\norm{\bbE[\hat{F}^{-1}-F^{-1}\mid \mathcal{A}^{(N)}]} = \OO(N^{-1})$,
            where the constant inside the big-$\OO$ term can be chosen independently of $\omega$, thanks to the uniform boundedness assumption on $s$ and Assumption~\ref{ass:boundedfromzero}.

            By~\citep[][Th. 1.62]{tropp2015} with uniform bound $\norm{(s
            s^{\top}(X_i)-F)/N}\leq 2B/N$ and variance
            $\norm{\sum_{i=1}^{N}\bbE[((ss^{\top}(X_i)-F)/n)^2]}\leq B\norm{F}/N$, and
            the definition of $r$ (Assumption~\ref{ass:boundedfromzero}), we have
            \begin{align}
                \begin{split}
                    P(\norm{\hat{F}_N-F}\geq \norm{F^{-1}}^{-1}) &\leq P(\norm{\hat{F}_N-F}\geq r) \\
                    &\leq 2m\exp\left(-\frac{Nr^2}{B(4/3r+2\norm{F})}\right)\\
                    &\leq 2m\exp\left(-\frac{Nr^2}{B(4/3r+2B)}\right),
                \end{split}
            \end{align}
            where to go from the second to the third line, we use $\norm{F}\leq B$.
            Setting $N \geq B(4/3r + 2B)r^{-2} \log(2m\delta^{-1})$ yields $P(\norm{\hat{F}_N-F}\geq \norm{F^{-1}}^{-1})\leq \delta$, i.e., $P(\mathcal{A}) \geq 1-\delta$.
            The bound is independent of $\omega$, and true for any $\omega \in \mathcal{W}$, finally yielding the result.
        \end{proof}
    \end{lemma}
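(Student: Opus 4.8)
The plan is to prove the two conclusions separately: the high-probability bound $P(\mathcal{A}(\omega)) \ge 1-\delta$ by a matrix concentration inequality, and the mean-error bound by a Neumann expansion of $\hat{F}_\omega^{-1}$ around $F_\omega^{-1}$ together with control of the bias introduced by conditioning on $\mathcal{A}(\omega)$.

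For the probability bound, fix $\omega$ and write $\hat{F}_\omega - F_\omega = \frac{1}{N}\sum_{i=1}^N Y_i$ with $Y_i = s(X_i)s(X_i)^\top - F_\omega$, which are i.i.d.\ and mean zero. Since $\norm{s}^2 \le B$ we have $\norm{Y_i} \le \norm{s(X_i)}^2 + \norm{F_\omega} \le 2B$ (using $\norm{F_\omega} = \norm{\bbE_\omega[ss^\top]} \le B$), and $\bbE_\omega[Y_i^2] = \bbE_\omega[\norm{s}^2 ss^\top] - F_\omega^2 \preceq B F_\omega$, so the matrix variance parameter of $\frac{1}{N}\sum_i Y_i$ is at most $B\norm{F_\omega}/N \le B^2/N$. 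Applying the matrix Bernstein inequality then gives $P(\norm{\hat{F}_\omega - F_\omega} \ge t) \le 2m\exp(-Nt^2/(B(2B+4t/3)))$; taking $t = r$, which by Assumption~\ref{ass:boundedfromzero} satisfies $r \le \norm{F_\omega^{-1}}^{-1}$, and imposing $N \ge B(4r/3+2B)r^{-2}\log(2m\delta^{-1})$ makes the right-hand side $\le \delta$. All constants here depend only on $B$ and $r$, hence are uniform in $\omega$.

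For the mean-error bound I would work on $\mathcal{A}(\omega)$, where $E \coloneq I - F_\omega^{-1}\hat{F}_\omega = F_\omega^{-1}(F_\omega - \hat{F}_\omega)$ has $\norm{E} < 1$, so $\hat{F}_\omega$ is invertible and the Neumann series gives $\hat{F}_\omega^{-1} - F_\omega^{-1} = \big(E + \OO(\norm{E}^2)\big)F_\omega^{-1}$, the $\OO(\norm{E}^2)$ term being $(E^2 + E^3 + \cdots)F_\omega^{-1}$ with norm $\le \norm{E}^2/(1-\norm{E})$. The conditional expectation of the linear term is a pure conditioning bias: $\bbE[E] = 0$, so $\bbE[E \mid \mathcal{A}(\omega)] = -\bbE[E\mathds{1}_{\mathcal{A}(\omega)^c}]/P(\mathcal{A}(\omega))$, and Cauchy--Schwarz bounds this by $(\bbE\norm{E}^2)^{1/2}(P(\mathcal{A}(\omega)^c))^{1/2}/P(\mathcal{A}(\omega))$, where $\bbE\norm{E}^2 = \OO(N^{-1})$ (variance of a normalised i.i.d.\ sum, using $\norm{ss^\top}_{\mathrm{F}} = \norm{s}^2 \le B$), $P(\mathcal{A}(\omega)^c)$ is exponentially small in $N$ by the first part, and $P(\mathcal{A}(\omega)) \ge 1-\delta$; thus this term is $o(N^{-1})$. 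For the second-order term there is no cancellation but it is genuinely of size $\norm{E}^2$: the clean route, which I would present, observes that boundedness of $s$ yields the componentwise CLT for $\sqrt{N}\,E$, which because $\mathds{1}_{\mathcal{A}(\omega)} \to 1$ almost surely holds also conditionally on $\mathcal{A}(\omega)$, so $N\norm{E}^2 \mid \mathcal{A}(\omega) = \OO_P(1)$; a uniform fourth-moment bound $\bbE\norm{E}^4 = \OO(N^{-2})$ (again from $\norm{s}^2 \le B$) supplies the uniform integrability needed to upgrade this to $\bbE[N\norm{E}^2 \mid \mathcal{A}(\omega)] = \OO(1)$. Substituting both controls into the Neumann expansion and taking the conditional expectation yields $\norm{\bbE[\hat{F}_\omega^{-1} - F_\omega^{-1} \mid \mathcal{A}(\omega)]} = \OO(N^{-1})$, with a constant depending only on $B$ and $r$, hence uniform in $\omega$. (A more pedestrian alternative avoids the CLT by splitting on $[\norm{E}\le 1/2]$, where $\norm{E^2+E^3+\cdots}\le 2\norm{E}^2$, and on its complement, which has exponentially small probability and on which a finite higher moment of $\norm{\hat F_\omega^{-1}}$ suffices.)

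The step I expect to be the main obstacle is precisely this passage from a conditional-in-probability bound on $N\norm{E}^2$ to a conditional-in-expectation bound, uniformly in $\omega$: one must rule out that either the conditioning event $\mathcal{A}(\omega)$ or the tail of $\norm{E}$ inflates the expectation. The hypothesis $\norm{s}^2 \le B$ is what makes this routine, since it yields uniform second- and fourth-moment bounds on $\hat{F}_\omega - F_\omega$; it is exactly this step that has to be reworked, using only the fourth-moment control of Assumption~\ref{ass:fourth_order_moment} and a matrix concentration inequality valid under fourth moments, in the companion version of the lemma for unbounded $s$.
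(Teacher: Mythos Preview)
Your proposal is correct and follows essentially the same approach as the paper: matrix Bernstein (Tropp) for the high-probability bound, and a Neumann expansion of $\hat{F}_\omega^{-1}$ around $F_\omega^{-1}$ together with a conditional CLT for the mean-error bound. You are in fact more careful than the paper on two points---the separate treatment of the linear term's conditioning bias via $\bbE[E\mid\mathcal{A}(\omega)]=-\bbE[E\mathds{1}_{\mathcal{A}(\omega)^c}]/P(\mathcal{A}(\omega))$, and the explicit uniform-integrability argument (fourth moments of $E$) needed to upgrade $N\norm{E}^2\mid\mathcal{A}(\omega)=\OO_P(1)$ to a bound in expectation---both of which the paper's proof glosses over.
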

    \begin{lemma}
        \label{lemma:polynomial}
        Under Assumptions~\ref{ass:minimality_and_regularity},~\ref{ass:fourth_order_moment} and~\ref{ass:boundedfromzero}, for $\sqrt{N}\geq r^{-2}\delta^{-1} (\sqrt{8e\log(m)} \mu_4 v + 8e\mu_4^2 \sqrt{m}\log(m))$, $\mathbb{P}(\mathcal{A}(\omega))\geq 1-\delta$ and~\eqref{eq:upperboundconditionalexpectationA} holds.
        \begin{proof}[Proof of Lemma~\ref{lemma:polynomial} (polynomial tail bound)]
            We follow the proof of~\ref{proof:exponential_tail_bound}.
            The CLT for $\hat{F}$ still holds thanks to Assumption~\ref{ass:fourth_order_moment}, and by the same argument as in~\ref{proof:exponential_tail_bound}, the conditional CLT is still valid.
            Thus, $\norm{\bbE[I-F^{-1}F\mid\mathcal{A}]}=\OO(N^{-1})$, and the constant inside the big-$\OO$ notation is also independent on $\omega$ using the uniform bounds on the fourth-moment of $s$.
            By the definition of $r$, the Bienaymé-Tchebychev's inequality, and~\citep[][Theorem 3.1]{10.1093/imaiai/ias001}, we have
            \begin{align}
                \begin{split}
                    P(\norm{\hat{F}_N-F}\geq \norm{F^{-1}}^{-1}) &\leq P(\norm{\hat{F}_N-F}\geq r) \\
                    &\leq \bbE{\norm{\hat{F}_N - F}^2} / r^2 \\
                    &\leq r^{-2} \left\{\sqrt{8e\frac{\log(m)}{N}}\mu_4 v + 8e\mu_4^2 \frac{\sqrt{m}\log(m)}{\sqrt{N}}\right\}.
                \end{split}
            \end{align}
            Setting $\sqrt{N} \geq r^{-2}\delta^{-1} (\sqrt{8e\log(m)} \mu_4 v + 8e\mu_4^2 \sqrt{m}\log(m))$ yields $P(\norm{\hat{F}_N-F}\geq \norm{F^{-1}}^{-1})\leq \delta$, i.e., $P(\mathcal{A}) \geq 1-\delta$.
            The bound is independent of $\omega$, and true for any $\omega \in \mathcal{W}$.
        \end{proof}
    \end{lemma}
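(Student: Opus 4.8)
The plan is to mirror the two-part structure of the proof of Lemma~\ref{lemma:inverse_g_compactA}, replacing the boundedness hypothesis $\norm{s}_2^2\le B$ by the uniform fourth-moment control of Assumption~\ref{ass:fourth_order_moment}. This is exactly the hypothesis needed to keep alive both the central-limit argument behind the $\OO(N^{-1})$ bias bound and a (now polynomial rather than exponential) tail bound on $\norm{\hat F_\omega-F_\omega}$. Throughout I would fix $\omega\in\mathcal{W}$, drop it from the notation, and write $\hat F_N=N^{-1}\sum_{i=1}^N s(X_i)s(X_i)^\top$ with $X_1,\dots,X_N\overset{\mathrm{i.i.d.}}{\sim}q_\omega$ and $\mathcal{A}^{(N)}=[\norm{\hat F_N-F}<\norm{F^{-1}}^{-1}]$.

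For the bias bound~\eqref{eq:upperboundconditionalexpectationA}, on $\mathcal{A}^{(N)}$ one has $\norm{I-F^{-1}\hat F_N}\le\norm{F^{-1}}\norm{F-\hat F_N}<1$, so $\hat F_N$ is invertible and the Neumann series yields $\hat F_N^{-1}-F^{-1}=\bigl(I-F^{-1}\hat F_N+\OO(\norm{I-F^{-1}\hat F_N}^2)\bigr)F^{-1}$ exactly as in~\eqref{eq:lemma7firsteq}. Assumption~\ref{ass:fourth_order_moment} gives finite variances for the entries of $s(X)s(X)^\top$, so a componentwise CLT applies to the unconditional matrices $I-F^{-1}\hat F_N$; since $\hat F_N\to F$ a.s.\ by the SLLN, $\mathds{1}[\mathcal{A}^{(N)}]\to 1$ a.s., and the CLT survives conditioning on $\mathcal{A}^{(N)}$. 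Hence $\sqrt N\,\norm{I-F^{-1}\hat F_N}_{\mathrm F}\mid\mathcal{A}^{(N)}=\OO_P(1)$, so $N\,\norm{I-F^{-1}\hat F_N}^2\mid\mathcal{A}^{(N)}=\OO_P(1)$, and taking expectations in the Neumann expansion gives~\eqref{eq:upperboundconditionalexpectationA}; the suprema defining $\mu_4$, $\nu$ (Assumption~\ref{ass:fourth_order_moment}) and $r$ (Assumption~\ref{ass:boundedfromzero}) make the $\OO$-constant uniform in $\omega$.

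For the probability bound I would use $\norm{F^{-1}}^{-1}\ge r$ (Assumption~\ref{ass:boundedfromzero}) to reduce to $P(\mathcal{A}(\omega)^c)\le P(\norm{\hat F_N-F}\ge r)\le r^{-2}\,\bbE[\norm{\hat F_N-F}^2]$ by Markov's inequality applied to $\norm{\hat F_N-F}^2$. The crucial step is then to control $\bbE[\norm{\hat F_N-F}^2]$ using a matrix concentration bound valid under only fourth-moment assumptions: by~\citep[Theorem 3.1]{10.1093/imaiai/ias001} one gets $\bbE[\norm{\hat F_N-F}^2]\le\sqrt{8e\log(m)/N}\,\mu_4\nu+8e\mu_4^2\sqrt{m}\log(m)/\sqrt N$. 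Imposing that the resulting bound on $P(\mathcal{A}(\omega)^c)$ be at most $\delta$ gives $\sqrt N\ge r^{-2}\delta^{-1}\bigl(\sqrt{8e\log(m)}\,\mu_4\nu+8e\mu_4^2\sqrt m\log(m)\bigr)$, the stated sample-size condition, and uniformity in $\omega$ is immediate since this bound does not depend on $\omega$.

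I expect the main obstacle to be the moment bound on $\bbE[\norm{\hat F_N-F}^2]$: unlike the bounded case, where a matrix Bernstein inequality (e.g.~\citep[Th.~1.62]{tropp2015}) gives a clean exponential tail, here one must invoke the heavier-tailed matrix concentration result and carefully track the dependence on $m$, $\mu_4$, and $\nu$ to land on the advertised constants. A secondary subtlety, already present in Lemma~\ref{lemma:inverse_g_compactA}, is arguing that the constant in the conditional CLT --- hence in the $\OO(N^{-1})$ bias --- can be chosen uniformly over $\omega\in\mathcal{W}$; this again rests on the finiteness of the suprema in Assumptions~\ref{ass:fourth_order_moment} and~\ref{ass:boundedfromzero}.
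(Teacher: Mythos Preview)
Your proposal is correct and follows essentially the same argument as the paper: the bias bound~\eqref{eq:upperboundconditionalexpectationA} is obtained by rerunning the Neumann-series/conditional-CLT argument of Lemma~\ref{lemma:inverse_g_compactA} with Assumption~\ref{ass:fourth_order_moment} supplying the second moments, and the high-probability bound comes from Markov/Chebyshev on $\norm{\hat F_N-F}^2$ combined with the moment bound of~\citep[Theorem 3.1]{10.1093/imaiai/ias001}. Your identification of the main obstacle (tracking constants through the matrix concentration result) is also on point.
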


    \subsection{Convergence analysis of the stochastic LSVI algorithm}
    The proof of Theorem~\ref{th:main_theorem} relies on Lemmas~\ref{lemma:equivalence},~\ref{lemma:technicallemma},~\ref{lemma:controlbias}, and~\ref{lemma:bound_variance}.
    Lemma~\ref{lemma:equivalence} states the equivalence between stochastic mirror descent and stochastic natural gradient descent, the proof is very similar to the non-stochastic case (Proposition~\ref{prop:gradient_descent}).
    Lemma~\ref{lemma:technicallemma} gives the general convergence rate for stochastic mirror descent with the presence of an additional bias under the assumption the bias has bounded variance. This is a generalisation of~\citet[][Th. 4.5]{Hanzely2021}.
    Both Lemmas~\ref{lemma:controlbias},~\ref{lemma:bound_variance} are required to handle the two first moments of the bias induced by inverting the FIM estimate. This analysis requires conditioning on the event that the estimated FIMs are well-conditioned, which happens with high-probability (Lemma~\ref{lemma:polynomial}).
    Theorem~\ref{th:main_theorem} follows by successively applying Lemma~\ref{lemma:equivalence} and Lemma~\ref{lemma:technicallemma}, the latter requires Lemmas~\ref{lemma:controlbias} and~\ref{lemma:bound_variance}.
    \begin{lemma}[Equivalence between stochastic mirror descent and stochastic natural gradient descent]
        \label{lemma:equivalence}
        Define the stochastic gradient $\hat{\nabla}_{\omega}l$ by
        \begin{equation}
            \label{eq:def_stochastic_gradient}
            \hat{\nabla}_{\omega}l : \omega\mapsto \eta(\omega)-\hat{F}_{\omega}^{-1}\hat{z}_{\omega},
        \end{equation}
        given that $\hat{F}_{\omega}$ is invertible.
        Then~\eqref{eq:km_lsvi_stochastic} is equivalent to
        \begin{equation}
            \label{eq:lsvi}
            \hat{\eta}_{t+1} = \hat{\eta}_{t}-\varepsilon_t \hat{\nabla}_{\omega}l(\hat{\omega}_{t}),
        \end{equation}
        where $\hat{\omega}_{t} = \omega(\hat{\eta}_{t})$.
        Furthermore, the previous dynamic is equivalent to
        \begin{equation}
            \label{eq:argmin_estimate}
            \hat{\omega}_{t+1} = \argmin_{\omega\in \mathcal{W}} \left\{  \hat{\nabla}^{\top}_\omega l(\hat{\omega}_{t}) \omega + \frac{1}{\varepsilon_t} D_{Z^*}(\omega, \hat{\omega}_{t})\right\},
        \end{equation}
        with $\hat{\eta}_{t+1}=\eta(\hat{\omega}_{t+1})$.

        \begin{proof}
            The first equivalence follows from the same computations as in Proposition~\ref{prop:gradient_descent}.
            Let us show that iteration~\eqref{eq:km_lsvi_stochastic} can be recovered as the dual in the natural parameter space of a stochastic mirror descent, i.e., that $\hat{\eta}_{t+1} = \eta(\hat{\omega}_{t+1})$ with $(\hat{\omega}_t)$ given by~\eqref{eq:argmin_estimate} recovers~\eqref{eq:km_lsvi_stochastic}.
            The first order condition on~\eqref{eq:argmin_estimate} gives
            \begin{equation}
                \label{eq:first_order_condition}
                \nabla Z^*(\hat{\omega}_{t+1}) = \nabla Z^*(\hat{\omega}_t)-\varepsilon_t \hat{\nabla}_{\omega}l(\hat{\omega}_t).
            \end{equation}
            However, since $\nabla Z^*(\hat{\omega}_{t+1}) = \eta(\hat{\omega}_{t+1}) = \hat{\eta}_{t+1}$, the desired equivalence between the two dynamics follows.
        \end{proof}
    \end{lemma}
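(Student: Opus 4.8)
The plan is to transcribe the deterministic argument of Proposition~\ref{prop:gradient_descent} into the stochastic setting, replacing the exact quantities $F_{\eta},z_{\eta}$ by their Monte Carlo counterparts $\hat{F}_{\omega},\hat{z}_{\omega}$. No new gradient identity is needed, because $\hat{\nabla}_{\omega}l$ in~\eqref{eq:def_stochastic_gradient} is \emph{defined} to be exactly the quantity that makes the first equivalence hold term by term; the work is therefore purely algebraic, modulo one appeal to Legendre duality.

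First I would prove that~\eqref{eq:km_lsvi_stochastic} and~\eqref{eq:lsvi} are equivalent. Starting from~\eqref{eq:lsvi} and substituting~\eqref{eq:def_stochastic_gradient} evaluated at $\hat{\omega}_t=\omega(\hat{\eta}_t)$, I use that $\eta$ and $\omega$ are mutual inverses (Assumption~\ref{ass:minimality_and_regularity}), so $\eta(\hat{\omega}_t)=\hat{\eta}_t$, together with $\hat{F}_{\hat{\omega}_t}=\hat{F}_{\hat{\eta}_t}$ and $\hat{z}_{\hat{\omega}_t}=\hat{z}_{\hat{\eta}_t}$ (both computed from draws of $q_{\hat{\eta}_t}$). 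This gives
\[
\hat{\eta}_{t+1}=\hat{\eta}_t-\varepsilon_t\bigl(\hat{\eta}_t-\hat{F}_{\hat{\eta}_t}^{-1}\hat{z}_{\hat{\eta}_t}\bigr)=(1-\varepsilon_t)\hat{\eta}_t+\varepsilon_t\hat{F}_{\hat{\eta}_t}^{-1}\hat{z}_{\hat{\eta}_t},
\]
which is precisely~\eqref{eq:km_lsvi_stochastic}; every step is reversible, and the computation only requires the hypothesised invertibility of $\hat{F}$.

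Next I would identify~\eqref{eq:lsvi} with the stochastic mirror step~\eqref{eq:argmin_estimate}. The objective in~\eqref{eq:argmin_estimate} equals $\varepsilon_t^{-1}Z^{*}(\omega)$ plus an affine function of $\omega$, hence is strictly convex on $\mathcal{W}$ since $Z^{*}$ is strictly convex under Assumption~\ref{ass:minimality_and_regularity}; its unique minimiser is thus characterised by the first-order condition, which reads $\hat{\nabla}_{\omega}l(\hat{\omega}_t)+\varepsilon_t^{-1}\bigl(\nabla Z^{*}(\hat{\omega}_{t+1})-\nabla Z^{*}(\hat{\omega}_t)\bigr)=0$, using $\nabla_{\omega}D_{Z^*}(\omega,\hat{\omega}_t)=\nabla Z^{*}(\omega)-\nabla Z^{*}(\hat{\omega}_t)$. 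Rearranging and invoking the duality identity $\nabla Z^{*}(\omega)=\eta(\omega)$ (already used in the proof of Proposition~\ref{prop:gradient_descent}), the right-hand side becomes $\hat{\eta}_t-\varepsilon_t\hat{\nabla}_{\omega}l(\hat{\omega}_t)$, so setting $\hat{\eta}_{t+1}=\eta(\hat{\omega}_{t+1})=\nabla Z^{*}(\hat{\omega}_{t+1})$ recovers~\eqref{eq:lsvi}; conversely, any $(\hat{\eta}_t)$ obeying~\eqref{eq:lsvi} yields, through $\hat{\omega}_{t+1}=\omega(\hat{\eta}_{t+1})$, the unique minimiser of~\eqref{eq:argmin_estimate}.

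I do not anticipate a genuine obstacle here: the statement is the stochastic transcription of Proposition~\ref{prop:gradient_descent}. The only points requiring care are well-definedness — $\hat{F}$ must be invertible (hypothesised, and ensured with high probability by Lemma~\ref{lemma:polynomial}) and $\hat{\omega}_{t+1}$ must lie in $\mathcal{W}$ so that $\eta(\hat{\omega}_{t+1})$ makes sense, which is where a sufficiently small $\varepsilon_t$ (cf.\ the discussion around~\eqref{eq:krasnoselskii-mann} and Algorithm~\ref{alg:recipe_backracking_variance_control}) is used — and the remark that the first-order condition is not merely necessary but sufficient, which follows from strict convexity of the mirror objective.
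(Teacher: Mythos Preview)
Your proposal is correct and follows essentially the same route as the paper's proof: an algebraic unwinding of~\eqref{eq:def_stochastic_gradient} for the first equivalence (the paper simply cites Proposition~\ref{prop:gradient_descent}), and the first-order condition on~\eqref{eq:argmin_estimate} combined with $\nabla Z^{*}=\eta$ for the second. Your version is slightly more explicit about strict convexity of the mirror objective and well-definedness, but the argument is the same.
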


    \begin{lemma}[General convergence for biased stochastic mirror descent]
        \label{lemma:technicallemma}
        Let us define the bias $B_t$ of the stochastic gradient at iteration $t$ by
        \begin{equation}
            B_t = \bbE[\hat{\nabla}_{\omega}l(\hat{\omega}_t)-\nabla_{\omega}l(\hat{\omega}_t)\mid  \hat{\omega}_t],
        \end{equation}
        given that $\hat{F}_{\hat{\omega}_t}$ is invertible, and let us denote by $m(\hat{\omega}_t) \coloneq \omega_{t+1, *}$ the exact mirror-descent iterate starting from $\hat{\omega}_t$, i.e.,
        \begin{equation}
            \label{eq:exact_md_starting_point_sto}
            \omega_{t+1, *} = \argmin_{\omega\in\mathcal{W}} \left\{ \nabla^{\top}_{\omega}l(\hat{\omega}_t) \omega + \varepsilon_t^{-1}D_{Z^*}(\omega,\hat{\omega}_t)\right\}.
        \end{equation}
        Assume there exists $\sigma^2 > 0$ (to be specified later) such that
        for any $t\geq 0$,
        \begin{equation}
            \label{eq:bounded_noiseass}
            \bbE\left[ B_t^{\top} (\omega_{t+1, *}-\hat{\omega}_{t+1})\middle| \hat{\omega}_t\right] \leq \sigma^2 \varepsilon_t.
        \end{equation}
        Let $\varepsilon_t \leq \frac{1}{L}\wedge \frac{1}{\mu}$ for all $t\geq 0$, let $c_t =c_{t-1} \varepsilon_{t-1}^{-1} (\varepsilon_t^{-1}-\mu)^{-1} $ for $t\geq 1$, and let $c_0 = 1$.
        Let $C_k = \sum_{t=1}^{k}c_{t-1}$ for $k\geq 1$.
        Then, under Assumptions~\ref{ass:minimality_and_regularity},~\ref{ass:convexsmooth}, and the additional bounded-noise assumption~\eqref{eq:bounded_noiseass},
        \begin{align}
            \label{eq:res}
            \begin{split}
                \frac{1}{C_k}\sum_{t=1}^{k} c_{t-1}\bbE[l(\hat{\omega}_t)-l(\omega^*)] \leq \frac{(\varepsilon_0^{-1}-\mu)\uKL(q_{\omega^{*}}\mid q_{\omega_0})}{C_k} &+\sigma^2\sum_{t=0}^{k-1} \frac{c_t\varepsilon_t}{C_k} \\
                &+ \sum_{t=0}^{k-1} \frac{c_t}{C_k}\bbE[ B_t^{\top}   (\omega^*-\hat{\omega}_{t+1}) ].
            \end{split}
        \end{align}
        \begin{proof}
            Assumption~\ref{ass:minimality_and_regularity} allows us to define~\eqref{eq:exact_md_starting_point_sto}.
            Under Assumption~\ref{ass:convexsmooth} and the boundedness of the gradient estimate~\eqref{eq:bounded_noiseass}, we can derive a slightly modified version of the descent lemma~\citep[][Lemma 4.3]{Hanzely2021} which accounts for the presence of the bias.
            Next line follows from the calculations done in the proof of~\citet[Lemma 4.3]{Hanzely2021}:
            \begin{equation}
                \label{eq:descent_applied1}
                \begin{split}
                    \bbE[l(\hat{\omega}_{t+1})-l(\omega^*)\mid \hat{\omega}_t] &\leq \left(\frac{1}{\varepsilon_t} - \mu\right)D_{Z^{*}}(\omega^*, \hat{\omega}_t) - \frac{1}{\varepsilon_t} \bbE[D_{Z^{*}}(\omega^*, \hat{\omega}_{t+1})\mid \hat{\omega}_t]  \\
                    &\quad + \varepsilon_t \sigma^2 - \left(\frac{1}{\varepsilon_t}-L\right)\bbE[D_{Z^{*}}(\hat{\omega}_{t+1}, \hat{\omega}_t)\mid \hat{\omega}_t] + B_t^{\top}  (\omega^*-\omega_{t+1, *}) \\
                    &\quad - \bbE[ B_t^{\top} (\hat{\omega}_{t+1}-\omega_{t+1,*}) \mid  \hat{\omega}_t],
                \end{split}
            \end{equation}
            where $\omega^* = \argmin_{\omega\in\mathcal{W}} l(\omega)$.
            Since $\varepsilon_t^{-1} \geq L$, the fourth term is negative.
            Therefore,~\eqref{eq:descent_applied1} becomes
            \begin{equation}
                \label{eq:descent_applied2}
                \begin{split}
                    \bbE[l(\hat{\omega}_{t+1})-l(\omega^*)\mid \hat{\omega}_t] &\leq \left(\frac{1}{\varepsilon_t} - \mu\right)D_{Z^{*}}(\omega^*, \hat{\omega}_t)- \frac{1}{\varepsilon_t} \bbE[D_{Z^{*}}(\omega^*, \hat{\omega}_{t+1})\mid \hat{\omega}_t]\\
                    &\quad + \varepsilon_t \sigma^2 + B_t^{\top}  (\omega^*-\omega_{t+1, *}) - \bbE[ B_t^{\top}( \hat{\omega}_{t+1}-\omega_{t+1,*}) \mid \hat{\omega}_t].
                \end{split}
            \end{equation}
            Taking the expectation of~\eqref{eq:descent_applied2} gives
            \begin{equation}
                \label{eq:descent_lemma}
                \begin{split}
                    \bbE[l(\hat{\omega}_{t+1})-l(\omega^*)] &\leq \left(\frac{1}{\varepsilon_t} - \mu\right)\bbE[D_{Z^{*}}(\omega^*, \hat{\omega}_t)] - \frac{1}{\varepsilon_t} \bbE[D_{Z^{*}}(\omega^*, \hat{\omega}_{t+1})] \\
                    & \quad + \varepsilon_t \sigma^2 + \bbE[ B_t^{\top} (\omega^*-\hat{\omega}_{t+1})].
                \end{split}
            \end{equation}
            Let $c_t = c_{t-1}\varepsilon_{t-1}^{-1}(\varepsilon_t^{-1}-\mu)^{-1} $ for $t\geq 1$, and let $c_0 = 1$.
            Let $k\geq 1$ and define $C_k = \sum_{t=1}^{k}c_{t-1}$.
            Since $\varepsilon_t \leq \frac{1}{\mu}$, we have $c_t\geq 0$.
            Multiply by $c_t\geq 0$~\eqref{eq:descent_lemma} and sum for $t\in [1, k]$, then divide by $C_k$,
            \begin{equation}
                \label{eq:before_res}
                \begin{split}
                    \sum_{t=1}^{k} \frac{c_{t-1}}{C_k}\bbE[l(\hat{\omega}_t)-l(\omega^*)] &\leq \frac{(\varepsilon_0^{-1}-\mu)D_{Z^*}(\omega^{*}, \omega_0)}{C_k} +\sigma^2\sum_{t=0}^{k-1} \frac{c_t\varepsilon_t}{C_k} + \sum_{t=0}^{k-1} \frac{c_t}{C_k}\bbE[ B_t^{\top},   (\omega^*-\hat{\omega}_{t+1} )].
                \end{split}
            \end{equation}
            We essentially recover~\citet[Th. 4.5]{Hanzely2021}, but with the additional bias terms.
            Finally,~\eqref{eq:res} follows from~\eqref{eq:before_res} and $D_{Z^*}(\omega^{*}, \omega_0) = \uKL(q_{\omega^*} \mid q_{\omega_0})$.
        \end{proof}
    \end{lemma}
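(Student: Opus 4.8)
The plan is to adapt the deterministic mirror-descent convergence analysis of \citet{Hanzely2021} (their Lemma~4.3 and Theorem~4.5) to the biased stochastic setting, the genuinely new work being to carry the bias $B_t$ cleanly through the argument. The spine of the proof is a per-iteration \emph{biased descent inequality}. Starting from the current iterate $\hat{\omega}_t$, I would compare the realised stochastic update $\hat{\omega}_{t+1}$ to the exact mirror step $\omega_{t+1,*} = m(\hat{\omega}_t)$ launched from the same point, defined in~\eqref{eq:exact_md_starting_point_sto}. Using the first-order optimality condition of the stochastic mirror update from Lemma~\ref{lemma:equivalence}, the three-point identity for the Bregman divergence $D_{Z^*}$, and the $L$-smoothness / $\mu$-strong convexity of $l$ relative to $D_{Z^*}$ (Assumption~\ref{ass:convexsmooth}), I expect to reproduce the familiar three-term skeleton $(\varepsilon_t^{-1}-\mu)D_{Z^*}(\omega^*,\hat{\omega}_t) - \varepsilon_t^{-1}D_{Z^*}(\omega^*,\hat{\omega}_{t+1}) - (\varepsilon_t^{-1}-L)D_{Z^*}(\hat{\omega}_{t+1},\hat{\omega}_t)$ exactly as in the unbiased case, now as an upper bound on $\bbE[l(\hat{\omega}_{t+1})-l(\omega^*)\mid\hat{\omega}_t]$.

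The bias enters once I substitute the decomposition $\hat{\nabla}_\omega l(\hat{\omega}_t) = \nabla_\omega l(\hat{\omega}_t) + B_t + \xi_t$, where $\xi_t$ is the conditionally zero-mean part. The second-order effect of $\xi_t$ is handled as in \citet{Hanzely2021} and, under the bounded-noise hypothesis~\eqref{eq:bounded_noiseass}, contributes an additive $\sigma^2\varepsilon_t$. The two linear-in-$B_t$ remainders produced by the deviation of $\hat{\omega}_{t+1}$ from $\omega_{t+1,*}$ are a reference term $B_t^\top(\omega^*-\omega_{t+1,*})$ and a coupling term $-\bbE[B_t^\top(\hat{\omega}_{t+1}-\omega_{t+1,*})\mid\hat{\omega}_t]$; after taking full expectations, the elementary identity $(\omega^*-\omega_{t+1,*}) - (\hat{\omega}_{t+1}-\omega_{t+1,*}) = \omega^*-\hat{\omega}_{t+1}$ collapses them into the single term $\bbE[B_t^\top(\omega^*-\hat{\omega}_{t+1})]$ that survives into~\eqref{eq:res}. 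Since the step sizes satisfy $\varepsilon_t\le 1/L$, the smoothness term $-(\varepsilon_t^{-1}-L)D_{Z^*}(\hat{\omega}_{t+1},\hat{\omega}_t)$ is nonpositive and can be discarded.

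The final step is the weighted telescoping. The constraint $\varepsilon_t\le 1/\mu$ guarantees $c_t\ge 0$, so multiplying the one-step bound by $c_t$ preserves the inequality, and the recursion $c_t(\varepsilon_t^{-1}-\mu) = c_{t-1}\varepsilon_{t-1}^{-1}$ is designed precisely so that the coefficient $c_t(\varepsilon_t^{-1}-\mu)$ of the incoming divergence $D_{Z^*}(\omega^*,\hat{\omega}_t)$ at step $t$ equals the coefficient $c_{t-1}\varepsilon_{t-1}^{-1}$ of the outgoing divergence at step $t-1$. Summing over $t=0,\dots,k-1$ (so the left-hand iterates run over $\hat{\omega}_1,\dots,\hat{\omega}_k$) telescopes every Bregman divergence except the initial $(\varepsilon_0^{-1}-\mu)D_{Z^*}(\omega^*,\omega_0)$ and a final nonpositive term that I would drop; dividing by $C_k$ and using $D_{Z^*}(\omega^*,\omega_0)=\uKL(q_{\omega^*}\mid q_{\omega_0})$ yields~\eqref{eq:res}.

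I expect the main obstacle to be the biased descent inequality itself: faithfully pushing the decomposition $\hat{\nabla}_\omega l = \nabla_\omega l + B_t + \xi_t$ through the three-point identity so that the coupling remainder appears in exactly the inner-product form $\bbE[B_t^\top(\hat{\omega}_{t+1}-\omega_{t+1,*})\mid\hat{\omega}_t]$ controlled by~\eqref{eq:bounded_noiseass}, while keeping the zero-mean contribution cleanly isolated so that it alone generates the $\sigma^2\varepsilon_t$ term. The remaining work—checking the sign conditions on $c_t$ and performing the telescoping sum—is routine algebra once the one-step bound is in place.
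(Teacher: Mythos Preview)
Your proposal is correct and mirrors the paper's own proof essentially step for step: the paper likewise invokes a biased variant of \citet[Lemma~4.3]{Hanzely2021} to obtain the one-step inequality with the two $B_t$-terms $B_t^\top(\omega^*-\omega_{t+1,*})$ and $-\bbE[B_t^\top(\hat{\omega}_{t+1}-\omega_{t+1,*})\mid\hat{\omega}_t]$ plus the $\sigma^2\varepsilon_t$ contribution, drops the $-(\varepsilon_t^{-1}-L)D_{Z^*}(\hat{\omega}_{t+1},\hat{\omega}_t)$ term using $\varepsilon_t\le 1/L$, takes full expectation so the two bias terms collapse to $\bbE[B_t^\top(\omega^*-\hat{\omega}_{t+1})]$, and then telescopes with the weights $c_t$ exactly as you describe. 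Your decomposition $\hat{\nabla}_\omega l = \nabla_\omega l + B_t + \xi_t$ is a bit more explicit than the paper's account, but the argument is the same.
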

    \begin{remark}
        The previous lemma requires a boundedness assumption on the gradient estimate given by~\eqref{eq:bounded_noiseass}.
        This assumption is typically required for proving such descent lemmas, see~\citep{wu2024understandingstochasticnaturalgradient, Hanzely2021, pmlr-v162-scaman22a, batardière2024importancesamplingbasedgradientmethod, doi:10.1137/16M1080173}.
        In particular, this assumption is implied, by Cauchy-Schwarz inequality, if the gradient estimate $\hat{\nabla}_{\omega}l$ mapping given by~\eqref{eq:def_stochastic_gradient} has variance bounded by $\sigma^2$, or directly if the gradient estimate is unbiased.
        We will prove it is satisfied under Assumptions~\ref{ass:fourth_order_moment} and~\ref{ass:moment_two_f} in Lemma~\ref{lemma:bound_variance}.
    \end{remark}
    \begin{lemma}[Controlling the bias terms in~\eqref{eq:res}]
        \label{lemma:controlbias}
        Let $k\geq 0$, and let $\mathcal{A}_k = \cap_{t=0}^{k} \mathcal{A}(\hat{\omega}_t)$ with $\mathcal{A}(\omega) = [\norm{F_\omega - \hat{F}_\omega}<\norm{F_{\omega}^{-1}}^{-1}]$, then under Assumptions~\ref{ass:minimality_and_regularity},~\ref{ass:fourth_order_moment},~\ref{ass:boundedfromzero} and~\ref{ass:moment_two_f}, for any $t\in [0, k]$
        \begin{equation}
            \begin{split}
                \label{eq:desired_resultlemma9}
                \bbE[ B_t^{\top}( \omega^*-\hat{\omega}_{t+1})\mid \mathcal{A}_k] \leq \OO(N^{-1}) \times m_2 m^{1/2}\mu_4 \times \left(\bbE[\norm{ \omega^*-\hat{\omega}_{t+1}}^2\mid \mathcal{A}_k]\right)^{1/2}.
            \end{split}
        \end{equation}
        \begin{proof}
            By Cauchy Schwarz inequality, for any $t\in [0, k]$,
            \begin{align}
                \label{eq:application_cauchyschwarz}
                \bbE[ B_t^{\top} (\omega^*-\hat{\omega}_{t+1})\mid \mathcal{A}_k] \leq \left(\bbE[\norm{B_t}^2\mid \mathcal{A}_k]\bbE[\norm{ \omega^*-\hat{\omega}_{t+1}}^2\mid \mathcal{A}_k]\right)^{1/2}.
            \end{align}
            Conditionally on $\mathcal{A}_k$, $\hat{F}_{\hat{\omega}_t}$ is invertible, and by Lemma~\ref{lemma:polynomial}, there exists $C> 0$ such that for $N$ large enough, $N \norm{\bbE[F^{-1}_{\hat{\omega}_t}-\hat{F}^{-1}_{\hat{\omega}_t}\mid \mathcal{A}_k, \hat{\omega}_t]}\leq C$.
            Consequently,
            \begin{equation}
                \begin{split}
                    \label{eq:boundnormbiassquared}
                    N^2\norm{B_t}^2 &= N^2\norm{\bbE[F^{-1}_{\hat{\omega}_t}z_{\hat{\omega}_t}-\hat{F}^{-1}_{\hat{\omega}_t} \hat{z}_{\hat{\omega}_t}\mid \mathcal{A}_k, \hat{\omega}_t]}^2\\
                    &=N^2 \norm{\bbE[F^{-1}_{\hat{\omega}_t}-\hat{F}^{-1}_{\hat{\omega}_t} \mid \mathcal{A}_k, \hat{\omega}_t] z_{\hat{\omega}_t}}^2 \\
                    &\leq N^2 \norm{\bbE[F^{-1}_{\hat{\omega}_t}-\hat{F}^{-1}_{\hat{\omega}_t}\mid \mathcal{A}_k, \hat{\omega}_t]}^2 \norm{z_{\hat{\omega}_t}}^2\\
                    &= C \norm{z_{\hat{\omega}_t}}^2\\
                    &= C m_2^2 m  \mu_4^2,
                \end{split}
            \end{equation}
            where we used to go from the first to the second line, the independency of $z_{\hat{\omega}_t}$ with $\hat{F}_{\hat{\omega}_t}$ conditioned on $\hat{\omega}_t$ and $\bbE[\hat{z}_{\hat{\omega}_t}\mid \hat{\omega}_t] = z_{\hat{\omega}_t}$, and to go from the third to the fourth line, we use Lemma~\ref{lemma:polynomial}, requiring Assumptions~\ref{ass:minimality_and_regularity},~\ref{ass:fourth_order_moment} and~\ref{ass:boundedfromzero}, and to go from the fourth to the fifth line, we use bounds on the moment of $s^2$ and $f^2$ given by Assumptions~\ref{ass:fourth_order_moment},~\ref{ass:moment_two_f}:
            \begin{equation}
                \begin{split}
                    \norm{z_\omega} &\leq (\bbE\norm{s}^2)^{1/2} (\bbE f^2)^{1/2}\\
                    &\leq \left(m\sup_{\omega\in \mathcal{W}} \max_{1\leq i\leq m} \bbE\abs{s}_i^2\right)^{1/2} \times m_2\\
                    &\leq m^{1/2} \mu_4 m_2.
                \end{split}
            \end{equation}
            Taking the expectation of~\eqref{eq:boundnormbiassquared} conditioned on $\mathcal{A}_k$ and plugging it into~\eqref{eq:application_cauchyschwarz} yields~\eqref{eq:desired_resultlemma9}.
        \end{proof}
    \end{lemma}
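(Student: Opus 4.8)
The plan is to bound the bias--error inner product by Cauchy--Schwarz and then control the two resulting factors separately, with essentially all the work devoted to the bias norm $\norm{B_t}$. First I would apply the conditional Cauchy--Schwarz inequality to the (vector) inner product,
\[
\bbE[B_t^{\top}(\omega^*-\hat{\omega}_{t+1})\mid \mathcal{A}_k]
\leq \left(\bbE[\norm{B_t}^2\mid \mathcal{A}_k]\right)^{1/2}\left(\bbE[\norm{\omega^*-\hat{\omega}_{t+1}}^2\mid \mathcal{A}_k]\right)^{1/2},
\]
so that the second factor is carried through unchanged to the right-hand side of the claim and the whole lemma reduces to showing $(\bbE[\norm{B_t}^2\mid\mathcal{A}_k])^{1/2} = \OO(N^{-1})\,m^{1/2}\mu_4 m_2$.

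Next I would make $B_t$ explicit. Reading $\hat\nabla_\omega l(\omega)=\eta(\omega)-\hat F_\omega^{-1}\hat z_\omega$ from~\eqref{eq:def_phi} and $\nabla_\omega l(\omega)=\eta(\omega)-F_\omega^{-1}z_\omega$ from~\eqref{eq:second_iteration_gd}, the $\eta(\hat\omega_t)$ terms cancel and
\[
B_t=\bbE\!\left[F_{\hat\omega_t}^{-1}z_{\hat\omega_t}-\hat F_{\hat\omega_t}^{-1}\hat z_{\hat\omega_t}\ \middle|\ \mathcal{A}_k,\hat\omega_t\right].
\]
The crucial structural point is the hypothesis (inherited from Theorem~\ref{th:main_theorem}) that $\hat F_{\hat\omega_t}$ and $\hat z_{\hat\omega_t}$ are built from two \emph{independent} sample batches. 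Since the event $\mathcal{A}(\hat\omega_t)$ constrains only $\hat F_{\hat\omega_t}$, conditioning on it (given $\hat\omega_t$) leaves $\hat z_{\hat\omega_t}$ independent of $\hat F_{\hat\omega_t}$ and does not disturb its mean, so $\bbE[\hat z_{\hat\omega_t}\mid\hat\omega_t]=z_{\hat\omega_t}$ remains valid. This lets the expectation of the product factorise into a product of expectations, yielding the clean identity $B_t=\bbE[F_{\hat\omega_t}^{-1}-\hat F_{\hat\omega_t}^{-1}\mid\mathcal{A}_k,\hat\omega_t]\,z_{\hat\omega_t}$.

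From here I would use submultiplicativity of the operator norm, $\norm{B_t}\le \norm{\bbE[F_{\hat\omega_t}^{-1}-\hat F_{\hat\omega_t}^{-1}\mid\mathcal{A}_k,\hat\omega_t]}\,\norm{z_{\hat\omega_t}}$. The first factor is precisely what Lemma~\ref{lemma:polynomial} bounds: conditioned on the well-conditioning event it is $\OO(N^{-1})$ with a constant uniform in $\hat\omega_t$, which is where Assumptions~\ref{ass:minimality_and_regularity},~\ref{ass:fourth_order_moment} and~\ref{ass:boundedfromzero} enter. For the second factor I would apply Cauchy--Schwarz, $\norm{z_\omega}\le(\bbE_\omega\norm{s}^2)^{1/2}(\bbE_\omega f^2)^{1/2}$, then bound $\bbE_\omega\norm{s}^2\le m\max_i\bbE_\omega\abs{s_i}^2\le m\mu_4^2$ by Jensen together with Assumption~\ref{ass:fourth_order_moment}, and $\bbE_\omega f^2\le m_2^2$ by Assumption~\ref{ass:moment_two_f}; hence $\norm{z_{\hat\omega_t}}\le m^{1/2}\mu_4 m_2$ uniformly. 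Taking the conditional expectation of $\norm{B_t}^2$ and a square root gives $(\bbE[\norm{B_t}^2\mid\mathcal{A}_k])^{1/2}=\OO(N^{-1})\,m^{1/2}\mu_4 m_2$, and substitution into the Cauchy--Schwarz split delivers~\eqref{eq:desired_resultlemma9}.

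I expect the main obstacle to be the conditioning bookkeeping in the factorisation step: one must be careful that conditioning on $\mathcal{A}_k$ --- an event defined through all the estimated Fisher matrices $\hat F_{\hat\omega_0},\dots,\hat F_{\hat\omega_k}$ --- does not couple $\hat z_{\hat\omega_t}$ to $\hat F_{\hat\omega_t}$ nor bias its mean. The two-batch independence assumption is exactly what rescues this; without it, $\bbE[\hat F^{-1}\hat z\mid\cdots]$ would not split into a product and the clean $\norm{F^{-1}-\hat F^{-1}}\cdot\norm{z}$ bound would fail. Once the uniform $\OO(N^{-1})$ control from Lemma~\ref{lemma:polynomial} is in hand, the remaining moment estimates are routine.
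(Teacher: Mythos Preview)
Your proposal is correct and essentially identical to the paper's proof: the same Cauchy--Schwarz split, the same factorisation $B_t=\bbE[F^{-1}_{\hat\omega_t}-\hat F^{-1}_{\hat\omega_t}\mid\mathcal{A}_k,\hat\omega_t]\,z_{\hat\omega_t}$ via the two-batch independence, the same invocation of Lemma~\ref{lemma:polynomial} for the $\OO(N^{-1})$ bound on the inverse-FIM bias, and the same moment bound $\norm{z_\omega}\le m^{1/2}\mu_4 m_2$. Your discussion of the conditioning bookkeeping is, if anything, more explicit than the paper's.
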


    \begin{lemma}[High-probability uniform bound for the variance of the gradient estimate]
        \label{lemma:bound_variance}
        Let $\varepsilon > 0$.
        For any $\omega \in \mathcal{W}$, let $M(\omega) = \argmin_{\omega'\in \mathcal{W}}\{\nabla_\omega^{\top}l(\omega)\omega' + \varepsilon D_{Z^*}(\omega', \omega)\}$ be the exact mirror-descent iterate starting from $\omega$ with step size $\varepsilon$.
        Similarly, let $\hat{M}(\omega)$ be the mirror-descent using gradient estimate $\hat{\nabla}_\omega l$ given by~\eqref{eq:def_stochastic_gradient}.
        Let $\sigma^2(\omega)$ be defined by
        \begin{equation}
            \label{eq:boundonsigmasq}
            \sigma^2(\omega) = \varepsilon^{-1}\bbE[(\bbE[\hat{\nabla}_\omega l(\omega)] - \hat{\nabla}_{\omega}l(\omega))^{\top}(M(\omega)-\hat{M}(\omega))],
        \end{equation}
        where all the expectations are taken conditioned on $\mathcal{A}(\omega)$.
        Under Assumptions~~\ref{ass:minimality_and_regularity},~\ref{ass:fourth_order_moment}, and~\ref{ass:moment_two_f},
        there exists some constant $C>0$, such that for $N$ large enough (see Lemma~\ref{lemma:polynomial}), and any $\omega\in \mathcal{W}$,
        \begin{equation}
            \sigma^2(\omega) \leq N^{-1}Z_{\omega}(C+o_\varepsilon(1)),
        \end{equation}
        for some constant $C>0$, and the little-$o$ term is independent of $\omega, N$.
        \begin{proof}
            To obtain the uniform bound on~\eqref{eq:boundonsigmasq}, we independently bound both terms in the scalar product.
            Let $\omega\in\mathcal{W}$, and let $\eta$ be the corresponding natural parameter, $\eta = \eta(\omega)$.
            Conditionally on $\mathcal{A}(\omega)$, by the computations done in the
            proof of Lemmas~\ref{lemma:inverse_g_compactA}
            and~\ref{lemma:polynomial} there exists a constant $C>0$ such that for
            $N$ large enough, $N\bbE[\norm{\hat{F}^{-1}_\omega - F^{-1}_\omega}^2]
            \leq C$.

            Using the definition of the stochastic gradient $\hat{\nabla}l$ given by~\eqref{eq:def_stochastic_gradient}, and the previous bound, the first term is bounded by:
            \begin{equation}
                \label{eq:boud1}
                \begin{split}
                    N\bbE[\norm{\bbE[\hat{\nabla}_\omega l(\omega)] - \hat{\nabla}_\omega l }^2] &= N \bbE \left[\norm{(\bbE[ \hat{F}_\omega^{-1}] - \hat{F}^{-1}_\omega)z_{\omega}+\hat{F}_\omega^{-1}(z_\omega - \hat{z}_\omega)}^2 \right]\\
                    &\leq 2 N \left(\bbE[\norm{(\bbE [\hat{F}_\omega^{-1}]-\hat{F}_\omega^{-1})z_{\omega}}^2] + \bbE[\norm{\hat{F}_\omega^{-1}(z_{\omega}-\hat{z}_\omega)}^2] \right)\\
                    &\leq 2N \big( \norm{z_{\omega}}^2 \times \bbE[\norm{(\bbE[\hat{F}_\omega^{-1}]-\hat{F}_\omega^{-1})}^2] \\
                    &\quad + \bbE[\norm{\hat{F}_\omega^{-1}(z_{\omega}-\hat{z}_\omega)}^2]\big)\\
                    &\leq 2N \left(m\mu_4^2 m_2^2 \times C/N + \bbE\norm{\hat{F}_\omega^{-1}(z_{\omega}-\hat{z}_\omega)}^2 \right),
                \end{split}
            \end{equation}
            where we used $\norm{z_\omega}^2 \leq m\mu_4^2 m_2^2$.
            Furthermore, we can bound the last term in~\eqref{eq:boud1} by
            \begin{equation}
                \label{eq:boud2}
                \begin{split}
                    N\bbE[\norm{\hat{F}_\omega^{-1}(z_{\omega}-\hat{z}_\omega)}^2] &\leq N\bbE[\norm{\hat{F}_\omega^{-1}}^2]\bbE[\norm{z_{\omega}-\hat{z}_\omega}^2]\\
                    &\leq 2N\{\bbE[\norm{\hat{F}_\omega^{-1}-F_\omega^{-1}}^2 + \bbE\norm{F_\omega^{-1}}^2]\}\bbE[\norm{z_\omega - \hat{z}_\omega}^2]\\
                    &\leq 2N ( r^{-2} + C/N ) \times \bbE[\norm{z_{\omega}-\hat{z}_\omega}^2]\\
                    &\leq 2N( r^{-2} + C/N)\times C'/N
                \end{split}
            \end{equation}
            where we used that $r^{-2} = \sup_{\omega} \norm{F_\omega^{-1}}^2$, and the CLT theorem for the variance of $\hat{z}_\omega$, which gives us in particular that there exists some constant $C'>0$ such that for $N$ large enough, $\bbE[\norm{z_{\omega}-\hat{z}_\omega}^2] \leq C'/N$.
            Gathering~\eqref{eq:boud1} and~\eqref{eq:boud2} yields for the first term of the scalar product:
            \begin{equation}
                \label{eq:eqfirstterm}
                N\bbE[\norm{\bbE[\hat{\nabla}_\omega l(\omega)] - \hat{\nabla}_\omega l }^2] \leq 2Cm\mu_4^2 m_2^2 + 4C'(r^{-2}+C/N),
            \end{equation}
            which in turn can be bounded by some constant $C''>0$ independent of $\omega$.
            Let us tackle the second term inside the scalar product.
            By Proposition~\ref{prop:gradient_descent},
            \begin{equation}
                M(\omega) = \omega(\eta- \varepsilon\nabla_{\omega}l(\omega)),
            \end{equation}
            and similarly for $\hat{M}$,
            \begin{equation}
                \hat{M}(\omega) = \omega(\eta - \varepsilon \hat{\nabla}_{\omega}l(\omega)).
            \end{equation}

            Under Assumption~\ref{ass:minimality_and_regularity}, the mapping $\omega : \eta\in \mathcal{V}\mapsto \omega(\eta)$ is differentiable with $\nabla_\eta\omega= Z_\eta F_{\eta}=\int s s^{\top}q_{\eta}$, see~\eqref{eq:nablaetaomega}.
            Let us denote by $H_i = D^2_\eta \omega^{(i)}$ the Hessian of the $i$-th component application of $\omega$ for any $1\leq i\leq m$, which is a $\bbR^{m\times m}$ matrix given by $D^2_\eta \omega^{(i)} =  \int s_i s s^{\top} q_{\eta}$,
            and let $D^2_\eta \omega = \left( H_1, H_2, \ldots, H_m\right)^{\top}$ be the collection of the Hessian matrices.
            For any $h\in \bbR^m$, let us denote by $D^2_\eta \omega [h, h] =  D^2_\eta\omega[h]^2 = \left(h^{\top}H_1h, \ldots, h^{\top} H_m h\right)^{\top}\in \bbR^m$.
            A Taylor expansion with Lagrange remainder yields,
            \begin{equation}
                \label{eq:eqsecondterm}
                \begin{split}
                    M(\omega)-\hat{M}(\omega)&= \omega(\eta - \varepsilon \nabla_{\omega}l(\omega)) - \omega(\eta - \varepsilon \hat{\nabla}_{\omega}l(\omega))\\
                    &= \varepsilon (\nabla_{\eta} \omega )^{\top}(\hat{\nabla}_\omega l(\omega)- \nabla_{\omega}l(\omega))+\varepsilon^2 \int_{0}^{1} (1-t) \big\{D^2_{\eta}\omega(\eta - t\varepsilon \nabla_\omega l(\omega))[\nabla_\omega l(\omega)]^2.\\
                    &\quad -D^2_{\eta}\omega(\eta - t\varepsilon \hat{\nabla}_\omega l(\omega))[\hat{\nabla}_\omega l(\omega)]^2\big\}\dd t\\
                \end{split}
            \end{equation}
            Let $\hat{R}$ be the $\varepsilon^2$ remainder term in~\eqref{eq:eqsecondterm}.
            Then, $\hat{R}/Z_{\omega}$ is a $\bbR^m$ vector whose norm can be uniformly bounded using the uniform bounds on the fourth-moment of $s$ using similar techniques as for the bound on $\norm{F_\omega}$ (see below), we omit the details.
            This implies that $\varepsilon^2 \hat{R} = Z_{\omega}o(\varepsilon)$ with constant in the little-$o$ terms independent on $\omega$.
            Consequently,
            \begin{equation}
                \begin{split}
                    \label{eq:secondtterm}
                    \sqrt{N}(\bbE[\norm{M(\omega)-\hat{M}(\omega)}^2])^{1/2} &=(\bbE[\norm{\varepsilon Z_\omega F_{\omega}(\hat{\nabla}_\omega l(\omega)-\nabla_{\omega}l(\omega))+Z_{\omega}o(\varepsilon)}^2])^{1/2}\\
                    &\leq 2\sqrt{N}\varepsilon Z_{\omega} \norm{F_{\omega}} (\bbE[\norm{\hat{\nabla}_\omega l(\omega)-\nabla_{\omega}l(\omega)}^2])^{1/2} + Z_{\omega}o(\varepsilon)\\
                    &\leq 2\varepsilon Z_{\omega} (\norm{F_\omega} C^{(3)} + o(1))\\
                    &\leq 2\varepsilon Z_{\omega} (m \mu_4^2 C^{(3)} + o(1))\\
                    &\leq \varepsilon Z_{\omega} (C^{(4)} + o(1)),
                \end{split}
            \end{equation}
            for some constant $C^{(4)}>0$, and where we used $\norm{F_\omega}^2 \leq m^2 \mu_4^4$:
            \begin{equation}
                \begin{split}
                    \norm{F_\omega}^2 &\leq \norm{F_\omega}^2_{F}\\
                    &\leq \sum_{1\leq i,j\leq m} \sqrt{\bbE s_i^4 \bbE s_j^4}\\
                    &\leq m^2 \mu_4^4,
                \end{split}
            \end{equation}
            using Jensen's inequality and Cauchy Schwarz inequality.
            By Cauchy Schwarz inequality,~\eqref{eq:eqfirstterm} and~\eqref{eq:secondtterm}, for $N$ large enough,
            \begin{equation}
                \label{eq:bounds2}
                \begin{split}
                    N\sigma^2(\omega)&\leq \varepsilon^{-1}N (\bbE[\norm{\bbE[\hat{\nabla}_\omega l(\omega)] - \hat{\nabla}_{\omega}l(\omega)}^2] \bbE[\norm{M(\omega)-\hat{M}(\omega)}^2])^{1/2}\\
                    &\leq (C^{(4)}\sqrt{ C''} + o(1)) Z_{\omega}.
                \end{split}
            \end{equation}
            This concludes the proof.
        \end{proof}
    \end{lemma}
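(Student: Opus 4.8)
The plan is to bound $\sigma^2(\omega)$ via the Cauchy--Schwarz inequality applied to the scalar product defining it in~\eqref{eq:boundonsigmasq}, showing that each of the two resulting factors is $O(N^{-1/2})$ with the correct $Z_\omega$- and $\varepsilon$-dependence, \emph{uniformly} over $\omega\in\mathcal{W}$. Throughout, all expectations are conditioned on $\mathcal{A}(\omega)$, so that $\hat{F}_\omega$ is invertible. Writing $\hat{\nabla}_\omega l(\omega) = \eta(\omega) - \hat{F}_\omega^{-1}\hat{z}_\omega$ and using that $\hat{F}_\omega$ and $\hat{z}_\omega$ are built from independent sample sets (so that $\bbE[\hat{F}_\omega^{-1}\hat{z}_\omega] = \bbE[\hat{F}_\omega^{-1}]z_\omega$), one has the decomposition $\bbE[\hat{\nabla}_\omega l(\omega)] - \hat{\nabla}_\omega l(\omega) = (\hat{F}_\omega^{-1} - \bbE[\hat{F}_\omega^{-1}])z_\omega + \hat{F}_\omega^{-1}(\hat{z}_\omega - z_\omega)$.

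\textbf{First factor.} A triangle inequality splits $\bbE\norm{\bbE[\hat{\nabla}_\omega l] - \hat{\nabla}_\omega l}^2$ into a term governed by $\norm{z_\omega}^2\,\bbE\norm{\hat{F}_\omega^{-1} - \bbE[\hat{F}_\omega^{-1}]}^2$ and one governed by $\bbE\norm{\hat{F}_\omega^{-1}}^2\,\bbE\norm{\hat{z}_\omega - z_\omega}^2$. For the former I would use $\norm{z_\omega}^2 \le m\mu_4^2 m_2^2$ (Cauchy--Schwarz with Assumptions~\ref{ass:fourth_order_moment} and~\ref{ass:moment_two_f}) together with the \emph{second-moment} counterpart of the arguments in Lemmas~\ref{lemma:inverse_g_compactA} and~\ref{lemma:polynomial}, namely $N\,\bbE[\norm{\hat{F}_\omega^{-1} - F_\omega^{-1}}^2\mid\mathcal{A}(\omega)] = O(1)$, obtained by re-running the Neumann-series expansion and conditional CLT of those lemmas and keeping the constant $\omega$-free through the uniform fourth-moment bounds. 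For the latter I would bound $\bbE\norm{\hat{F}_\omega^{-1}}^2 \le 2\bigl(\bbE\norm{\hat{F}_\omega^{-1} - F_\omega^{-1}}^2 + r^{-2}\bigr)$ (Assumption~\ref{ass:boundedfromzero}) and invoke the CLT for $\hat{z}_\omega$ to get $\bbE\norm{\hat{z}_\omega - z_\omega}^2 = O(N^{-1})$. This yields a constant $C''$, independent of $\omega$, with $N\,\bbE\norm{\bbE[\hat{\nabla}_\omega l] - \hat{\nabla}_\omega l}^2 \le C''$.

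\textbf{Second factor and conclusion.} By Proposition~\ref{prop:gradient_descent} (in the stochastic form of Lemma~\ref{lemma:equivalence}) one has $M(\omega) = \omega(\eta - \varepsilon\nabla_\omega l(\omega))$ and $\hat{M}(\omega) = \omega(\eta - \varepsilon\hat{\nabla}_\omega l(\omega))$ with $\eta = \eta(\omega)$. A first-order Taylor expansion of the moment map $\omega(\cdot)$ around $\eta$ with Lagrange remainder, using $\nabla_\eta\omega = Z_\eta F_\eta$ from~\eqref{eq:nablaetaomega} and $D^2_\eta\omega^{(i)} = \int s_i s s^\top q_\eta$, gives $M(\omega) - \hat{M}(\omega) = \varepsilon Z_\omega F_\omega(\hat{\nabla}_\omega l - \nabla_\omega l) + \varepsilon^2\hat{R}$, where $\norm{\hat{R}/Z_\omega}$ is uniformly bounded by a mixed fourth-moment quantity (so $\varepsilon^2\hat{R} = Z_\omega\,o(\varepsilon)$ uniformly in $\omega$). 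Combining $\norm{F_\omega} \le \norm{F_\omega}_F \le m\mu_4^2$ (Jensen and Cauchy--Schwarz) with $N\,\bbE\norm{\hat{\nabla}_\omega l - \nabla_\omega l}^2 = O(1)$ (its bias is $O(N^{-1})$ by Lemma~\ref{lemma:polynomial}, its fluctuation controlled by the first factor) yields $\sqrt{N}\,(\bbE\norm{M(\omega) - \hat{M}(\omega)}^2)^{1/2} \le \varepsilon Z_\omega(C^{(4)} + o(1))$ with $C^{(4)}$ independent of $\omega$. Multiplying the two displayed bounds via Cauchy--Schwarz in~\eqref{eq:boundonsigmasq} then gives $N\sigma^2(\omega) \le \varepsilon^{-1}\sqrt{C''}\,\varepsilon Z_\omega(C^{(4)} + o(1)) = Z_\omega(C + o_\varepsilon(1))$, which is the claim; the $o$-term is independent of $\omega$ and $N$ because every intermediate constant is.

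\textbf{Main obstacle.} The delicate step is upgrading the \emph{mean} control of $\hat{F}_\omega^{-1} - F_\omega^{-1}$ from Lemmas~\ref{lemma:inverse_g_compactA}--\ref{lemma:polynomial} to \emph{second-moment} control, $N\,\bbE[\norm{\hat{F}_\omega^{-1} - F_\omega^{-1}}^2\mid\mathcal{A}(\omega)] = O(1)$, while keeping the constant uniform over the generally non-compact set $\mathcal{W}$: one must carry the conditioning event $\mathcal{A}(\omega)$ through the Neumann/conditional-CLT argument and verify that the uniform moment constants $\mu_4,\nu$ are precisely what render the leading constant $\omega$-free. A second, milder point is the uniform bound on the Taylor remainder $\hat{R}$, which needs uniform control of the mixed fourth moments $\bbE_\omega[s_i s_j s_k s_l]$, again supplied by Assumption~\ref{ass:fourth_order_moment}.
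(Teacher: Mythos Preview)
Your proposal is correct and follows essentially the same approach as the paper's own proof: Cauchy--Schwarz on the scalar product in~\eqref{eq:boundonsigmasq}, the same decomposition $(\hat{F}_\omega^{-1}-\bbE[\hat{F}_\omega^{-1}])z_\omega + \hat{F}_\omega^{-1}(\hat{z}_\omega - z_\omega)$ for the first factor (with the same bounds $\norm{z_\omega}^2\le m\mu_4^2m_2^2$, $\bbE\norm{\hat{F}_\omega^{-1}}^2\le 2(r^{-2}+C/N)$, CLT for $\hat{z}_\omega$), and the same Taylor expansion of $\omega(\cdot)$ with $\nabla_\eta\omega = Z_\eta F_\eta$ and $\norm{F_\omega}\le m\mu_4^2$ for the second factor. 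The ``main obstacle'' you flag---upgrading the mean control on $\hat{F}_\omega^{-1}-F_\omega^{-1}$ to second-moment control uniformly in $\omega$---is indeed the step the paper handles only by pointer (``by the computations done in the proof of Lemmas~\ref{lemma:inverse_g_compactA} and~\ref{lemma:polynomial}''), so your identification of it as the delicate point is apt.
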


    With previous Lemmas~\ref{lemma:equivalence},~\ref{lemma:technicallemma},~\ref{lemma:controlbias}, and~\ref{lemma:bound_variance} in hand, we can prove the main result.
    \begin{proof}[Proof of Theorem~\ref{th:main_theorem}]

        Define $\bar{\omega}_k$ as given in the theorem.
        By convexity of $l$,
        \begin{equation}
            \label{eq:convexityl}
            l(\bar{\omega}_k)-l(\omega^*)\leq \frac{1}{C_k}\sum_{t=1}^{k}c_{t-1} (l(\hat{\omega}_{t})-l(\omega^*)).
        \end{equation}
        Combining Lemma~\ref{lemma:technicallemma} with Lemma~\ref{lemma:controlbias} to control the bias terms, we find that the expectation of the RHS in~\eqref{eq:convexityl} is upper bounded by
        \begin{equation}
            \label{eq:biginequality}
            \begin{split}
                \bbE[l(\bar{\omega}_k)-l(\omega^*)\mid \mathcal{A}_k] &\leq \frac{(\varepsilon_0^{-1}-\mu)\uKL(q_{\omega^*}\mid q_{\omega_0})}{C_k} \\
                &\quad +\sigma^2\sum_{t=0}^{k-1} \frac{c_t\varepsilon_t}{C_k} + \OO(N^{-1})\times S_{k,N},
            \end{split}
        \end{equation}
        where $S_{k,N}\coloneq \frac{ m_2 m^{1/2}\mu_4}{C_k}\sum_{t=0}^{k-1}c_{t} (\bbE[\norm{\omega^*-\hat{\omega}_{t+1}}^2\mid \mathcal{A}_k])^{1/2}$, where the big-$\OO$ term is independent of $k$ since it is independent of $\omega_0=\hat{\omega}_0, \hat{\omega}_1, \ldots \hat{\omega}_k$,
        and $\sigma^2$ some upper bound of $\sup_{k\geq 1}\sigma^2(k)$ with $\sigma^2(k)$ satisfying the assumption of Lemma~\ref{lemma:technicallemma}, for all $t\leq k$:
        \begin{equation}
            \bbE\left[(\bbE[\hat{\nabla}_\omega l(\hat{\omega}_t)\mid \hat{\omega}_t, \mathcal{A}_k])^{\top}(\omega_{t+1,*}-\hat{\omega}_{t+1})\middle| \hat{\omega}_t\right] \leq \sigma^2(k) \varepsilon_t.
        \end{equation}
        By Lemma~\ref{lemma:bound_variance}, we can set
        \begin{equation}
            \sigma^2(k) = N^{-1}C \max_{0\leq t\leq k-1} Z_{\hat{\omega}_t} ,
        \end{equation}
        for some constant $C$ independent on $N$ and the sequence $\hat{\omega}_0, \ldots, \hat{\omega}_{k-1}$.

        Let us tackle the terms which depend both upon $N$ and $k$ via the sequence $\hat{\omega}_0, \hat{\omega}_1, \ldots, \hat{\omega}_{k}$.
        By the law of large numbers, as $N\to\infty$, $\hat{F}_{\omega_{0}}\to F_{\omega_0}$ and $\hat{z}_{\omega_0} \to z_{\omega_0}$ almost surely.
        Then, by the continuous mapping theorem, $\hat{\nabla}_{\omega}l(\omega_0)\to \nabla_{\omega}l(\omega_0)$ almost surely, and thus $\omega_1\to \omega_{1, *} = \omega_{1}^{*}$ almost surely, where $\omega_{1}^{*}$ is the first mirror-descent iterate.
        By induction, we obtain that for any $k\geq 1$, $\hat{\omega}_t\to \omega_{t}^{*}$ a-s for all $t\in [1, k]$, i.e., the finite sequence $\{\omega_0, \dots, \hat{\omega}_t\}$ converges to the exact mirror-descent sequence $\{\omega_0, \omega_{1}^{*}, \dots, \omega_{t}^{*}\}$.
        We deduce that, almost surely, for all $t\geq 1$, $\norm{\omega^*-\hat{\omega}_{t}}^2 \to \norm{\omega^*-\omega_{t}^{*}}^2$ since the countable intersection of almost sure events is an almost sure event.
        By~\citet[Th. 4][]{aubin2024}, we know that the Mirror-Descent sequence $l(\omega_{t}^{*})$ converges to $l(\omega^*)$.
        Since $l$ is strongly-convex, $l(\omega_{t}^{*})\to l(\omega^*)$ implies that $\norm{\omega^*-\omega_{t}^{*}}\to 0$ as $t$ goes to $\infty$.
        Combining with the previous almost-sure convergence, we obtain that for any $k\geq 1$, the following equality holds almost-surely,
        \begin{align}%
            \lim_{N\to\infty}\max_{1\leq t\leq k}\norm{\omega^*-\hat{\omega}_{t}}^2= \max_{1\leq t\leq k}\norm{\omega^*-\omega_{t}^{*}}^2 \coloneq D_k,
        \end{align}
        with $\sup_{k\geq 1} D_k < \infty$.
        For $k\geq 1$, let $U_k\subset \mathcal{W}$ be the closed-ball of center $\omega^*$ and of radius $2\times D_k$, let $U_0 =\{\omega_0\}$, and let $U$ be the reunion of $U_{0}$ and the ball centered at $\omega^*$ with radius $\sup_{k\geq 1} D_k <\infty$.
        Almost-surely, when $N\to \infty$, for any $k\geq 1$, $\hat{\omega}_k\in U_k$, and therefore
        \begin{equation}
            \label{eq:hatwbelongstoU}
            \omega_0, \hat{\omega}_1, \hat{\omega}_2,\ldots, \hat{\omega}_k \in \cup_{k\geq 1} U_k \subset U.
        \end{equation}
        Since $\omega \mapsto Z_{\omega}$ is continuous and $U$ is compact, we have $\sup_{\omega\in U} Z_w<\infty$, thus as $N\to\infty$, almost-surely,
        \begin{equation}
            \label{eq:limacontroler2}
            \sup_{k\geq 1} \sigma^2(k) \leq N^{-1} C \sup_{\omega\in U} Z_{\omega}\coloneq \sigma^2<  \infty.
        \end{equation}
        Almost-surely, when $N\to\infty$, for any $t\geq 0$, $\norm{\omega^* - \hat{\omega}_t}^2\leq 2D_k$, which implies that $\sup_{0\leq t \leq k-1}\bbE[\norm{\omega^*-\hat{\omega}_{t+1}}^2\mid \mathcal{A}_k]^{1/2} < (2\sup_{k\geq 1} D_k)^{1/2} < \infty$.
        Using $\sum_{t=0}^{k-1}c_t = C_k$, and bounding uniformly the summands of $S_{k,N}$ yields
        \begin{equation}
            \label{eq:limacontroler}
            \begin{split}
                S_{k,N} \leq m_2 m^{1/2}\mu_4  (2\sup_{k\geq 1} D_k)^{1/2}.
            \end{split}
        \end{equation}
        Finally, plugging~\eqref{eq:limacontroler2} and~\eqref{eq:limacontroler} into~\eqref{eq:biginequality} yields the uniform bound over $k$:
        \begin{equation}
            \label{eq:eq_theorem1}
            \bbE[l(\bar{\omega}_k)-l(\omega^*)\mid \mathcal{A}_k] \leq \frac{(\varepsilon_0^{-1}-\mu)\uKL(q_{\omega^*}\mid q_{\omega_0})}{C_k} + \OO(N^{-1})\sum_{t=0}^{k-1}\frac{c_{t}\varepsilon_t}{C_k} + \OO(N^{-1}).
        \end{equation}
        All the constants in the big-$\OO$ terms can be chosen independently on the sequence of $\hat{\omega}$.

        Using Proposition~\ref{lemma:polynomial} with $\delta/(k+1)$ and a union bound, we have $P(\cap_{t=0}^k\mathcal{A}(\hat{\omega}_t)) \geq 1-\delta$ for the chosen $N$.

        Finally, let us prove the explicit convergence rates for linearly increasing stepsizes $\varepsilon_t = (L+\alpha t)^{-1}$, $t\geq 0$.
        Similarly to~\citet[Lemma 4.8][]{Hanzely2021}, we distinguish three cases depending on $\alpha$ compared to $\mu$.
        If $\alpha < \mu$, then $C_k = \OO(k^{\mu/\alpha})$ and $\sum_{t=0}^{k-1} c_t\varepsilon_t = \OO(1)$ which yields $\OO(k^{-\mu/\alpha})+\OO(N^{-1})$ for the RHS of~\eqref{eq:eq_theorem}.
        If $\alpha=\mu$, then $C_k = \OO(k)$ and $\sum_{t=0}^{k-1}c_t\varepsilon_t = O(\log(k))$.
        If $\alpha<\mu$, then $C_k = \OO(k^{\mu/\alpha})$ and $\sum_{t=0}^{k-1}c_t \varepsilon_t=\OO(k^{\mu/\alpha-1})$.
    \end{proof}

    \newpage

    \section*{NeurIPS Paper Checklist}

    \begin{enumerate}

        \item {\bf Claims}
        \item[] Question: Do the main claims made in the abstract and introduction accurately reflect the paper's contributions and scope?
        \item[] Answer: \answerYes{} %
        \item[] Justification: The main contributions of the paper are as follows:
        (i) KL minimisation within exponential families can be performed via successive linear regressions under tempered variational approximations. This approach is equivalent to natural gradient descent (NGD) and mirror descent (MD) but avoids the need for explicit gradient-based procedures. This is detailed in Section~\ref{sec:lsvi} and Proposition~\ref{prop:gradient_descent}, with relevant prior work cited.
        (ii) In the Gaussian variational family, exact LSVI can be tailored to eliminate the need to invert the Fisher information matrix. The resulting procedures have computational complexity $\OO(d^3)$ in the full-covariance case and $\OO(d)$ in the mean-field case, as shown in Section~\ref{sec:gaussian}, specifically Theorems~\ref{th:full_cov_gaussian} and~\ref{th:mfg}.
        (iii) Under standard optimization assumptions, LSVI converges at explicit rates, established in Theorem~\ref{th:main_theorem} (Section~\ref{sec:stochastic}).
        (iv) Empirical results demonstrate that LSVI achieves performance comparable to state-of-the-art variational inference methods and remains effective for non-differentiable target densities, as shown in Section~\ref{sec:experiments} and Appendix~\ref{app:extra_numerics}.
        \item[] Guidelines:
        \begin{itemize}
            \item The answer NA means that the abstract and introduction do not include the claims made in the paper.
            \item The abstract and/or introduction should clearly state the claims made, including the contributions made in the paper and important assumptions and limitations. A No or NA answer to this question will not be perceived well by the reviewers.
            \item The claims made should match theoretical and experimental results, and reflect how much the results can be expected to generalize to other settings.
            \item It is fine to include aspirational goals as motivation as long as it is clear that these goals are not attained by the paper.
        \end{itemize}

        \item {\bf Limitations}
        \item[] Question: Does the paper discuss the limitations of the work performed by the authors?
        \item[] Answer: \answerYes{} %
        \item[] Justification: (i) The approach is currently restricted to exponential families. Extending LSVI to more general variational families, such as mixtures of exponential families, is a promising direction for future work.
        (ii) Convergence guarantees for the stochastic versions of LSVI, namely, LSVI-MF and LSVI-FC, are not yet established. (ii) A comprehensive theoretical analysis of LSVI's convergence when the target distribution $\pi$ is replaced by an unbiased estimator $\hat{\pi}$ (e.g., via subsampling) remains an open problem. (iii) A comprehensive study of the constants involved in the convergence rates, in particular with respect to the smallest singular value of the FIM $r$, the latent dimension $d$ and the dimension of the statistic $m$ is left for future work. We believe most of the proofs can be adapted but the analysis would be more involved. See Section~\ref{sec:limitations}.
        \item[] Guidelines:
        \begin{itemize}
            \item The answer NA means that the paper has no limitation while the answer No means that the paper has limitations, but those are not discussed in the paper.
            \item The authors are encouraged to create a separate "Limitations" section in their paper.
            \item The paper should point out any strong assumptions and how robust the results are to violations of these assumptions (e.g., independence assumptions, noiseless settings, model well-specification, asymptotic approximations only holding locally). The authors should reflect on how these assumptions might be violated in practice and what the implications would be.
            \item The authors should reflect on the scope of the claims made, e.g., if the approach was only tested on a few datasets or with a few runs. In general, empirical results often depend on implicit assumptions, which should be articulated.
            \item The authors should reflect on the factors that influence the performance of the approach. For example, a facial recognition algorithm may perform poorly when image resolution is low or images are taken in low lighting. Or a speech-to-text system might not be used reliably to provide closed captions for online lectures because it fails to handle technical jargon.
            \item The authors should discuss the computational efficiency of the proposed algorithms and how they scale with dataset size.
            \item If applicable, the authors should discuss possible limitations of their approach to address problems of privacy and fairness.
            \item While the authors might fear that complete honesty about limitations might be used by reviewers as grounds for rejection, a worse outcome might be that reviewers discover limitations that aren't acknowledged in the paper. The authors should use their best judgment and recognize that individual actions in favor of transparency play an important role in developing norms that preserve the integrity of the community. Reviewers will be specifically instructed to not penalize honesty concerning limitations.
        \end{itemize}

        \item {\bf Theory assumptions and proofs}
        \item[] Question: For each theoretical result, does the paper provide the full set of assumptions and a complete (and correct) proof?
        \item[] Answer: \answerYes{} %
        \item[] Justification: All assumptions are provided in the core manuscript, see~\cref{ass:minimality_and_regularity},~\cref{ass:convexsmooth},~\cref{ass:fourth_order_moment},~\cref{ass:boundedfromzero} and~\cref{ass:moment_two_f}. Furthermore, the assumptions are discussed in the core of the paper along with references mentioning existing similar assumptions in the VI literature. The proofs are deferred to the supplementary materials and are divided in several comprehensive steps, including lemmas in order to make the proof-reading procedure easier.
        \item[] Guidelines:
        \begin{itemize}
            \item The answer NA means that the paper does not include theoretical results.
            \item All the theorems, formulas, and proofs in the paper should be numbered and cross-referenced.
            \item All assumptions should be clearly stated or referenced in the statement of any theorems.
            \item The proofs can either appear in the main paper or the supplemental material, but if they appear in the supplemental material, the authors are encouraged to provide a short proof sketch to provide intuition.
            \item Inversely, any informal proof provided in the core of the paper should be complemented by formal proofs provided in appendix or supplemental material.
            \item Theorems and Lemmas that the proof relies upon should be properly referenced.
        \end{itemize}

        \item {\bf Experimental result reproducibility}
        \item[] Question: Does the paper fully disclose all the information needed to reproduce the main experimental results of the paper to the extent that it affects the main claims and/or conclusions of the paper (regardless of whether the code and data are provided or not)?
        \item[] Answer: \answerYes{} %
        \item[] Justification: Section~\ref{sec:experiments} and Appendix~\ref{app:extra_numerics} along with the pseudo-code Algorithms given in Section~\ref{sec:stochastic} are sufficient to reproduce all the experimental results. In particular, all input parameters are provided in~\cref{tab:schedules}.
        \item[] Guidelines:
        \begin{itemize}
            \item The answer NA means that the paper does not include experiments.
            \item If the paper includes experiments, a No answer to this question will not be perceived well by the reviewers: Making the paper reproducible is important, regardless of whether the code and data are provided or not.
            \item If the contribution is a dataset and/or model, the authors should describe the steps taken to make their results reproducible or verifiable.
            \item Depending on the contribution, reproducibility can be accomplished in various ways. For example, if the contribution is a novel architecture, describing the architecture fully might suffice, or if the contribution is a specific model and empirical evaluation, it may be necessary to either make it possible for others to replicate the model with the same dataset, or provide access to the model. In general. releasing code and data is often one good way to accomplish this, but reproducibility can also be provided via detailed instructions for how to replicate the results, access to a hosted model (e.g., in the case of a large language model), releasing of a model checkpoint, or other means that are appropriate to the research performed.
            \item While NeurIPS does not require releasing code, the conference does require all submissions to provide some reasonable avenue for reproducibility, which may depend on the nature of the contribution. For example
            \begin{enumerate}
                \item If the contribution is primarily a new algorithm, the paper should make it clear how to reproduce that algorithm.
                \item If the contribution is primarily a new model architecture, the paper should describe the architecture clearly and fully.
                \item If the contribution is a new model (e.g., a large language model), then there should either be a way to access this model for reproducing the results or a way to reproduce the model (e.g., with an open-source dataset or instructions for how to construct the dataset).
                \item We recognize that reproducibility may be tricky in some cases, in which case authors are welcome to describe the particular way they provide for reproducibility. In the case of closed-source models, it may be that access to the model is limited in some way (e.g., to registered users), but it should be possible for other researchers to have some path to reproducing or verifying the results.
            \end{enumerate}
        \end{itemize}

        \item {\bf Open access to data and code}
        \item[] Question: Does the paper provide open access to the data and code, with sufficient instructions to faithfully reproduce the main experimental results, as described in supplemental material?
        \item[] Answer: \answerYes{} %
        \item[] Justification: The paper provides a Python (JAX) package that includes all discussed Algorithms (LSVI~\cref{alg:lsvi_krasnoselskii}, MF-LSVI~\cref{alg:lsvi_krasnoselskii_mfg}, FC-LSVI~\cref{alg:lsvi_krasnoselskii_fullcov}, Variance control for the stepsizes~\cref{alg:recipe_backracking_variance_control}, NGD with details provided in~\cref{app:extra_numerics}) as well as scripts to reproduce all the listed experiments. The package is explicitly divided into two parts, \texttt{variational} contains the generic implementations while \texttt{experiments} contains three sub-folders for the three distinct variational problems (logistic regression, variable selection and Bayesian synthetic likelihood). Full-pipeline for the experiments is provided (download and pre-processing of the datasets, inference procedures and post-processing scripts).
        \item[] Guidelines:
        \begin{itemize}
            \item The answer NA means that paper does not include experiments requiring code.
            \item Please see the NeurIPS code and data submission guidelines (\url{https://nips.cc/public/guides/CodeSubmissionPolicy}) for more details.
            \item While we encourage the release of code and data, we understand that this might not be possible, so “No” is an acceptable answer. Papers cannot be rejected simply for not including code, unless this is central to the contribution (e.g., for a new open-source benchmark).
            \item The instructions should contain the exact command and environment needed to run to reproduce the results. See the NeurIPS code and data submission guidelines (\url{https://nips.cc/public/guides/CodeSubmissionPolicy}) for more details.
            \item The authors should provide instructions on data access and preparation, including how to access the raw data, preprocessed data, intermediate data, and generated data, etc.
            \item The authors should provide scripts to reproduce all experimental results for the new proposed method and baselines. If only a subset of experiments are reproducible, they should state which ones are omitted from the script and why.
            \item At submission time, to preserve anonymity, the authors should release anonymized versions (if applicable).
            \item Providing as much information as possible in supplemental material (appended to the paper) is recommended, but including URLs to data and code is permitted.
        \end{itemize}

        \item {\bf Experimental setting/details}
        \item[] Question: Does the paper specify all the training and test details (e.g., data splits, hyperparameters, how they were chosen, type of optimizer, etc.) necessary to understand the results?
        \item[] Answer: \answerYes{} %
        \item[] Justification: The paper provides all the necessary details to reproduce the experiments, including the hyperparameters (the number of samples $N$, the number of iterations $T$, the initialisation distributions and the schedules) which are given in~\cref{app:extra_numerics}. Different schedules have been considered to demonstrate robustness of the proposed methods while the number of samples is set to obtain reasonable numerical stability.
        \item[] Guidelines:
        \begin{itemize}
            \item The answer NA means that the paper does not include experiments.
            \item The experimental setting should be presented in the core of the paper to a level of detail that is necessary to appreciate the results and make sense of them.
            \item The full details can be provided either with the code, in appendix, or as supplemental material.
        \end{itemize}

        \item {\bf Experiment statistical significance}
        \item[] Question: Does the paper report error bars suitably and correctly defined or other appropriate information about the statistical significance of the experiments?
        \item[] Answer: \answerYes{} %
        \item[] Justification: All experiments were conducted using multiple trials as indicated in the figure labels and the Appendix~\ref{app:extra_numerics}. For the logistic regression examples, one standard-deviation confidence intervals are provided over $100$ independent realisations. For the variable selection problem, the means and the min-max intervals for the posterior marginal probabilities obtained via LSVI over $100$ independent realisations. No statistical assumption is made for uncertainty measurement.
        \item[] Guidelines:
        \begin{itemize}
            \item The answer NA means that the paper does not include experiments.
            \item The authors should answer "Yes" if the results are accompanied by error bars, confidence intervals, or statistical significance tests, at least for the experiments that support the main claims of the paper.
            \item The factors of variability that the error bars are capturing should be clearly stated (for example, train/test split, initialization, random drawing of some parameter, or overall run with given experimental conditions).
            \item The method for calculating the error bars should be explained (closed-form formula, call to a library function, bootstrap, etc.)
            \item The assumptions made should be given (e.g., Normally distributed errors).
            \item It should be clear whether the error bar is the standard deviation or the standard error of the mean.
            \item It is OK to report 1-sigma error bars, but one should state it. The authors should preferably report a 2-sigma error bar than state that they have a 96\% CI, if the hypothesis of Normality of errors is not verified.
            \item For asymmetric distributions, the authors should be careful not to show in tables or figures symmetric error bars that would yield results that are out of range (e.g. negative error rates).
            \item If error bars are reported in tables or plots, The authors should explain in the text how they were calculated and reference the corresponding figures or tables in the text.
        \end{itemize}

        \item {\bf Experiments compute resources}
        \item[] Question: For each experiment, does the paper provide sufficient information on the computer resources (type of compute workers, memory, time of execution) needed to reproduce the experiments?
        \item[] Answer: \answerYes{} %
        \item[] Justification: The full hardware and software specifications are provided in Appendix~\ref{app:extra_numerics}, specifically in Table~\ref{table:perf}, along with Figures~\ref{fig:sonar} and~\ref{fig:mnist_runtime}, which report experiments runtime and memory usage. All performance statistics are computed using independent realisations for improved robustness. In addition, scripts for measuring the runtime and memory usage of the algorithms can be found in the package (\texttt{/experiments/\{...\}/time.py}). All experiments were successfully performed and are reported in Section~\ref{sec:experiments} and Appendix~\ref{app:extra_numerics}. There is the exception of the applicability of ADVI (PyMC3,~\citep[][]{Salvatier2016}) on the MNIST dataset, which is explicitly stated in Appendix~\ref{app:extra_numerics}. Instead, ADVI as provided by Blackjax~\citep{cabezas2024blackjaxcomposablebayesianinference} was used as a replacement to PyMC3.
        \item[] Guidelines:
        \begin{itemize}
            \item The answer NA means that the paper does not include experiments.
            \item The paper should indicate the type of compute workers CPU or GPU, internal cluster, or cloud provider, including relevant memory and storage.
            \item The paper should provide the amount of compute required for each of the individual experimental runs as well as estimate the total compute.
            \item The paper should disclose whether the full research project required more compute than the experiments reported in the paper (e.g., preliminary or failed experiments that didn't make it into the paper).
        \end{itemize}

        \item {\bf Code of ethics}
        \item[] Question: Does the research conducted in the paper conform, in every respect, with the NeurIPS Code of Ethics \url{https://neurips.cc/public/EthicsGuidelines}?
        \item[] Answer: \answerYes{} %
        \item[] Justification: We carefully read through the NeurIPS Code of Ethics, and we see no violation of any guideline.
        \item[] Guidelines:
        \begin{itemize}
            \item The answer NA means that the authors have not reviewed the NeurIPS Code of Ethics.
            \item If the authors answer No, they should explain the special circumstances that require a deviation from the Code of Ethics.
            \item The authors should make sure to preserve anonymity (e.g., if there is a special consideration due to laws or regulations in their jurisdiction).
        \end{itemize}

        \item {\bf Broader impacts}
        \item[] Question: Does the paper discuss both potential positive societal impacts and negative societal impacts of the work performed?
        \item[] Answer: \answerNA{} %
        \item[] Justification: The paper focuses on VI methods, emphasizing theoretical analysis and algorithmic implementability. As such, the work is foundational in nature and does not directly pertain to real-world applications or deployments. Given its abstract and theoretical scope, it does not present identifiable positive or negative societal impacts, including concerns related to fairness, privacy, or misuse.
        \item[] Guidelines:
        \begin{itemize}
            \item The answer NA means that there is no societal impact of the work performed.
            \item If the authors answer NA or No, they should explain why their work has no societal impact or why the paper does not address societal impact.
            \item Examples of negative societal impacts include potential malicious or unintended uses (e.g., disinformation, generating fake profiles, surveillance), fairness considerations (e.g., deployment of technologies that could make decisions that unfairly impact specific groups), privacy considerations, and security considerations.
            \item The conference expects that many papers will be foundational research and not tied to particular applications, let alone deployments. However, if there is a direct path to any negative applications, the authors should point it out. For example, it is legitimate to point out that an improvement in the quality of generative models could be used to generate deepfakes for disinformation. On the other hand, it is not needed to point out that a generic algorithm for optimizing neural networks could enable people to train models that generate Deepfakes faster.
            \item The authors should consider possible harms that could arise when the technology is being used as intended and functioning correctly, harms that could arise when the technology is being used as intended but gives incorrect results, and harms following from (intentional or unintentional) misuse of the technology.
            \item If there are negative societal impacts, the authors could also discuss possible mitigation strategies (e.g., gated release of models, providing defenses in addition to attacks, mechanisms for monitoring misuse, mechanisms to monitor how a system learns from feedback over time, improving the efficiency and accessibility of ML).
        \end{itemize}

        \item {\bf Safeguards}
        \item[] Question: Does the paper describe safeguards that have been put in place for responsible release of data or models that have a high risk for misuse (e.g., pretrained language models, image generators, or scraped datasets)?
        \item[] Answer: \answerNA{} %
        \item[] Justification: We see no risk in the application of variational inference procedures.
        \item[] Guidelines:
        \begin{itemize}
            \item The answer NA means that the paper poses no such risks.
            \item Released models that have a high risk for misuse or dual-use should be released with necessary safeguards to allow for controlled use of the model, for example by requiring that users adhere to usage guidelines or restrictions to access the model or implementing safety filters.
            \item Datasets that have been scraped from the Internet could pose safety risks. The authors should describe how they avoided releasing unsafe images.
            \item We recognize that providing effective safeguards is challenging, and many papers do not require this, but we encourage authors to take this into account and make a best faith effort.
        \end{itemize}

        \item {\bf Licenses for existing assets}
        \item[] Question: Are the creators or original owners of assets (e.g., code, data, models), used in the paper, properly credited and are the license and terms of use explicitly mentioned and properly respected?
        \item[] Answer: \answerYes{} %
        \item[] Justification: All used Python packages are open-source, have permissive licenses, and are explicitly mentioned both in the manuscript and the code (\texttt{pyproject.toml} with complete dependency specifications). Specifically, Blackjax, PyMC, and JAX are mentioned in Section~\ref{sec:Motivation} and Section~\ref{sec:experiments}. Details on the datasets used, licenses and download links, are provided in Appendix~\ref{app:extra_numerics}.
        \item[] Guidelines:
        \begin{itemize}
            \item The answer NA means that the paper does not use existing assets.
            \item The authors should cite the original paper that produced the code package or dataset.
            \item The authors should state which version of the asset is used and, if possible, include a URL.
            \item The name of the license (e.g., CC-BY 4.0) should be included for each asset.
            \item For scraped data from a particular source (e.g., website), the copyright and terms of service of that source should be provided.
            \item If assets are released, the license, copyright information, and terms of use in the package should be provided. For popular datasets, \url{paperswithcode.com/datasets} has curated licenses for some datasets. Their licensing guide can help determine the license of a dataset.
            \item For existing datasets that are re-packaged, both the original license and the license of the derived asset (if it has changed) should be provided.
            \item If this information is not available online, the authors are encouraged to reach out to the asset's creators.
        \end{itemize}

        \item {\bf New assets}
        \item[] Question: Are new assets introduced in the paper well documented and is the documentation provided alongside the assets?
        \item[] Answer: \answerYes{} %
        \item[] Justification: The provided Python (JAX) package for LSVI is well documented and includes a README file with instructions for installation and usage. The license is also included in the package (Apache License 2.0). In addition, we provide usage examples and accompanying commentaries.
        \item[] Guidelines:
        \begin{itemize}
            \item The answer NA means that the paper does not release new assets.
            \item Researchers should communicate the details of the dataset/code/model as part of their submissions via structured templates. This includes details about training, license, limitations, etc.
            \item The paper should discuss whether and how consent was obtained from people whose asset is used.
            \item At submission time, remember to anonymize your assets (if applicable). You can either create an anonymized URL or include an anonymized zip file.
        \end{itemize}

        \item {\bf Crowdsourcing and research with human subjects}
        \item[] Question: For crowdsourcing experiments and research with human subjects, does the paper include the full text of instructions given to participants and screenshots, if applicable, as well as details about compensation (if any)?
        \item[] Answer: \answerNA{} %
        \item[] Justification: No experiment involving human subjects were conducted.
        \item[] Guidelines:
        \begin{itemize}
            \item The answer NA means that the paper does not involve crowdsourcing nor research with human subjects.
            \item Including this information in the supplemental material is fine, but if the main contribution of the paper involves human subjects, then as much detail as possible should be included in the main paper.
            \item According to the NeurIPS Code of Ethics, workers involved in data collection, curation, or other labor should be paid at least the minimum wage in the country of the data collector.
        \end{itemize}

        \item {\bf Institutional review board (IRB) approvals or equivalent for research with human subjects}
        \item[] Question: Does the paper describe potential risks incurred by study participants, whether such risks were disclosed to the subjects, and whether Institutional Review Board (IRB) approvals (or an equivalent approval/review based on the requirements of your country or institution) were obtained?
        \item[] Answer: \answerNA{} %
        \item[] Justification: No experiment involving human subjects were conducted.
        \item[] Guidelines:
        \begin{itemize}
            \item The answer NA means that the paper does not involve crowdsourcing nor research with human subjects.
            \item Depending on the country in which research is conducted, IRB approval (or equivalent) may be required for any human subjects research. If you obtained IRB approval, you should clearly state this in the paper.
            \item We recognize that the procedures for this may vary significantly between institutions and locations, and we expect authors to adhere to the NeurIPS Code of Ethics and the guidelines for their institution.
            \item For initial submissions, do not include any information that would break anonymity (if applicable), such as the institution conducting the review.
        \end{itemize}

        \item {\bf Declaration of LLM usage}
        \item[] Question: Does the paper describe the usage of LLMs if it is an important, original, or non-standard component of the core methods in this research? Note that if the LLM is used only for writing, editing, or formatting purposes and does not impact the core methodology, scientific rigorousness, or originality of the research, declaration is not required.
        \item[] Answer: \answerNA{} %
        \item[] Justification: There is no mention of LLMs in the manuscript, and no LLM was used.
        \item[] Guidelines:
        \begin{itemize}
            \item The answer NA means that the core method development in this research does not involve LLMs as any important, original, or non-standard components.
            \item Please refer to our LLM policy (\url{https://neurips.cc/Conferences/2025/LLM}) for what should or should not be described.
        \end{itemize}

    \end{enumerate}

\end{document}